\definecolor{darkgray}{rgb}{0.31,0.31,0.31}
    \definecolor[named]{lipicsGray}{rgb}{0.31,0.31,0.31}
    \definecolor[named]{lipicsBulletGray}{rgb}{0.60,0.60,0.60}
    \definecolor[named]{lipicsLineGray}{rgb}{0.50,0.50,0.50}
    \definecolor[named]{lipicsLightGray}{rgb}{0.85,0.85,0.85}
    \definecolor[named]{lipicsYellow}{rgb}{0.85,0.85,0.85}
\tikzset{vertex/.style={circle, fill, inner sep=1.5pt, outer sep=1.9pt}}
\tikzset{cvertex/.style={circle, fill, inner sep=1pt, outer sep=1.3pt, lipicsGray!60}}
\tikzset{ccvertex/.style={circle, fill, inner sep=1.3pt, outer sep=1.7pt, lipicsGray!80}}
\tikzset{svertex/.style={circle,draw=white, line width=1.2pt,fill=black, inner sep=1.8pt, outer         sep=0pt}}
\tikzset{edge/.style={very thick}}
\tikzset{ccedge/.style={lipicsGray!50}}
\tikzset{cedge/.style={lipicsGray!75}}
\tikzset{dedge/.style={gray,thick, double=white, double distance=9pt}}
\tikzset{sdedge/.style={white,thick, double=black, double distance=1.2pt}}
\newtheorem{theorem}{Theorem}
\newtheorem{obs}[theorem]{Observation}
\newtheorem{lemma}[theorem]{Lemma}
\newtheorem{fact}[theorem]{Fact}
\newtheorem{clm}{Claim}[theorem]
\theoremstyle{definition}
\newtheorem{defn}[theorem]{Definition}
\newtheorem{conjecture}[theorem]{Conjecture}
\newenvironment{proofclaim}[1][\unskip]{\noindent {\emph{Proof of Claim #1.\space}}}{\hfill$\triangleleft$\smallskip}
\newcommand{\id}{\mathsf{id}}
\newcommand{\W}[1]{\ensuremath{\mathsf{W[#1]}}}
\newcommand{\timesc}{\ensuremath{\times}}
\newcommand{\timescc}{\ensuremath{\times}}
\newcommand{\down}{\ensuremath{\!\!\downarrow}}
\newcommand{\homs}[2]{\mbox{\ensuremath{\mathsf{Hom}(#1 \to #2)}}}
\newcommand{\indsubs}[2]{\mbox{\ensuremath{\mathsf{IndSub}(#1 \to #2)}}}
\newcommand{\bipindsubs}[2]{\mbox{\ensuremath{\mathsf{BipIndSub}(#1 \to #2)}}}
\newcommand{\auts}[1]{\ensuremath{\mathsf{Aut}(#1)}}
\newcommand{\cphoms}[3]{\ensuremath{\mathsf{cp}\text{-}\mathsf{Hom}}(#1 \to_#2 #3)}
\newcommand{\cpindsubs}[3]{\ensuremath{\mathsf{cp}\text{-}\mathsf{IndSub}}(#1 \to_#2 #3)}
\newcommand{\cpbipindsubs}[3]{\ensuremath{\mathsf{cp}\text{-}\mathsf{BipIndSub}}(#1 \to_#2 #3)}
\newcommand{\clique}{\ensuremath{\textsc{Clique}}}
\newcommand{\cphomsprob}{\ensuremath{\textsc{cp-Hom}}}
\newcommand{\indsubsprob}{\ensuremath{\textsc{IndSub}}}
\newcommand{\bipindsubsprob}{\ensuremath{\textsc{BipIndSub}}}
\newcommand{\cpbipindsubsprob}{\ensuremath{\textsc{cp-BipIndSub}}}
\newcommand{\meagre}{meagre\xspace}
\newcommand{\implant}{implant\xspace}
\newcommand{\orb}{\ensuremath{\mathsf{Orb}}}
\newcommand{\stab}{\ensuremath{\mathsf{Stab}}}
\newcommand{\bG}{{\boldsymbol{\mathsf{G}}}}
\newcommand{\bH}{{\boldsymbol{\mathsf{H}}}}
\newcommand{\bB}{{\boldsymbol{\mathsf{B}}}}
\newcommand{\bI}{{\boldsymbol{\mathsf{I}}}}
\newcommand{\obstr}{\Pi}
\newcommand{\problem}[3] {\textbf{Input:} #1.\\
\textbf{Parameter:} #2.\\
\textbf{Output:} #3.}
\newcommand{\fptred}{\ensuremath{\leq^{\mathrm{fpt}}_{\mathrm{T}}}}
\newcommand{\tightred}{\ensuremath{\leq^{\mathrm{tfpt}}_{\mathrm{T}}}}
\title{Counting Small Induced Subgraphs with Hereditary Properties}
\author{Jacob Focke\thanks{Research supported by the European Research Council (ERC) consolidator grant No.~725978 SYSTEMATICGRAPH.} \\ CISPA \\ Helmholtz Center for Information Security   \\\texttt{jacob.focke@cispa.de} \and  Marc Roth\\
Department of Computer Science \\ University of Oxford \\ \texttt{marc.roth@cs.ox.ac.uk}}
\date{\today}
\begin{document}

\maketitle
\thispagestyle{empty}

\begin{abstract}
    We study the computational complexity of the problem $\#\textsc{IndSub}(\Phi)$ of counting $k$-vertex induced subgraphs of a graph $G$ that satisfy a graph property $\Phi$. 
    Our main result establishes an exhaustive and explicit classification for all hereditary properties, including tight conditional lower bounds under the Exponential Time Hypothesis (ETH):
    \begin{itemize}
        \item If a hereditary property $\Phi$ is true for all graphs, or if it is true only for finitely many graphs, then $\#\textsc{IndSub}(\Phi)$ is solvable in polynomial time.
        \item Otherwise, $\#\textsc{IndSub}(\Phi)$ is $\#\mathsf{W[1]}$-complete when parameterised by $k$, and, assuming ETH, it cannot be solved in time $f(k)\cdot |G|^{o(k)}$ for any function $f$.
    \end{itemize}
    This classification features a wide range of properties for which the corresponding detection problem (as classified by Khot and Raman [TCS 02]) is tractable but counting is hard. Moreover, even for properties which are already intractable in their decision version, our results yield significantly stronger lower bounds for the counting problem.
   
    As additional result, we also present an exhaustive and explicit parameterised complexity classification for all properties that are invariant under homomorphic equivalence.

    By covering one of the most natural and general notions of closure, namely, closure under vertex-deletion (hereditary), we generalise some of the earlier results on this problem. For instance, our results fully subsume and strengthen the existing classification of $\#\textsc{IndSub}(\Phi)$ for monotone (subgraph-closed) properties due to Roth, Schmitt, and Wellnitz [FOCS 20].
\end{abstract}

\clearpage

\tableofcontents
\thispagestyle{empty}
\newpage

\setcounter{page}{1}
\section{Introduction}\label{sec:intro}
Detection and counting of patterns in networks belong to the most well-studied problems in theoretical computer science and have applications in a diverse set of disciplines such as database theory~\cite{GroheSS01}, statistical physics~\cite{Kasteleyn61,TemperleyF61, Kasteleyn63}, and computational biology~\cite{SchreiberS05,GrochovK07}.
In this work, we focus on counting \emph{small} patterns in \emph{large} networks. Among others, this task is motivated by the computation of so-called significance profiles of network motifs which play a central role in the analysis of complex networks~\cite{Miloetal02, Miloetal04,Nogaetat08, Schilleretal15}. 

More formally, we consider the counting problem $\#\indsubsprob(\Phi)$ as introduced by Jerrum and Meeks~\cite{JerrumM15}.\footnote{In~\cite{JerrumM15}, the problem $\#\indsubsprob(\Phi)$ is called $\#\textsc{InducedUnlabelledSubgraphWithProperty}(\Phi)$.} Here, a graph property $\Phi$ is a function from the class of graphs to $\{0,1\}$ that is invariant under graph isomorphisms. If $\Phi(H)=1$ for a graph $H$, then $H$ \emph{satisfies} the property $\Phi$. For any fixed graph property~$\Phi$, the problem $\#\indsubsprob(\Phi)$ asks, on input a graph $G$ and a positive integer~$k$, to compute the number of $k$-vertex induced subgraphs $H$ in $G$ that satisfy $\Phi$. Observe that, for proper choices of $\Phi$, the problem $\#\indsubsprob(\Phi)$ encodes a variety of well-studied counting problems such as counting of $k$-cliques, $k$-independent sets, induced $k$-cycles, and, to name a more intricate example, $k$-graphlets, that is, connected $k$-vertex induced subgraphs. 

In recent years, the problem $\#\indsubsprob(\Phi)$ received significant attention~\cite{JerrumM15,JerrumM15density,Meeks16,JerrumM17,CurticapeanDM17,RothS18,DorflerRSW22,RothSW20}. All of the previous works had the common goal of classifying the \emph{parameterised} complexity of $\#\indsubsprob(\Phi)$ for a wide range of properties $\Phi$. More precisely, the task is to identify those properties $\Phi$ for which the problem becomes \emph{fixed-parameter tractable (FPT)}, i.e., solvable in time $f(k)\cdot |G|^{O(1)}$ for some computable function $f$. Note that a parameterised analysis of $\#\indsubsprob(\Phi)$ captures well the intuition that the size of the pattern~$k$ is significantly smaller than the size of the graph $G$, that is, we only aim for a running time which is polynomial in $|G|$ but may be super-polynomial in $k$.

Ideally, a complete classification of $\#\indsubsprob(\Phi)$ identifies not only the properties $\Phi$ for which the problem becomes FPT, but also establishes a hardness result for all remaining properties. A remarkable result due to Curticapean, Dell and Marx~\cite{CurticapeanDM17} shows that such a complete classification is possible: They prove that for every property $\Phi$, the problem $\#\indsubsprob(\Phi)$ is either fixed-parameter tractable or complete for the parameterised class $\#\W{1}$.\footnote{The class $\#\W{1}$ can be considered as a parameterised counting equivalent of $\mathrm{NP}$; a formal definition is provided in Section~\ref{sec:prelim}.} Unfortunately, their classification is implicit in the sense that, for most graph properties $\Phi$, it is not clear how to pinpoint the complexity of $\#\indsubsprob(\Phi)$. More precisely, even for simple and natural properties such as $\Phi(H)=1$ iff $H$ is bipartite, or $\Phi(H)=1$ iff $H$ is acyclic, the complexity of $\#\indsubsprob(\Phi)$ is not easily deducible from the aforementioned classification. The corresponding hardness proofs turned out to be a non-trivial task~\cite{RothS18}.
Subsequent work focused on finding \emph{explicit} criteria for tractability and hardness~\cite{RothS18,DorflerRSW22,RothSW20}. More details on the classification from~\cite{CurticapeanDM17} are given in Section~\ref{sec:techniques}.

The state of the art suggests that the only properties $\Phi$ for which $\#\indsubsprob(\Phi)$ is FPT are very restricted in the sense that they become ``eventually trivial''. More formally, we say that a property $\Phi$ is \emph{\meagre} if there exists a positive integer $B$, such that for each $k\geq B$ the property $\Phi$ is either constant false or constant true on the set of all $k$-vertex graphs. For example, the property $\Phi$ of having an even number of vertices is \meagre, and it is easy to see that $\#\indsubsprob(\Phi)$ is trivial to solve: On input $G$ and $k$, output $0$ if $k$ is odd, and output $\binom{|V(G)|}{k}$ if $k$ is even.
It is well-known that an analogue of the previous algorithm exists for every \meagre property; we make this formal in Section~\ref{sec:prelim}. 
Conversely, as stated in~\cite{RothSW20}, we conjecture that all non-\meagre properties yield hardness:
\begin{conjecture}\label{conj:main}
        Let $\Phi$ be a computable\footnote{We restrict ourselves to computable properties to avoid dealing with non-uniform fixed-parameter tractability.} graph property. If $\Phi$ is \meagre then $\#\indsubsprob(\Phi)$ is fixed-parameter tractable. Otherwise, $\#\indsubsprob(\Phi)$ is $\#\W{1}$-complete.
    \end{conjecture}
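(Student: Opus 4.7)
My plan is to attack Conjecture~\ref{conj:main} through the linear combination framework of Curticapean, Dell and Marx~\cite{CurticapeanDM17}, which rewrites, for every graph~$G$ and every~$k$, the quantity $\#\indsubsprob(\Phi)(G,k)$ as a finite linear combination $\sum_{H} a_\Phi(H) \cdot \#\mathsf{Hom}(H \to G)$, where $H$ ranges over isomorphism types of graphs with at most $k$ vertices and the coefficients $a_\Phi(H)$ are obtained from the values of~$\Phi$ on $k$-vertex graphs via a fixed inclusion--exclusion transformation (induced-subgraph counts into subgraph counts, then Möbius inversion on the partition lattice to pass to homomorphism counts). Combined with complexity monotonicity for $\#\homsprob$ and the Dalmau--Jonsson/Grohe classifications, this reduces the conjecture to a purely combinatorial statement about the coefficients: $\#\indsubsprob(\Phi)$ is fixed-parameter tractable iff the treewidths of the graphs $H$ with $a_\Phi(H) \neq 0$ remain bounded by some function of~$k$, and is $\#\W{1}$-complete otherwise. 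The easy half of the conjecture is then immediate: if $\Phi$ is \meagre{}, then for $k \geq B$ the property is constant on $k$-vertex graphs, so the problem collapses to outputting either~$0$ or $\binom{|V(G)|}{k}$.

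The hard half reduces to the following \emph{nonvanishing problem}: for every non-\meagre{} computable~$\Phi$, exhibit an infinite sequence $(k_i, H_i)$ with $|V(H_i)| \leq k_i$, treewidth of~$H_i$ tending to infinity, and $a_\Phi(H_i) \neq 0$. I would approach this in two stages. First, I would fix a canonical family of high-treewidth targets $W_k$ (cliques $K_k$, balanced bicliques $K_{k/2,k/2}$, or $\sqrt{k}\times\sqrt{k}$ grids) and aggregate the coefficients over all isomorphism classes that collapse to $W_k$ under the quotient step, turning the target into a single explicit alternating sum $S_\Phi(k)$ depending only on $\Phi$ restricted to $k$-vertex graphs. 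Second, I would try to deduce from non-\meagre{}ness that $S_\Phi(k) \neq 0$ for infinitely many~$k$, via a mixture of a \emph{counting route} (bounding stratum contributions and arguing that total cancellation forces eventual constancy) and an \emph{algebraic route} (expanding $\Phi$ in the basis dual to homomorphism counts and showing that $\Phi$ being not eventually constant translates into the relevant dual coefficient being nonzero on some treewidth-unbounded witness).

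The main obstacle is that a general computable property admits no structural handle. For hereditary $\Phi$, closure under vertex deletion enables an induction on $k$ and interacts cleanly with Ramsey-type arguments on forbidden induced subgraphs; for homomorphism-closed $\Phi$, the structure of the homomorphism basis can be exploited directly. An arbitrary computable $\Phi$ can instead behave adversarially across different~$k$, with its $\{0,1\}$-values possibly tailored to the inclusion--exclusion machinery itself, and I see no a priori reason why a given choice of target $W_k$ cannot be cancelled out. Overcoming this will likely require either a new invariant of~$\Phi$ that provably witnesses a nonzero coefficient on a treewidth-unbounded $H$, or a reduction showing that every non-\meagre{} computable $\Phi$ ``contains'' a non-\meagre{} hereditary or homomorphism-closed sub-property to which the classifications of this paper apply. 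I would first attempt the reduction, since it leverages the machinery developed here, and only turn to the direct nonvanishing argument if no such reduction is possible.
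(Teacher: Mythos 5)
The statement is labeled a \emph{conjecture} in the paper and remains open: the paper itself proves it only for hereditary properties (Theorem~\ref{thm:hereditary}) and for twin-invariant properties (Theorem~\ref{thm:twinequivalence}), and states in Section~\ref{sec:intro} that a general resolution is ``nowhere close.'' So there is no proof in the paper to compare your attempt against, and your proposal is, as you yourself say, a plan with an unresolved gap rather than a proof.

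What you get right: the framework you describe --- express $\#\indsubsprob(\Phi)$ in the homomorphism basis as in Equation~\eqref{eq:hom_basis}, invoke complexity monotonicity (Theorem~\ref{thm:intro_monotonicity}), and reduce the conjecture to showing that $t_\Phi$ is unbounded whenever $\Phi$ is not \meagre --- is precisely the paper's own formulation of the problem, and your handling of the \meagre case is Lemma~\ref{lem:meagre_easy}. Your reformulation as a ``nonvanishing problem'' for the coefficients $a_{\Phi,k}$ is the accepted reduction.

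The gap is genuine, and neither of your two proposed routes closes it. The algebraic route avoids cancellation only when $\Phi$ has a built-in invariance under a group action that is transitive on the edges of the target graph, so that, after passing to a $p$-Sylow subgroup, only two fixed points survive modulo a prime; the paper exploits exactly this for bipartite properties in Lemma~\ref{lem:bipartite_biclique_coeff}, using the \emph{consistent} automorphism group of $\bB_{\ell,\ell}$. A general computable $\Phi$ has no such invariance and can be defined adversarially against any fixed target family $W_k$, so nonvanishing cannot be forced from the assumption that $\Phi$ is not \meagre by this route. The counting route meets a comparable obstruction: the coefficient $a_{\Phi,k}(K_k)$ encodes the reduced Euler characteristic of a simplicial complex (see the footnote in Section~\ref{sec:techniques}), and showing it nonzero for all edge-monotone $\Phi$ that are not \meagre would already imply Karp's Evasiveness Conjecture. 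Finally, your fallback of extracting a hereditary or homomorphism-invariant ``sub-property'' and reducing to it does not, in general, yield a parameterised reduction between the corresponding $\#\indsubsprob$ problems; the Ackermann-style edge-monotone property discussed in Section~\ref{sec:conclusions} is an example for which no such extraction is apparent. You have correctly located where the difficulty lives --- which is exactly why this is still a conjecture --- but the proposal does not overcome it.
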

    
Despite significant effort, we are nowhere close to a resolution of Conjecture~\ref{conj:main} in its full generality. However, progress has been made for properties that satisfy certain closure criteria. For example, a result of Jerrum and Meeks~\cite{JerrumM15density} implies that  Conjecture~\ref{conj:main} is true for \emph{minor-closed} graph properties. After several partial results, Conjecture~\ref{conj:main} has recently also been established for the more general class of \emph{monotone} properties, that is, properties that are closed under the removal of vertices \emph{and} edges\footnote{To avoid confusion, we remark that in some literature (e.g.\ \cite{JerrumM15,Meeks16}) the term ``monotone'' is used for properties that are closed under the deletion (or addition) of edges only. The latter will be called \emph{edge-monotone} in this work.}~\cite{RothSW20}.

There are two natural generalisations of the class of monotone properties: 
\begin{enumerate}
    \item Properties that are closed under vertex-deletion, called \emph{hereditary} properties (these properties are closed under taking induced subgraphs).
    \item Properties that are closed under edge-deletion, called \emph{edge-monotone} properties.
\end{enumerate}

To make this concrete, let us give some simple examples that are covered by the different notions of closure.
Say $\Phi$ corresponds to the property of being ``planar''. Then $\Phi$ is closed under both vertex- and edge-deletion as well as under edge-contraction. Therefore, $\Phi$ is minor-closed and hence the complexity of $\#\indsubsprob(\Phi)$ is covered by~\cite{JerrumM15density}. The property of being ``bipartite'' is also closed under both vertex- and edge-deletion. However, edge-contractions can lead to non-bipartite graphs. So this is an example of a property that is monotone but not minor-closed, and the corresponding hardness result is from~\cite{RothSW20}. 
An example for a hereditary property that is not monotone is the property ``claw-free'', which refers to the absence of an induced claw. This property is closed under vertex-deletion, but not under edge-deletion.
Conversely, the property of being ``disconnected'' is closed under edge-deletion but not under vertex-deletion. Hence, it is an example for an edge-monotone property that is not monotone (which implies that it is also not hereditary).

So far, there are only partial results on resolving Conjecture~\ref{conj:main} for the hereditary and edge-monotone cases; see~\cite[Section 6]{RothSW20} for hereditary properties defined by a single forbidden induced subgraph (which includes the property ``claw-free''), and~\cite{Meeks16,RothS18} for some results on edge-monotone properties (including the property ``disconnected''). 
Many natural properties are covered by these partial results but a full classification has remained elusive.
In this work, we obtain a full classification for the first case, i.e., for all hereditary properties; our results are presented in Section~\ref{sec:ourResults}.
At this point let us give an example of a hereditary property that, as far as we are aware, has not been covered by previous work. Consider $\Phi$ with $\Phi(H)=1$ iff $H$ is ``hole-free'', which means that $H$ does not have an induced cycle of length at least $5$ (also known as a \emph{hole}).
First note, that $\Phi$ is closed under vertex-deletion, but not under edge-deletion. It is therefore hereditary but not monotone (or even minor-closed).
Since $\Phi$ is characterised by multiple forbidden induced subgraphs it is not covered by~\cite[Theorem 4]{RothSW20}. As $\Phi$ does not distinguish bicliques from independent sets it is not subject to~\cite[Theorem 2]{RothSW20}. Finally, $\Phi$ does not have low Hamming-weight $f$-vectors, which is another criterion introduced in~\cite{RothSW20}. (For this fact, it is relevant that triangles are not forbidden by $\Phi$.)
Similar hereditary properties that have not been covered by previous work are ``(odd-hole)-free'', ``(anti-hole)-free'', etc.
In Section~\ref{sec:conclusions}, we give an example of an unresolved edge-monotone property.

\subsection{Our Results}\label{sec:ourResults}
In addition to confirming Conjecture~\ref{conj:main} for hereditary properties, we also establish a tight conditional lower bound under the Exponential Time Hypothesis; it turns out that a hereditary property is \meagre if and only if either $\Phi$ is true for all graphs, or it is true only for finitely many graphs (for more details, see Observation~\ref{obs:hereditarynontriv}).

\begin{restatable}{theorem}{hereditary}\label{thm:hereditary}
Let $\Phi$ be a computable hereditary graph property. 
If $\Phi$ is \meagre, then $\#\indsubsprob(\Phi)$ is solvable in polynomial time. 
Otherwise $\#\indsubsprob(\Phi)$ is $\#\W{1}$-complete and, assuming the Exponential Time Hypothesis (ETH), cannot be solved in time $f(k)\cdot |G|^{o(k)}$ for any function $f$.
\end{restatable}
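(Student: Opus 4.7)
Plan:

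First, I would dispose of the polynomial-time case. By Observation~\ref{obs:hereditarynontriv}, a hereditary property is \meagre{} if and only if it is either identically true or true on only finitely many graphs. In the first subcase $\#\indsubsprob(\Phi)(G,k)=\binom{|V(G)|}{k}$; in the second subcase either $k$ exceeds a fixed threshold above which the answer is $0$, or $k$ is below the threshold and brute-force enumeration over $k$-subsets runs in polynomial time.

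The heart of the argument is the non-meagre case. Here I would apply the graph-motif-parameter framework of Curticapean--Dell--Marx to expand
\[
\#\indsubsprob(\Phi)(G,k) \;=\; \sum_{H} a_\Phi(H)\cdot \hom(H\to G),
\]
where $H$ ranges over isomorphism classes of simple graphs on at most $k$ vertices, and the coefficients $a_\Phi(H)$ arise from $\Phi$ by Möbius inversion (inclusion--exclusion over edge-additions to pass from induced-subgraph counts to subgraph counts, then the partition lattice to pass from subgraph counts to homomorphism counts). By complexity monotonicity, the parameterised and ETH-based hardness of the left-hand side is governed by the maximum treewidth appearing in the support $\{H:a_\Phi(H)\neq 0\}$. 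In particular, exhibiting $K_t$ in the support for an unbounded sequence of $t$ yields both $\#\W{1}$-completeness and the matching lower bound $|G|^{o(k)}$ under ETH, which is precisely what the theorem asks for.

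The core combinatorial task therefore reduces to showing that $a_\Phi(K_t)\neq 0$ for infinitely many $t$ whenever $\Phi$ is a non-meagre hereditary property. A direct computation identifies $a_\Phi(K_t)$, up to a nonzero factor of $1/t!$, with the \emph{alternating enumerator}
\[
\widehat{\Phi}(t)\;:=\;\sum_{H\,:\, V(H)=[t]}(-1)^{|E(H)|}\,\Phi(H).
\]
I would prove the non-vanishing of $\widehat{\Phi}(t)$ along an infinite sequence of $t$ by adapting the ``$p$-group cancellation'' strategy used by Roth--Schmitt--Wellnitz in the monotone setting. The scheme is: fix a prime $p$; choose values of $t$ from a suitable progression (typically built from powers of $p$); construct a $p$-subgroup of the symmetric group $S_t$ acting on labelled $t$-vertex graphs; and observe that all orbits of size divisible by $p$ contribute $0\pmod p$, so $\widehat{\Phi}(t)$ is congruent to a sum of signed $\Phi$-values over the fixed points of the action. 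A minimal forbidden induced subgraph $F$ of $\Phi$ is then exploited to force the residual sum to be a nonzero element of $\mathbb{F}_p$.

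The principal obstacle is the last step: transporting the $p$-group cancellation argument from the monotone to the hereditary world. In the monotone setting, adding edges can only destroy $\Phi$, and this endows the alternating enumerator with a clean inductive structure over edge-additions that dictates the right group action. For hereditary $\Phi$, edge-modifications may simultaneously create and destroy forbidden induced subgraphs, so the action must be coordinated with the forbidden family $\mathcal{F}$ rather than with edge-monotonicity. I expect most of the technical effort to go into (i) designing a group action tailored to a minimal forbidden induced subgraph, exploiting the fact that vertex-deletion (the one operation that does behave monotonically for $\Phi$) suffices to detect $F$; and (ii) treating the case of an infinite forbidden family, where no single $F$ governs $\widehat{\Phi}(t)$ for all $t$ and one must instead extract an infinite subsequence along which a common forbidden structure can be exploited. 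Once the non-vanishing of $\widehat{\Phi}(t)$ is secured, both the $\#\W{1}$-completeness and the ETH-tight lower bound follow directly from the Curticapean--Dell--Marx machinery.
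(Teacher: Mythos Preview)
Your proposal is a plan, not a proof, and the step you flag as the ``principal obstacle'' is genuinely open along the route you chose. You aim to show that the alternating enumerator $\widehat{\Phi}(t)=\sum_{H}(-1)^{|E(H)|}\Phi(H)$ (equivalently, the coefficient of $K_t$ in the homomorphism expansion) is nonzero for a dense set of $t$, adapting the Roth--Schmitt--Wellnitz $p$-group cancellation. But you do not actually construct the group action or carry out the fixed-point analysis for hereditary $\Phi$; you only note that this is where the work lies. That is the entire difficulty, and the paper explicitly side-steps it rather than solving it: see the sentence ``we side-step the problem of analysing the coefficients in~\eqref{eq:hom_basis} for hereditary properties by considering a bipartite version of $\#\indsubsprob(\Phi)$ as an intermediate step.'' Moreover, even in the monotone case where RSW \emph{did} execute your programme, it yielded only an almost-tight bound $f(k)\cdot n^{o(k/\sqrt{\log k})}$; the density of the $t$'s for which $\widehat{\Phi}(t)\neq 0$ was not linear. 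So your route, as sketched, would not obviously deliver the tight $|G|^{o(k)}$ bound the theorem claims.

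The paper's argument is structurally different. It never analyses the coefficient of $K_t$. Instead it introduces a bipartite problem $\#\bipindsubsprob(\Psi)$ on graphs with a \emph{fixed} bipartition, proves a colour-prescribed complexity-monotonicity lemma for it, and shows (Lemma~\ref{lem:bipartite_biclique_coeff}) that the coefficient of the \emph{biclique} $B_{\ell,\ell}$ is nonzero modulo $\ell$ whenever $\ell$ is prime and $\Psi(\bI_{\ell,\ell})\neq\Psi(\bB_{\ell,\ell})$. The relevant $p$-group acts on edges of $B_{\ell,\ell}$ via consistent automorphisms, which is much tamer than acting on all labelled graphs on $[t]$. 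The link back to $\Phi$ is a gadget reduction: pick $H\in\obstr(\Phi)$ with $|E(H\down)|$ minimal, take two adjacent twin-blocks $B_1,B_2$, and for any bipartite $\bG$ form the implant $F_{\bG}$ by replacing the biclique between $B_1,B_2$ with $\bG$ (Definition~\ref{def:FHBG}). Setting $\Psi_\Phi(\bG):=\Phi(F_{\bG})$, Lemma~\ref{lem:psi_phi_valid} shows $\Psi_\Phi(\bB_{k,k})\neq\Psi_\Phi(\bI_{k,k})$ for all large $k$ (this is where minimality of $|E(H\down)|$ is used), and Lemma~\ref{lem:implantreduction} gives $\#\bipindsubsprob(\Psi_\Phi)\tightred\#\indsubsprob(\Phi)$ by inclusion--exclusion over the constant-size remainder $R=V(H)\setminus(B_1\cup B_2)$. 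The case where $\obstr(\Phi)$ contains an independent set is handled by passing to the complement $\overline{\Phi}$ via Ramsey (Observation~\ref{obs:ramsey}). None of these ingredients appear in your outline.
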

Observe that our conditional lower bound under ETH rules out any significant improvement over the brute-force algorithm for $\#\indsubsprob(\Phi)$, which iterates over every $k$-vertex subset of $V(G)$ and counts those that induce a subgraph satisfying $\Phi$. The running time of this algorithm is clearly bounded by $f(k)\cdot |V(G)|^{k+O(1)} \leq f(k) \cdot |G|^{O(k)}$ for some computable function $f$. Note further, that a stronger lower bound ruling out algorithms running in time $f(k)\cdot |V(G)|^{k-\varepsilon}$ for any $\varepsilon>0$ is not possible: For the (hereditary) property $\Phi$ of being a complete graph, the problem $\#\indsubsprob(\Phi)$ is the problem of counting $k$-cliques, which can be solved in time $|V(G)|^{\frac{\omega k}{3}+O(1)}$, where $\omega<3$ is the matrix multiplication exponent~\cite{NesetrilP85}.

To compare our results on exact counting with the complexity of decision and approximate counting we partition the class of all hereditary properties as follows; we write $I_\ell$ and $K_\ell$ for the independent set and the complete graph of size $\ell$, respectively.
\begin{enumerate}
    \item Suppose there are positive integers $s$ and $t$ such that $\Phi$ is false on $K_s$ and $I_t$. By Ramsey's Theorem and the fact that $\Phi$ is closed under taking induced subgraphs, $\Phi$ must then be false on all but finitely many graphs. The problem $\#\indsubsprob(\Phi)$ is thus solvable in polynomial time, and so are its decision and approximate counting versions.
    \item If, for all positive integers $\ell$, the property $\Phi$ is true on $K_\ell$ and $I_\ell$ then $\Phi$ is not \meagre, unless it is constant true. Khot and Raman~\cite{KhotR02} proved that deciding the existence of a $k$-vertex induced subgraph that satisfies $\Phi$ is fixed-parameter tractable. Furthermore, Meeks~\cite{Meeks16} established the existence of a ``fixed-parameter tractable approximation scheme'' (FPTRAS) for the counting problem, which can be considered the parameterised notion of an FPRAS (see~\cite{ArvindR02} for a discussion). 
    
    In sharp contrast, from Theorem~\ref{thm:hereditary} it follows that exact counting is intractable (unless $\Phi$ is trivially true, in which case $\#\indsubsprob(\Phi)$ is trivial).
    \item Otherwise, the decision version was shown to be $\W{1}$-hard by Khot and Raman~\cite{KhotR02}. However, their reduction only yields an implicit ETH-based conditional lower bound of the form $f(k)\cdot |G|^{o(k^{c})}$, where $0<c<1$ is a constant depending on the set of forbidden induced subgraphs of $\Phi$.\footnote{That is, a conditional lower bound that applies to all $\Phi$ is of the form $f(k)\cdot|G|^{o(g(k))}$ where $g$ is asymptotically smaller than every proper rational function, e.g., $g(k)=\log(k)$.} While it is unsurprising that Theorem~\ref{thm:hereditary} yields $\#\W{1}$-hardness of exact counting (since the decision version is $\W{1}$-hard), it is worth to point out that our conditional lower bound significantly improves upon the hardness of decision. 
\end{enumerate}

In summary, together with Khot and Raman~\cite{KhotR02}, and Meeks~\cite{Meeks16}, we fully complete the complexity landscape for detection, approximate counting and exact counting induced subgraphs with hereditary properties. In particular, we identify a significant variety of properties for which decision and approximate counting is easy, but exact counting is hard. A more concise overview is given in Table~\ref{tab:results}. 

  \begin{table}[t]\small
    \begin{tabularx}{\textwidth}{lccc}
        \toprule
        \textbf{Condition on} $\Phi\neq \mathsf{True}^\dagger$
          &\begin{tabular}{c}
             \textbf{Decision}  \\
             Khot \& Raman~\cite{KhotR02}
        \end{tabular}& \begin{tabular}{c}
             \textbf{Approx.\ Counting}   \\
             Meeks~\cite{Meeks16}
        \end{tabular}& \begin{tabular}{c}
             \textbf{Exact Counting}   \\
             This work (Theorem~\ref{thm:hereditary})
        \end{tabular}\\
         \midrule
        $\exists\, s, t: \Phi(K_s) = \Phi(I_t)= 0^\ddagger$
        & 
        P
        &
        P
        &
        P
        \\
        \midrule
             $\forall\, \ell: \Phi(K_\ell) = \Phi(I_\ell)=1$
        
         & \begin{tabular}{c}
         FPT
         \end{tabular} &
         \begin{tabular}{c}
        FPTRAS
        \end{tabular}&
         \begin{tabular}{c}
        $\#\W{1}$-hard,\\
        not in $f(k)\cdot |G|^{o(k)}$
        \end{tabular}\\
        \midrule
        Otherwise
         & \begin{tabular}{c}
         $\W{1}$-hard
         \end{tabular} &
         \begin{tabular}{c}
        no FPTRAS
        \end{tabular}&
         \begin{tabular}{c}
        $\#\W{1}$-hard,\\
        not in $f(k)\cdot |G|^{o(k)}$
        \end{tabular}\\
       
        \bottomrule
     \end{tabularx}
     \small \caption[caption]{Finding and counting $k$-vertex induced subgraphs that satisfy a hereditary property $\Phi$. The conditional lower bounds for exact counting assume the Exponential Time Hypothesis, and the absence of an FPTRAS for approximate counting is conditioned on the assumption that $\W{1}$ does not coincide with FPT under randomised parameterised reductions.\\
     
     \footnotesize{$^\dagger$ For the property $\mathsf{True}$ of being constant true, all versions of the problem become trivial.}
     \\
     \footnotesize{$^\ddagger$ By Ramsey's Theorem, and since $\Phi$ is hereditary, the condition $\exists\, s,t: \Phi(K_s) = \Phi(I_t)= 0$ implies that $\Phi$ is false for all graphs with at least $R(s,t)$ vertices. As $s$ and $t$ are constants, all versions of the problem can be trivially solved in time $n^{O(R(s,t))}=n^{O(1)}$.}
     }
     \label{tab:results}
\end{table}

We note that our classification for hereditary properties subsumes and strengthens the classification of monotone properties due to Roth, Schmitt and Wellnitz~\cite{RothSW20}\footnote{However, the classification of properties depending only on the number of edges in~\cite{RothSW20} is not subsumed.}, see Section~\ref{sec:furtherWork}.

In the course of establishing Theorem~\ref{thm:hereditary}, we prove a much stronger technical intractability theorem which we state in full generality in Section~\ref{sec:technical}. We have not yet explored the full extent of its applicability and we believe that it will be useful in future work (see Section~\ref{sec:conclusions}). For now, let us present one particular additional consequence: We say that a graph property $\Phi$ is \emph{invariant under homomorphic equivalence} if $\Phi(H_1)=\Phi(H_2)$ whenever $H_1$ and $H_2$ are homomorphically equivalent, i.e., there are homomorphisms from $H_1$ to $H_2$ and from $H_2$ to $H_1$. Examples of properties invariant under homomorphic equivalence include
\begin{itemize}
    \item $\Phi(H)=1$ if and only if $H$ has odd girth $d$.
    \item $\Phi(H)=1$ if and only if $H$ has clique number $d$.
    \item $\Phi(H)=1$ if and only if $H$ has chromatic number $d$.
\end{itemize}
Here, $d$ can be any fixed positive integer. We note that none of the previous works on $\#\indsubsprob(\Phi)$ reveals its complexity for the previous three properties. We change that in the current work:

\begin{restatable}{theorem}{homclosed}\label{thm:homequivalence}
Let $\Phi$ be a computable graph property that is invariant under homomorphic equivalence. 
If $\Phi$ is \meagre, then $\#\indsubsprob(\Phi)$ is solvable in polynomial time. 
Otherwise, $\#\indsubsprob(\Phi)$ is $\#\W{1}$-complete and, assuming ETH, cannot be solved in time $f(k)\cdot |G|^{o(k)}$ for any function $f$.
\end{restatable}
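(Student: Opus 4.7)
The meagre case is immediate: the generic polynomial-time algorithm sketched in Section~\ref{sec:prelim} handles every meagre property regardless of further structure, so the substance of the statement lies in the hardness claim. My plan is to invoke the general technical intractability theorem of Section~\ref{sec:technical} as a black box, exploiting the following basic observation: for any $\Phi$ invariant under homomorphic equivalence, $\Phi(H)$ depends only on the (unique, up to isomorphism) core $c(H)$, so the restriction of $\Phi$ to $k$-vertex graphs is determined by a family $\mathcal{C}_k$ of cores of size at most $k$ that $\Phi$ accepts.

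The next step is to unpack what non-meagreness means in this setting. Since $\Phi$ is non-meagre, there are infinitely many $k$ such that some $k$-vertex graph satisfies $\Phi$ and another does not; translating through cores, $\mathcal{C}_k$ is then neither empty nor the full set of cores realisable on $k$ vertices for infinitely many $k$. I would split into two regimes. In the \emph{bounded-core} regime, the cores in $\bigcup_k \mathcal{C}_k$ remain of bounded size, so on all sufficiently large $k$ the property agrees with ``$c(H) \in \mathcal{D}$'' for a finite set $\mathcal{D}$ of cores; here I would reduce to a suitably-designed hereditary auxiliary property (accepting precisely the graphs that admit a homomorphism-and-back to some element of $\mathcal{D}$) and invoke Theorem~\ref{thm:hereditary}. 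In the \emph{unbounded-core} regime, $\bigcup_k \mathcal{C}_k$ contains cores of unboundedly many sizes, and I would feed this infinite family of witnesses directly into the technical intractability theorem.

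The main obstacle I expect is the algebraic non-vanishing condition that the technical theorem requires: concretely, that in the Curticapean--Dell--Marx style expansion $\#\indsubsprob(\Phi)(G,k) = \sum_F a_F \cdot \homs{F}{G}$, infinitely many coefficients $a_F$ attached to graphs $F$ of unbounded treewidth are nonzero. For hereditary $\Phi$ the non-cancellation is driven by explicit forbidden-induced-subgraph structure, but here the coefficients $a_F$ for different $F$ sharing the same core are rigidly linked by the invariance, so the usual cancellation-avoidance arguments do not apply verbatim. I expect the delicate technical work to lie in showing that, in each of the two regimes above, this rigidity nevertheless leaves a nonzero ``large-treewidth'' coefficient; in the bounded-core regime via Theorem~\ref{thm:hereditary}, and in the unbounded-core regime by a direct Möbius-inversion argument built around an infinite sequence of accepted cores of growing size.
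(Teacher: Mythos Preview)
Your proposal has genuine gaps, and it also misses a much shorter route that the paper takes.

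\medskip
\noindent\textbf{The meagre case.} The lemma in Section~\ref{sec:prelim} only yields fixed-parameter tractability, not polynomial time; the paper explicitly notes after Lemma~\ref{lem:meagre_easy} that polynomial time is impossible for general meagre properties. To get polynomial time here you need to argue that a meagre property invariant under homomorphic equivalence is in fact \emph{eventually constant} (either constant false, or true on all graphs beyond some fixed size). The paper does this via twin-invariance: such a property is closed under adding false twins, which forces eventual constancy. You skipped this step.

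\medskip
\noindent\textbf{The hardness case.} Your plan splits on whether the accepted cores have bounded or unbounded size, and in both regimes you leave the decisive step undone.

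In the bounded-core regime you propose to reduce to a hereditary auxiliary property ``accepting precisely the graphs that admit a homomorphism-and-back to some element of $\mathcal{D}$''. But that property is \emph{not} hereditary: having core in a fixed finite set $\mathcal{D}$ is not preserved under taking induced subgraphs (deleting vertices can change the core arbitrarily, e.g.\ an odd cycle becomes bipartite after deleting one vertex). So Theorem~\ref{thm:hereditary} is not available here, and you have no replacement argument.

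In the unbounded-core regime you say you would ``feed this infinite family of witnesses directly into the technical intractability theorem'', but that theorem (Theorem~\ref{thm:technicalhardmain}) asks for a \emph{single fixed} graph $H$ together with two twin-sets $B_1,B_2$ such that the implant property distinguishes $\bI_{k,k}$ from $\bB_{k,k}$ for a dense set of primes $k$. You do not produce such an $H$, and your M\"obius-inversion sketch does not interface with the implant framework at all.

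\medskip
\noindent\textbf{What the paper actually does.} The paper avoids all of this with one observation: any graph $H$ is homomorphically equivalent to its twin-free quotient $H\down$ (the quotient map is a homomorphism, and choosing one representative per block gives a homomorphism back). Hence every property invariant under homomorphic equivalence is \emph{twin-invariant}, and Theorem~\ref{thm:homequivalence} is an immediate corollary of Theorem~\ref{thm:twinequivalence}. The latter has a two-line proof of the implant condition: pick any edge-critical pair, i.e.\ a graph $H$ with an edge $e=\{u,v\}$ such that $\Phi(H)\neq\Phi(H-e)$ (which exists whenever $\Phi$ is non-meagre on some slice), and implant at $(\{u\},\{v\})$. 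Then $F_{\bB_{k,k}}$ is twin-equivalent to $H$ and $F_{\bI_{k,k}}$ is twin-equivalent to $H-e$, so $\Psi_\Phi(\bB_{k,k})\neq\Psi_\Phi(\bI_{k,k})$ for \emph{every} $k\geq 1$, and Theorem~\ref{thm:technicalhardmain} applies directly. No case split on cores, no hereditary detour, no coefficient analysis beyond what is already packaged in Section~\ref{sec:bipartite}.
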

As a consequence, for each $d\geq 1$ (and $d$ odd in the case of odd girth), all three of the previous examples yield intractability of $\#\indsubsprob(\Phi)$.

\subsection{Technical Overview}\label{sec:techniques}
Similarly as in previous work~\cite{CurticapeanDM17,RothS18,DorflerRSW22,RothSW20} we rely on the framework of graph motif parameters and the homomorphism basis as introduced by Curticapean, Dell and Marx~\cite{CurticapeanDM17}: Writing $\#\indsubs{\Phi,k}{G}$ for the number of $k$-vertex induced subgraphs of $G$ that satisfy $\Phi$, it is known that there is a unique function $a_{\Phi,k}$ with finite support and independent from $G$, such that
\begin{equation}\label{eq:hom_basis}
    \#\indsubs{\Phi,k}{G} = \sum_H a_{\Phi,k}(H)\cdot \#\homs{H}{G}\,,
\end{equation}
where the sum is over all (isomorphism types of) graphs, and $\#\homs{H}{G}$ denotes the number of graph homomorphisms from $H$ to $G$. Let us emphasise that the sum is finite, since $a_{\Phi,k}$ has finite support, that is, $a_{\Phi,k}(H)\neq 0$ only for finitely many $H$. The \emph{complexity monotonicity principle}, which was independently discovered by Curticapean, Dell and Marx~\cite{CurticapeanDM17} and by Chen and Mengel~\cite{ChenM16}, states that computing a finite linear combination of homomorphism counts as in \eqref{eq:hom_basis} is precisely as hard as computing its hardest term $\#\homs{H}{G}$ with a non-zero coefficient $a_{\Phi,k}(H)\neq 0$. Since the complexity of counting homomorphisms from~$H$ to~$G$ is well-understood --- the problem is feasible if and only if the treewidth\footnote{Intuitively, treewidth is a parameter that measures how tree-like a graph is. In this work, we will rely on treewidth purely in a black-box manner, and thus we refer the reader to Chapter~7 in~\cite{CyganFKLMPPS15} for a comprehensive treatment of treewidth.} of $H$ is small~\cite{DalmauJ04} --- the complexity monotonicity principle shifted the study of the complexity of $\#\indsubsprob(\Phi)$ and related subgraph counting problems to the purely combinatorial problem of determining the treewidth of the graphs~$H$ with a non-zero coefficient $a_{\Phi,k}(H)\neq 0$. More formally, we can define a function $t_\Phi$ which maps a positive integer~$k$ to the maximum treewidth of a graph $H$ with $a_{\Phi,k}(H)\neq 0$. We then obtain the following \emph{implicit} classification:
\begin{theorem}[Corollary 1.11 in~\cite{CurticapeanDM17}]\label{thm:intro_monotonicity}
  The problem $\#\indsubsprob(\Phi)$ is fixed-parameter tractable if $t_\Phi$ is bounded by a constant, and $\#\W{1}$-complete otherwise.
\end{theorem}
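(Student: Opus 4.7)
The plan is to leverage the homomorphism basis decomposition in equation~(\ref{eq:hom_basis}) together with two independent ingredients: Dalmau and Jonsson's dichotomy that $\#\homs{H}{G}$ is polynomial-time computable precisely when $H$ has bounded treewidth (and $\#\W{1}$-hard on any recursively enumerable family of unbounded treewidth, by Marx's strengthening), and a \emph{complexity monotonicity} principle for finite $\mathbb{Q}$-linear combinations of homomorphism counts.

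For the easy direction, assume $t_\Phi(k)\le c$ for all $k$. On input $(G,k)$ I would first compute the finite function $a_{\Phi,k}$ in time depending only on $k$. This is a purely combinatorial step: expand $\#\indsubs{\Phi,k}{G}$ as $\sum_H \Phi(H)\cdot\#\indsubs{H}{G}$ over $k$-vertex graphs $H$, convert induced-subgraph counts into subgraph counts via Möbius inversion on the induced-subgraph poset of $k$-vertex graphs, and then expand each subgraph count as a linear combination of homomorphism counts indexed by the quotients of $H$, in the manner of Lovász. All graphs arising have at most $k$ vertices, so both the support and the coefficients of $a_{\Phi,k}$ are computable in time $f(k)$. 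Then for each $H$ with $a_{\Phi,k}(H)\ne 0$, invoke Dalmau--Jonsson to compute $\#\homs{H}{G}$ in time $f'(k)\cdot |V(G)|^{\mathrm{tw}(H)+O(1)}\le f'(k)\cdot |V(G)|^{c+O(1)}$, and output the weighted sum as dictated by~(\ref{eq:hom_basis}). This yields an FPT algorithm.

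For the hard direction, assume $t_\Phi$ is unbounded. The core step is the complexity monotonicity principle: for any finite combination $L(\cdot)=\sum_i a_i\cdot \#\homs{H_i}{\cdot}$ with all $a_i\ne 0$, each individual term $\#\homs{H_i}{G}$ FPT-Turing reduces to evaluating $L$. The standard proof proceeds through categorical tensor products: use the identity $\#\homs{H_i}{G\otimes F}=\#\homs{H_i}{G}\cdot\#\homs{H_i}{F}$ for all $i$, pick a family $\{F_j\}$ of ``probe'' graphs of size depending only on the $H_i$'s such that the matrix $M=(\#\homs{H_i}{F_j})_{i,j}$ is invertible over $\mathbb{Q}$ (this is a Vandermonde-type argument using that distinct isomorphism types have distinct homomorphism profiles, after a suitable elimination of redundancy), and solve the linear system arising from querying $L(G\otimes F_j)$ for each $j$ to recover each $\#\homs{H_i}{G}$ separately. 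Applying this to the decomposition in~(\ref{eq:hom_basis}) for each $k$, and using that $t_\Phi$ unbounded guarantees that the union of supports of $a_{\Phi,k}$ over $k\in\mathbb{N}$ contains a recursively enumerable family of graphs of unbounded treewidth, I obtain an FPT-Turing reduction from that family's homomorphism counting problem to $\#\indsubsprob(\Phi)$. Marx's lower bound then delivers $\#\W{1}$-hardness; membership in $\#\W{1}$ is routine via brute-force enumeration of $k$-subsets.

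The main obstacle is executing the complexity monotonicity argument cleanly and uniformly in $k$. Three sub-tasks need care: (a) showing that the entire map $(\Phi,k)\mapsto a_{\Phi,k}$ is computable, so that the reduction is effective and not merely non-uniform; (b) constructing the probe family $\{F_j\}$ and verifying invertibility of the matrix $M$, which hinges on a separation property for homomorphism profiles and must be carried out while keeping $|F_j|$ bounded by a function of $k$ alone; and (c) lifting the ``one term at a time'' extraction to an FPT reduction from an unbounded-treewidth family, which requires choosing, for each $k$, a witness graph $H_k^{\star}$ of treewidth at least $t_\Phi(k)$ in the support of $a_{\Phi,k}$, and arguing that the resulting class is recursively enumerable so that Marx's theorem applies.
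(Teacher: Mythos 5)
The paper does not prove this statement; it imports it verbatim from Curticapean, Dell and Marx (Corollary~1.11 of~\cite{CurticapeanDM17}), so there is no in-paper proof to compare your attempt against. Your reconstruction does follow the strategy of the cited source: expand $\#\indsubs{\Phi,k}{G}$ in the homomorphism basis with a coefficient function $a_{\Phi,k}$ of computable finite support, handle the tractable direction by applying the treewidth-based algorithm term by term, and handle the hard direction by the tensor-product ``complexity monotonicity'' extraction followed by a reduction from counting homomorphisms out of an unbounded-treewidth, recursively enumerable class. Two small but worth-correcting points. First, the $\#\W{1}$-hardness of $\#\textsc{Hom}$ for recursively enumerable families of unbounded treewidth is due to Dalmau and Jonsson (\cite{DalmauJ04}); Marx's theorem is what upgrades this to the ETH-conditional exponent lower bound, and is not needed for bare $\#\W{1}$-completeness as stated here. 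Second, the invertibility of the probe matrix in the complexity-monotonicity step is not a Vandermonde phenomenon driven by ``distinct homomorphism profiles''; distinctness of the hom-vectors (Lov\'asz) by itself does not give you an invertible square submatrix with your chosen probes. The standard argument (used in~\cite{CurticapeanDM17} and reproduced in the colour-prescribed setting as~\cite[Lemma~6]{DorflerRSW22}, which this paper also invokes in Lemma~\ref{lem:bip_mono}) is a triangularity argument with respect to a suitable order on the left-hand-side graphs, taking the probes to be the $H_i$ (or their edge-subgraphs) themselves. With those fixes your sketch is a faithful account of the cited proof.
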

\noindent For tight(er) lower bounds under ETH, it is additionally necessary that $t_\Phi(k)\in \Omega(k)$.

With Theorem~\ref{thm:intro_monotonicity} as a powerful tool at hand, recent work focused on establishing an \emph{explicit} criterion for tractability of $\#\indsubsprob(\Phi)$. More concretely, we note that Conjecture~\ref{conj:main} can be resolved if it is proved that $t_\Phi$ is bounded if and only if $\Phi$ is \meagre. Unfortunately, it turned out that the analysis of $t_\Phi$ and thus the analysis of the coefficients $a_{\Phi,k}(H)$ in~\eqref{eq:hom_basis} is a very challenging task in its own right. The reason for the latter is that the coefficients $a_{\Phi,k}(H)$ often encode algebraic and even topological invariants.\footnote{As a concrete example, it was shown in~\cite{RothS18} that for edge-monotone $\Phi$, the coefficient $a_{\Phi,k}(K_k)$ of the complete graph is equal to the so-called reduced Euler characteristic of the simplicial graph complex associated with $\Phi$. As a consequence, it was established that $a_{\Phi,k}(K_k)\neq 0$ is a sufficient criterion for the property $\Phi$ to be \emph{evasive} on $k$-vertex graphs. Therefore, a proof that $a_{\Phi,k}(K_k)$ does not vanish whenever $\Phi$ is non-trivial on $k$-vertex graphs would resolve Karp's famous Evasiveness-Conjecture. We refer the reader to~\cite{RothS18} for a detailed treatment of the connection between the coefficients $a_{\Phi,k}(K_k)$ and the evasiveness of $\Phi$.} Despite the latter difficulty, Theorem~\ref{thm:intro_monotonicity} was successfully used in previous works to resolve Conjecture~\ref{conj:main} for some restricted classes of properties, which we will present in more detail in Section~\ref{sec:conclusions}.

In the current work, we side-step the problem of analysing the coefficients in~\eqref{eq:hom_basis} for hereditary properties by considering a bipartite version of $\#\indsubsprob(\Phi)$ as an intermediate step. For the definition of the intermediate problem, we need to consider bipartite graphs $G$ with \emph{fixed} bipartitions $V(G)=U\dot\cup V$ (note that a bipartite graph might have multiple bipartitions). Formally, we will write $\bG=(U,V,E)$ to emphasise fixing the left- and right-hand side vertices $U$ and $V$, respectively. Furthermore, two bipartite graphs $\bG_1=(U_1,V_1,E_2)$ and $\bG_2=(U_2,V_2,E_2)$ are said to be \emph{consistently isomorphic} if there exists an isomorphism that maps $U_1$ to $U_2$ and $V_1$ to $V_2$, respectively. A \emph{bipartite property} $\Psi$ is then defined to be a function from bipartite graphs to $\{0,1\}$ such that $\Psi(\bG_1)=\Psi(\bG_2)$ whenever $\bG_1$ and $\bG_2$ are consistently isomorphic. 

Given a bipartite property $\Psi$, the problem $\#\bipindsubsprob(\Psi)$ asks, on input a bipartite graph~$\bG$ (with fixed bipartition!) and a positive integer $k$, to compute the number of $k$-vertex induced subgraphs of $\bG$ that satisfy $\Psi$; here, the bipartition of an induced subgraph of $\bG$ is induced by the bipartition of $\bG$. We stress that $\#\bipindsubsprob(\Psi)$ is \emph{not} the same as the restriction of $\#\indsubsprob(\Phi)$ to bipartite input graphs (without fixed bipartition). For example, $\#\bipindsubsprob(\Psi)$ allows us to express counting of $2k$-vertex induced subgraphs of $\bG$ that have $k$ vertices on the left-hand side and $k$ vertices on the right-hand side, or, more interestingly, counting $k$-vertex induced subgraphs of $\bG$ such that there is a vertex on the left-hand side that is adjacent to all vertices on the right-hand side. Both of those examples are not expressible by just restricting $\#\indsubsprob(\Phi)$ to bipartite inputs.

Our proof of Theorem~\ref{thm:hereditary} can then be split into two essentially independent parts: First, we establish the following criterion for the intractability of $\#\bipindsubsprob(\Psi)$. To this end, $\bI_{k,k}$ denotes an independent set of size $2k$, with a fixed bipartition that contains $k$ vertices on the left-hand side and $k$ vertices on the right-hand side; and $\bB_{k,k}$ denotes the complete bipartite graph with $k$ vertices on the left-hand side and $k$ vertices on the right-hand side. Furthermore, we call a set of integers $\mathcal{K}$ \emph{dense} if there exists a constant $c$ such that for every positive integer $m$, there is a $k\in  \mathcal{K}$ with $m\leq k\leq c\cdot m$.
\begin{restatable}{theorem}{mainbipartite}\label{thm:main_bipartite}
Let $\Psi$ be a computable bipartite property.
Let $\mathcal{K}$ be the set of primes $k$ for which $\Psi$ distinguishes $\bI_{k,k}$ and $\bB_{k,k}$, i.e., $\Psi(\bI_{k,k})\neq \Psi(\bB_{k,k})$.
If $\mathcal{K}$ is infinite then $\#\bipindsubsprob(\Psi)$ is $\#\W{1}$-hard.
Moreover, if $\mathcal{K}$ is dense then $\#\bipindsubsprob(\Psi)$ cannot be solved in time $f(k)\cdot |G|^{o(k)}$ for any function $f$, assuming the ETH.
\end{restatable}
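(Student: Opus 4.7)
The plan is to follow the graph motif parameter framework of Curticapean, Dell and Marx~\cite{CurticapeanDM17}, adapted to the bipartite setting with fixed bipartitions. I would begin by establishing a bipartite analogue of the homomorphism basis: for any bipartite property $\Psi$ and positive integer $n$, there is a unique finitely-supported coefficient function $a_{\Psi,n}$ on bipartite graphs such that
\begin{equation*}
\#\bipindsubs{\Psi,n}{\bG} \;=\; \sum_{\bH} a_{\Psi,n}(\bH) \cdot \#\homs{\bH}{\bG},
\end{equation*}
where $\bH$ ranges over isomorphism types of bipartite graphs with fixed bipartition and $\#\homs{\bH}{\bG}$ denotes the number of bipartite-consistent homomorphisms. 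A bipartite version of the complexity monotonicity principle~\cite{CurticapeanDM17,ChenM16}, whose proof carries over once homomorphisms are required to respect the bipartition, then implies that computing $\#\bipindsubs{\Psi,n}{\cdot}$ is at least as hard as computing $\#\homs{\bH^*}{\cdot}$ for the bipartite graph $\bH^*$ of maximum treewidth with $a_{\Psi,n}(\bH^*) \neq 0$. By~\cite{DalmauJ04} and the ETH-based refinement of Marx~\cite{Marx10}, counting bipartite-consistent homomorphisms from $\bH^*$ is $\#\W{1}$-hard and requires $|\bG|^{\Omega(\mathrm{tw}(\bH^*))}$ time under ETH.

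Given this machinery, it suffices to exhibit, for each prime $k \in \mathcal{K}$, a non-zero coefficient on a bipartite graph of large treewidth. My candidate is $\bH^* = \bB_{k,k}$ at parameter $n = 2k$: recall that $\mathrm{tw}(\bB_{k,k}) = k$. Once I show $a_{\Psi,2k}(\bB_{k,k}) \neq 0$ for every prime $k \in \mathcal{K}$, the $\#\W{1}$-hardness is immediate whenever $\mathcal{K}$ is infinite. For the ETH lower bound under density, pick for every $m$ a prime $k \in \mathcal{K}$ with $m \leq k \leq c \cdot m$. The reduction then produces patterns of size $n = 2k = \Theta(m)$ and treewidth $k = \Theta(n)$, so any hypothetical $f(n) \cdot |\bG|^{o(n)}$ algorithm for $\#\bipindsubsprob(\Psi)$ would contradict the ETH-based lower bound for counting homomorphisms from $\bB_{k,k}$.

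The main obstacle is to show $a_{\Psi,2k}(\bB_{k,k}) \neq 0$. Via a standard two-step M\"obius inversion (from induced-subgraph counts to injective-homomorphism counts by supergraph sums, then from injective to plain homomorphism counts via quotients respecting the bipartition), only \emph{balanced} induced subgraphs of size $2k$, i.e., those with exactly $k$ vertices on each side, can contribute to the coefficient of $\bB_{k,k}$: quotienting cannot increase the number of vertices on either side of the bipartition, so a quotient of an unbalanced pattern can never be consistently isomorphic to $\bB_{k,k}$. Tracking the M\"obius signs and the automorphism factor $|\mathrm{Aut}(\bB_{k,k})| = (k!)^2$ then yields that $a_{\Psi,2k}(\bB_{k,k})$ equals, up to the nonzero scalar $1/(k!)^2$, the signed sum
\begin{equation*}
S \;:=\; \sum_{\bH \subseteq \bB_{k,k}} \Psi(\bH) \cdot (-1)^{k^2 - |E(\bH)|},
\end{equation*}
where $\bH$ ranges over \emph{labelled} bipartite graphs on the vertex set $[k] \sqcup [k]$.

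To show $S \neq 0$ for every prime $k \in \mathcal{K}$, I would exploit a cyclic group action. The group $\mathbb{Z}/k\mathbb{Z} \times \mathbb{Z}/k\mathbb{Z}$ acts on $[k] \sqcup [k]$ by rotating the two sides independently, inducing an action on labelled bipartite graphs on $[k] \sqcup [k]$ that preserves both $\Psi(\bH)$ and $|E(\bH)|$. Since $k$ is prime, every orbit has size in $\{1, k, k^2\}$, so non-singleton orbits contribute a multiple of $k$ to $S$. A labelled bipartite graph is fixed by the entire group precisely when its edge set is closed under all independent shifts, which forces it to be either $\bI_{k,k}$ or $\bB_{k,k}$. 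Summing the fixed-point contributions gives
\begin{equation*}
S \;\equiv\; (-1)^{k^2}\,\Psi(\bI_{k,k}) \,+\, \Psi(\bB_{k,k}) \pmod{k},
\end{equation*}
which simplifies to $\Psi(\bB_{k,k}) - \Psi(\bI_{k,k}) \pmod{k}$ for odd primes $k$ and to $\Psi(\bI_{k,k}) + \Psi(\bB_{k,k}) \pmod{2}$ for $k = 2$. In both cases, since $\Psi$ distinguishes $\bI_{k,k}$ from $\bB_{k,k}$, this residue is $\pm 1$ modulo $k$, so $S \not\equiv 0 \pmod{k}$ and hence $S \neq 0$, establishing the required non-vanishing.
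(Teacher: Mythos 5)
Your proposal follows the same overall strategy as the paper: express the count in a homomorphism basis, appeal to complexity monotonicity, and then show the coefficient of $\bB_{k,k}$ is nonzero by reducing a signed sum over edge-subsets modulo the prime $k$ via a transitive $k$-group action on the edges of the biclique, so that only $\bI_{k,k}$ and $\bB_{k,k}$ survive as fixed points. Your use of the explicit group $\mathbb{Z}/k\mathbb{Z}\times\mathbb{Z}/k\mathbb{Z}$ is a nice, more elementary substitute for the paper's abstract invocation of a $k$-Sylow subgroup of $\auts{\bB_{k,k}}$ (indeed, $\mathbb{Z}/k\mathbb{Z}\times\mathbb{Z}/k\mathbb{Z}$ \emph{is} a $k$-Sylow subgroup of $S_k\times S_k$ when $k$ is prime), and your final congruence is identical to the paper's Lemma~\ref{lem:bipartite_biclique_coeff}.

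The one place you gloss over something the paper deliberately handles differently is the complexity-monotonicity step. You assert that the Curticapean--Dell--Marx argument ``carries over once homomorphisms are required to respect the bipartition'' for the uncoloured bipartite homomorphism basis, and then reduce from uncoloured consistent-homomorphism counting from $\bB_{k,k}$. The paper instead routes everything through the colour-prescribed variants $\#\cpbipindsubsprob(\Psi)$ and $\#\cphomsprob(\mathcal{B})$ (Lemmas~\ref{lem:BipIndtoInd}--\ref{lem:bip_mono}), precisely because the system matrix in the colour-prescribed tensor-product argument is triangular and its invertibility is immediate, and because the hardness of $\#\cphomsprob$ on a dense family of bicliques is an off-the-shelf input from~\cite{DorflerRSW22}. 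In the uncoloured bipartite setting you would still need to (i) establish linear independence of the relevant consistent-homomorphism vectors (a bipartite Lov\'asz-type statement), and (ii) establish the ETH-tight lower bound for uncoloured consistent homomorphism counting from $\bB_{k,k}$; neither is difficult, but neither is free, and your write-up currently only gestures at both. Your M\"obius-inversion derivation of the coefficient formula is fine, including the observation that only balanced spanning subgraphs of $\bB_{k,k}$ contribute, and it matches the paper's $a_T$ with $T=E(B_{k,k})$.
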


In the second step, we show that for a wide range of properties $\Phi$, including all hereditary properties, we can associate with $\Phi$ a bipartite property $\Psi_\Phi$ such that $\#\bipindsubsprob(\Psi_\Phi)$ reduces to $\#\indsubsprob(\Phi)$ with respect to parameterised reductions. Additionally, this reduction will be tight in the sense that all conditional lower bounds transfer. Finally, we show that, whenever $\Phi$ is not \meagre, the bipartite property $\Psi_\Phi$ will satisfy the strong hardness condition in Theorem~\ref{thm:main_bipartite}, yielding not only $\#\W{1}$-hardness, but also the conditional lower bound under ETH.
\footnote{For readers familiar with the so-called bipartite double-cover $G\times K_2$, we wish to stress that the latter is \emph{not} used in our reduction, even though this approach may seem tempting at first glance. Unfortunately, due to technical reasons which are out of the scope of this extended abstract, we were not able to obtain an easier proof via the bipartite double-cover.} 

In what follows, we will describe both steps in more detail separately.

\subsubsection{Classification of Bipartite Properties}
The main motivation of our consideration of $\#\bipindsubsprob(\Psi)$ as an intermediate step is the ``algebraic approach to hardness'' as introduced in~\cite{DorflerRSW22}, which we will describe subsequently.

For a properly defined vertex-coloured version of $\#\indsubsprob(\Phi)$, restricted to bipartite input graphs (but without fixed bipartition), a transformation as linear combination of vertex-coloured homomorphism counts similar as in Equation~\eqref{eq:hom_basis} is known to hold. It was furthermore established that 
\begin{enumerate}
    \item[(I)] the complexity monotonicity principle (Theorem~\ref{thm:intro_monotonicity}) remains true in the vertex-coloured setting, and
    \item[(II)] the coefficient $a_{\Phi,k}(H)$ for $H$ being a complete bipartite graph can be analysed much easier in the vertex-coloured case.
\end{enumerate}
The reason for the simplified analysis of the coefficient in (II) ultimately relied on the fact that the complete bipartite graph is edge-transitive and that it can have a prime-power number of edges; we describe this in more detail when we apply the algebraic approach to the setting of fixed bipartitions further below.

In combination, (I) and (II) were shown to yield a classification similar to Theorem~\ref{thm:main_bipartite} but without considering fixed bipartitions.
Unfortunately, our reduction from the bipartite to the non-bipartite case crucially depends on such fixed bipartitions. Therefore, we adapt the algebraic approach to $\#\bipindsubsprob(\Psi)$ as follows. 
Writing $\#\bipindsubs{\Psi,k}{\bG}$ for the number of (bipartite) $k$-vertex induced subgraphs of $\bG$ that satisfy~$\Psi$, we establish a similar transformation as in Equation~\eqref{eq:hom_basis}. We show that there exists a function $a_{\Psi,k}$ of finite support and independent of $\bG$ such that\footnote{In fact, for technical reasons, we establish Equation~\eqref{eq:hom_basis_bipartised} in a vertex-coloured setting, which is, however, shown to be interreducible with the uncoloured setting. The formal statement is given by Lemma~\ref{lem:bipcm}.}
\begin{equation}\label{eq:hom_basis_bipartised}
    \#\bipindsubs{\Phi,k}{\bG} = \sum_H a_{\Psi,k}(H) \cdot \#\homs{H}{G} \,, 
\end{equation}
where the sum is again over all graphs (without fixed bipartitions) and $G$ is the underlying graph of $\bG$. Additionally, if $k=2\ell$, we show that
\begin{equation}\label{eq:intro_coeff}
    a_{\Psi,k}(B_{\ell,\ell})= \sum_{A\subseteq E(B_{\ell,\ell})} \Psi(\bB_{\ell,\ell}[A]) \cdot (-1)^{\ell^2-|A|}\,,
\end{equation}
where $B_{\ell,\ell}$ is the complete bipartite graph, i.e., the $\ell$-by-$\ell$ biclique, $\bB_{\ell,\ell}$ is the $\ell$-by-$\ell$ biclique with fixed bipartition, and $\bB_{\ell,\ell}[A]$ is obtained from $\bB_{\ell,\ell}$ by removing all edges in $E(B_{\ell,\ell})\setminus A$. The goal is to show that $a_{\Psi,k}(B_{\ell,\ell})$ in~\eqref{eq:intro_coeff} is non-zero whenever $\ell$ is a prime and $\Psi(\bI_{\ell,\ell})\neq\Psi(\bB_{\ell,\ell})$. Since the treewidth of $B_{\ell,\ell}$ is linear in $\ell$, we can rely on a vertex-coloured version of complexity monotonicity such as in (I) to prove that $a_{\Psi,k}(B_{\ell,\ell})\neq 0$ is sufficient for the classification of $\#\bipindsubsprob(\Psi)$ (Theorem~\ref{thm:main_bipartite}).

The subtle difference between~\eqref{eq:intro_coeff} and the analysis in~\cite{DorflerRSW22}, which prevents us from using the main result of~\cite{DorflerRSW22} in a black-box manner, is that the edge-subgraphs of $\bB_{\ell,\ell}$ keep their fixed bipartition, and only the subgraphs for which the bipartite property $\Psi$ holds contribute to the sum. More precisely, the main result of~\cite{DorflerRSW22} is achieved by considering the canonical action of the automorphism group $\auts{B_{\ell,\ell}}$ on the set of edge-subsets and observing that the term $\Psi(B_{\ell,\ell}[A]) \cdot (-1)^{\ell^2-|A|}$ is invariant under this action if no bipartition is fixed and if $\Psi$ is a graph property rather than a bipartite property. However, in our case, $\Psi$ is a bipartite property that respects the bipartition. Thus, for an automorphism $\pi$ of $B_{\ell,\ell}$ which maps vertices from the left-hand side to the right-hand side and vice versa, there might be an edge-subset $A$ such that
\[\Psi(\bB_{\ell,\ell}[A]) \neq \Psi(\bB_{\ell,\ell}[\pi(A)]) \,.\]

Fortunately, we can easily solve this problem by restricting to automorphisms that are \emph{consistent}, i.e., which map the left-hand side to the left-hand side and the right-hand side to the right-hand side. Writing $\auts{\bB_{\ell,\ell}}$ for the set of consistent automorphisms, it is easy to see that $\auts{\bB_{\ell,\ell}}$ still acts transitively on the edges of the complete bipartite graph $\bB_{\ell,\ell}$, that is, for each pair of edges $e$ and $f$ of $\bB_{\ell,\ell}$, there exists $\pi\in \auts{\bB_{\ell,\ell}}$ such that $\pi(e)=f$. 
With that observation at hand, we can apply the algebraic approach similarly as in~\cite{DorflerRSW22}; for now we provide a concise outline and refer the reader to Section~\ref{sec:bipartite} for the detailed presentation.

To establish that $a_{\Psi,k}(B_{\ell,\ell})$ does not vanish under the previous constraints, we will first observe that $\#\auts{\bB_{\ell,\ell}}$ is divisible by $\ell$. Thus, we can show that there exists an $\ell$-Sylow subgroup $\Gamma$ of $\auts{\bB_{\ell,\ell}}$ whose action on the edges of $\bB_{\ell,\ell}$ is still transitive. Extending this action to edge-subsets of $\bB_{\ell,\ell}$, we observe that for each pair $A_1$ and $A_2$ in the same orbit, we have that
\[\Psi(\bB_{\ell,\ell}[A_1]) = \Psi(\bB_{\ell,\ell}[A_2]) \,.\]
Since the size of each orbit must divide the order of the group, which is the prime $\ell$, we can take Equation~\eqref{eq:intro_coeff} modulo $\ell$, and observe that only the fixed points survive, that is $A=\emptyset$ and $A=E(B_{\ell,\ell})$. In other words, we obtain
\begin{equation}\label{eq:intro_coeff_modell}
    a_{\Psi,k}(B_{\ell,\ell})= \Psi(\bB_{\ell,\ell}) + \Psi(\bI_{\ell,\ell}) \cdot (-1)^{\ell^2} \mod \ell\,,
\end{equation}
which is non-zero whenever $\ell$ is a prime and $\Psi(\bB_{\ell,\ell}) \neq \Psi(\bI_{\ell,\ell})$. As outlined previously, this will suffice for establishing the classification of $\#\bipindsubsprob(\Psi)$ (Theorem~\ref{thm:main_bipartite}).

\subsubsection{Reducing from Bipartite Properties to Graph Properties using False Twins}

Let $\Phi$ be a hereditary graph property that is not \meagre. In order to confirm Conjecture~\ref{conj:main} for such $\Phi$ we will relate $\Phi$ to a bipartite property $\Psi_{\Phi}$, which satisfies the requirements of the hardness result from Theorem~\ref{thm:main_bipartite}. Intuitively, this means that $\Psi_\Phi$ should distinguish, for certain $k$, the independent set $\bI_{k,k}$ from the biclique~$\bB_{k,k}$. Additionally, we have to establish that $\#\bipindsubsprob(\Psi_\Phi)$ reduces to $\#\indsubsprob(\Phi)$.

How could we define such a property $\Psi_\Phi$? Since $\Phi$ is hereditary it can be classified by a (possibly infinite) set of (inclusion-minimal) forbidden induced subgraphs $\obstr(\Phi)$. Since $\Phi$ is not \meagre $\obstr(\Phi)$ contains at least one element, say $H$. Consider the following initial construction; an illustration is provided in Figure~\ref{fig:ghat}. Suppose that $H$ contains an edge $e=\{u_1,u_2\}$. Then replacing this edge with a complete bipartite graph $\bB_{k,k}$ yields a graph $H_{\bB_{k,k}}$ that contains $H$ as induced subgraph, which means that $\Phi(H_{\bB_{k,k}})=0$. Now suppose further that replacing $e$ with an independent set $\bI_{k,k}$ yields a graph $H_{\bI_{k,k}}$ for which $\Phi(H_{\bI_{k,k}})=1$. Then the process of replacing the edge $e$ with some bipartite graph $\bG$ --- note that the choice of the bipartition matters --- and evaluating $\Phi$ for the resulting graph $H_{\bG}$ defines a bipartite property that distinguishes $\bI_{k,k}$ from $\bB_{k,k}$.

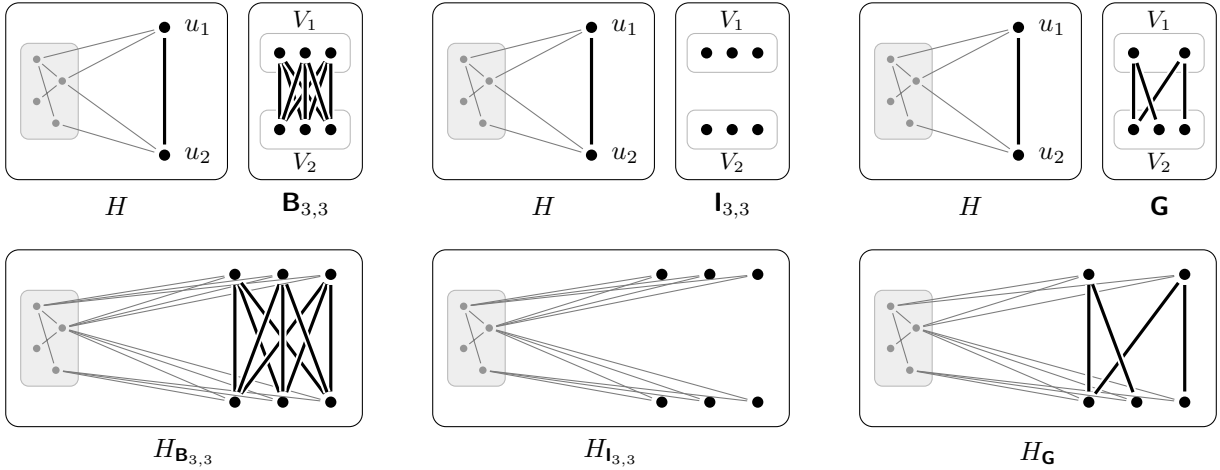
\begin{figure}[t]
        \centering
        \begin{tikzpicture}[scale=0.85]
            \begin{scope}
                \node[vertex,label=right:{$u_1$}] (u) at (0,0) {};
                \node[vertex,label=right:{$u_2$}] (v) at (0,-2) {};

                \draw[edge] (u) -- (v);

                \node (e1) at (-2.2, .1) {};
                \node (e2) at (0.7, -2.1) {};
                \node[cvertex] (x1) at (-1.8, -1.5) {};
                \node[cvertex] (x2) at (-2, -.5) {};
                \node[cvertex] (x3) at (-1.6, -1) {};
                \node[fit=(x1)(x2)(x3), rounded corners=3pt, draw=lipicsGray!40,
                    fill=lipicsGray!10] {};
                \node[fit=(e1)(e2), rounded corners=5pt, draw] (f) {};
                \node at ($(f.south) + (0,-.4)$) {$H$};

                \node[cvertex] (x1) at (-1.7, -1.5) {};
                \node[cvertex] (x2) at (-2, -.5) {};
                \node[cvertex] (x3) at (-1.6, -.84) {};
                \node[cvertex] (x4) at (-2, -1.16) {};
                \draw[cedge] (u) -- (x3) -- (v);
                \draw[cedge] (u) -- (x2);
                \draw[cedge] (v) -- (x1) -- (x2) -- (x3) -- (x4);
                
                \node[vertex] (U2) at (2.2, -0.4) {};
                \node[vertex] (U3) at (2.6, -0.4) {};
                \node[vertex] (U4) at (1.8, -0.4) {};
                \node[vertex] (V2) at (2.6, -1.6) {};
                \node[vertex] (V3) at (2.2, -1.6) {};
                \node[vertex] (V4) at (1.8, -1.6) {};
                \node[fit=(U3)(U4), rounded corners=3pt, draw=lipicsGray!40] (x1) {};
                \node[fit=(V2)(V3)(V4), rounded corners=3pt, draw=lipicsGray!40] (x2) {};
                \node at ($(x2.south) + (0,-.2)$) {\small$V_2$};
                \node at ($(x1.north) + (0,.2)$) {\small$V_1$};
                \node (f1) at (2.81, .1) {};
                \node (f2) at (1.6, -2.1) {};
                \draw[sdedge] (U4) -- (V4);
                \draw[sdedge] (U4) -- (V3);
                \draw[sdedge] (U4) -- (V2);
                \draw[sdedge] (U3) -- (V4);
                \draw[sdedge] (U3) -- (V3);
                \draw[sdedge] (U3) -- (V2);
                \draw[sdedge] (U2) -- (V4);
                \draw[sdedge] (U2) -- (V3);
                \draw[sdedge] (U2) -- (V2);
                \node[fit=(f1)(f2), rounded corners=5pt, draw] (g) {};
                \node at ($(g.south) + (0,-.4)$) {$\bB_{3,3}$};
            \end{scope}

            \begin{scope}[yshift = -11em]
             \node (e1) at (-2.2, .1) {};
                \node (e2) at (.4, -2.1) {};
                \node (f1) at (2.81, .1) {};
                \node (f2) at (1.6, -2.1) {};
                \node[cvertex] (x1) at (-1.8, -1.5) {};
                \node[cvertex] (x2) at (-2, -.5) {};
                \node[cvertex] (x3) at (-1.6, -1) {};
                \node[fit=(x1)(x2)(x3), rounded corners=3pt, draw=lipicsGray!40,
                    fill=lipicsGray!10] {};
                \node[fit=(e1)(e2)(f1)(f2), rounded corners=5pt, draw] (f) {};
                \node at ($(f.south) + (0,-.4)$) {$H_{\bB_{3,3}}$};

                \node[cvertex] (x1) at (-1.7, -1.5) {};
                \node[cvertex] (x2) at (-2, -.5) {};
                \node[cvertex] (x3) at (-1.6, -.84) {};
                \node[cvertex] (x4) at (-2, -1.16) {};

                \node[vertex] (U3) at (2.6, -0) {};
                \node[vertex] (U4) at (1.1, -0) {};
                \node[vertex] (U2) at (1.85, -0) {};
                \node[vertex] (V2) at (2.6, -2) {};
                \node[vertex] (V3) at (1.85, -2) {};
                \node[vertex] (V4) at (1.1, -2) {};
                \draw[cedge] (U4) -- (x3) -- (V4);
                \draw[cedge] (U3) -- (x3) -- (V2);
                \draw[cedge] (x3) -- (V3);
                \draw[cedge] (U3) -- (x2);
                \draw[cedge] (U4) -- (x2);
                \draw[cedge] (V4) -- (x1) -- (x2) -- (x3) -- (x4);
                \draw[cedge] (V2) -- (x1);
                \draw[cedge] (V3) -- (x1);
                \draw[cedge] (U2) -- (x3);
                \draw[cedge] (U2) -- (x2);
                
                \draw[sdedge] (U4) -- (V4);
                \draw[sdedge] (U4) -- (V3);
                \draw[sdedge] (U4) -- (V2);
                \draw[sdedge] (U3) -- (V4);
                \draw[sdedge] (U3) -- (V3);
                \draw[sdedge] (U3) -- (V2);
                \draw[sdedge] (U2) -- (V4);
                \draw[sdedge] (U2) -- (V3);
                \draw[sdedge] (U2) -- (V2);

            \end{scope}
            
             \begin{scope}[xshift=19em]
                 \node[vertex,label=right:{$u_1$}] (u) at (0,0) {};
                \node[vertex,label=right:{$u_2$}] (v) at (0,-2) {};

                \draw[edge] (u) -- (v);

                \node (e1) at (-2.2, .1) {};
                \node (e2) at (.7, -2.1) {};
                \node[cvertex] (x1) at (-1.8, -1.5) {};
                \node[cvertex] (x2) at (-2, -.5) {};
                \node[cvertex] (x3) at (-1.6, -1) {};
                \node[fit=(x1)(x2)(x3), rounded corners=3pt, draw=lipicsGray!40,
                    fill=lipicsGray!10] {};
                \node[fit=(e1)(e2), rounded corners=5pt, draw] (f) {};
                \node at ($(f.south) + (0,-.4)$) {$H$};

                \node[cvertex] (x1) at (-1.7, -1.5) {};
                \node[cvertex] (x2) at (-2, -.5) {};
                \node[cvertex] (x3) at (-1.6, -.84) {};
                \node[cvertex] (x4) at (-2, -1.16) {};
                \draw[cedge] (u) -- (x3) -- (v);
                \draw[cedge] (u) -- (x2);
                \draw[cedge] (v) -- (x1) -- (x2) -- (x3) -- (x4);

                \node[vertex] (U3) at (2.6, -0.4) {};
                \node[vertex] (U4) at (1.8, -0.4) {};
                \node[vertex] (V2) at (2.6, -1.6) {};
                \node[vertex] (U2) at (2.2, -0.4) {};
                \node[vertex] (V3) at (2.2, -1.6) {};
                \node[vertex] (V4) at (1.8, -1.6) {};
                \node[fit=(U3)(U4), rounded corners=3pt, draw=lipicsGray!40] (x1) {};
                \node[fit=(V2)(V3)(V4), rounded corners=3pt, draw=lipicsGray!40] (x2) {};
                \node at ($(x2.south) + (0,-.2)$) {\small$V_2$};
                \node at ($(x1.north) + (0,.2)$) {\small$V_1$};
                \node (f1) at (2.81, .1) {};
                \node (f2) at (1.6, -2.1) {};
                \node[fit=(f1)(f2), rounded corners=5pt, draw] (g) {};
                \node at ($(g.south) + (0,-.4)$) {$\bI_{3,3}$};

            \begin{scope}[yshift = -11em]
             \node (e1) at (-2.2, .1) {};
                \node (e2) at (.4, -2.1) {};
                \node (f1) at (2.81, .1) {};
                \node (f2) at (1.6, -2.1) {};
                \node[cvertex] (x1) at (-1.8, -1.5) {};
                \node[cvertex] (x2) at (-2, -.5) {};
                \node[cvertex] (x3) at (-1.6, -1) {};
                \node[fit=(x1)(x2)(x3), rounded corners=3pt, draw=lipicsGray!40,
                    fill=lipicsGray!10] {};
                \node[fit=(e1)(e2)(f1)(f2), rounded corners=5pt, draw] (f) {};
                \node at ($(f.south) + (0,-.4)$) {$H_{\bI_{3,3}}$};

                \node[cvertex] (x1) at (-1.7, -1.5) {};
                \node[cvertex] (x2) at (-2, -.5) {};
                \node[cvertex] (x3) at (-1.6, -.84) {};
                \node[cvertex] (x4) at (-2, -1.16) {};

                \node[vertex] (U3) at (2.6, -0) {};
                \node[vertex] (U2) at (1.85, -0) {};
                \node[vertex] (U4) at (1.1, -0) {};
                \node[vertex] (V2) at (2.6, -2) {};
                \node[vertex] (V3) at (1.85, -2) {};
                \node[vertex] (V4) at (1.1, -2) {};
                \draw[cedge] (U4) -- (x3) -- (V4);
                \draw[cedge] (U3) -- (x3) -- (V2);
                \draw[cedge] (x3) -- (V3);
                \draw[cedge] (U3) -- (x2);
                \draw[cedge] (U4) -- (x2);
                \draw[cedge] (V4) -- (x1) -- (x2) -- (x3) -- (x4);
                \draw[cedge] (V2) -- (x1);
                \draw[cedge] (V3) -- (x1);
                \draw[cedge] (U2) -- (x3);
                \draw[cedge] (U2) -- (x2);
            
         \end{scope}
            \end{scope}
             \begin{scope}[xshift=38em]
                 \node[vertex,label=right:{$u_1$}] (u) at (0,0) {};
                \node[vertex,label=right:{$u_2$}] (v) at (0,-2) {};

                \draw[edge] (u) -- (v);

                \node (e1) at (-2.2, .1) {};
                \node (e2) at (.7, -2.1) {};
                \node[cvertex] (x1) at (-1.8, -1.5) {};
                \node[cvertex] (x2) at (-2, -.5) {};
                \node[cvertex] (x3) at (-1.6, -1) {};
                \node[fit=(x1)(x2)(x3), rounded corners=3pt, draw=lipicsGray!40,
                    fill=lipicsGray!10] {};
                \node[fit=(e1)(e2), rounded corners=5pt, draw] (f) {};
                \node at ($(f.south) + (0,-.4)$) {$H$};

                \node[cvertex] (x1) at (-1.7, -1.5) {};
                \node[cvertex] (x2) at (-2, -.5) {};
                \node[cvertex] (x3) at (-1.6, -.84) {};
                \node[cvertex] (x4) at (-2, -1.16) {};
                \draw[cedge] (u) -- (x3) -- (v);
                \draw[cedge] (u) -- (x2);
                \draw[cedge] (v) -- (x1) -- (x2) -- (x3) -- (x4);

                \node[vertex] (U3) at (2.6, -0.4) {};
                \node[vertex] (U4) at (1.8, -0.4) {};
                \node[vertex] (V2) at (2.6, -1.6) {};
                \node[vertex] (V3) at (2.2, -1.6) {};
                \node[vertex] (V4) at (1.8, -1.6) {};
                \node[fit=(U3)(U4), rounded corners=3pt, draw=lipicsGray!40] (x1) {};
                \node[fit=(V2)(V3)(V4), rounded corners=3pt, draw=lipicsGray!40] (x2) {};
                \node at ($(x2.south) + (0,-.2)$) {\small$V_2$};
                \node at ($(x1.north) + (0,.2)$) {\small$V_1$};
                \node (f1) at (2.81, .1) {};
                \node (f2) at (1.6, -2.1) {};
                \draw[sdedge] (U3) -- (V4);
                \draw[sdedge] (U3) -- (V2);
                \draw[sdedge] (V3) -- (U4) -- (V4);
                \node[fit=(f1)(f2), rounded corners=5pt, draw] (g) {};
                \node at ($(g.south) + (0,-.4)$) {$\bG$};

            \begin{scope}[yshift = -11em]
             \node (e1) at (-2.2, .1) {};
                \node (e2) at (.4, -2.1) {};
                \node (f1) at (2.81, .1) {};
                \node (f2) at (1.6, -2.1) {};
                \node[cvertex] (x1) at (-1.8, -1.5) {};
                \node[cvertex] (x2) at (-2, -.5) {};
                \node[cvertex] (x3) at (-1.6, -1) {};
                \node[fit=(x1)(x2)(x3), rounded corners=3pt, draw=lipicsGray!40,
                    fill=lipicsGray!10] {};
                \node[fit=(e1)(e2)(f1)(f2), rounded corners=5pt, draw] (f) {};
                \node at ($(f.south) + (0,-.4)$) {$H_{\bG}$};

                \node[cvertex] (x1) at (-1.7, -1.5) {};
                \node[cvertex] (x2) at (-2, -.5) {};
                \node[cvertex] (x3) at (-1.6, -.84) {};
                \node[cvertex] (x4) at (-2, -1.16) {};

                \node[vertex] (U3) at (2.6, -0) {};
                \node[vertex] (U4) at (1.1, -0) {};
                \node[vertex] (V2) at (2.6, -2) {};
                \node[vertex] (V3) at (1.85, -2) {};
                \node[vertex] (V4) at (1.1, -2) {};
                \draw[cedge] (U4) -- (x3) -- (V4);
                \draw[cedge] (U3) -- (x3) -- (V2);
                \draw[cedge] (x3) -- (V3);
                \draw[cedge] (U3) -- (x2);
                \draw[cedge] (U4) -- (x2);
                \draw[cedge] (V4) -- (x1) -- (x2) -- (x3) -- (x4);
                \draw[cedge] (V2) -- (x1);
                \draw[cedge] (V3) -- (x1);
                \draw[sdedge] (U3) -- (V4);
                \draw[sdedge] (U3) -- (V2);
                \draw[sdedge] (V3) -- (U4) -- (V4);
            
         \end{scope}
            \end{scope}
        \end{tikzpicture}
        \caption{Replacing an edge $e$ of $H$ by the $3$-by-$3$ biclique, the $3$-by-$3$ independent set, and the bipartite graph $\bG$.}\label{fig:ghat}
    \end{figure}

However, for this approach to work in general, it is essential that $\Phi(H_{\bI_{k,k}})=1$, i.e., that $H_{\bI_{k,k}}$ does not contain any forbidden induced subgraph. This suggests some kind of ``minimal'' choice of the graph $H\in \obstr(\Phi)$ in the general case. Consider the graphs in Figure~\ref{fig:K4} as forbidden induced subgraphs. The graph on the left has fewer vertices and edges. However, when replacing any of its edges by $\bI_{2,2}$, we obtain the graph on the right. So it does not suffice to look at the number of vertices or edges alone.

\begin{figure}[tbh]
        \centering
        \begin{tikzpicture}[scale=0.85]
            \centering
            \begin{scope}
            
                \node[vertex] (U1) at (0, -0) {};
                \node[vertex] (U2) at (0, -2) {};
                \node[vertex] (V1) at (-2, 0) {};
                \node[vertex] (V2) at (-2, -2) {};
               
                \draw[sdedge] (U1) -- (V1);
                \draw[sdedge] (U1) -- (V2);
                \draw[sdedge] (U2) -- (V1);
                \draw[sdedge] (U2) -- (V2);
                \draw[sdedge] (U1) -- (U2);
                \draw[sdedge] (V1) -- (V2);

                \node (e1) at (-2.7, .1) {};
                \node (e2) at (0.7, -2.1) {};
               
                \node[fit=(e1)(e2), rounded corners=5pt, draw] (f) {};
                \node at ($(f.south) + (0,-.4)$) {$H_1$};

                \node[vertex] (U1) at (6, 0) {};
                \node[vertex] (U3) at (5.5, -0.25) {};
                \node[vertex] (U2) at (6, -2) {};
                \node[vertex] (U4) at (5.5, -1.75) {};
                \node[vertex] (V1) at (4, 0) {};
                \node[vertex] (V2) at (4, -2) {};
               
                \draw[sdedge] (U1) -- (V1);
                \draw[sdedge] (U2) -- (V1);
                \draw[sdedge] (U3) -- (V1);
                \draw[sdedge] (U4) -- (V1);
                \draw[sdedge] (U1) -- (V2);
                \draw[sdedge] (U2) -- (V2);
                \draw[sdedge] (U3) -- (V2);
                \draw[sdedge] (U4) -- (V2);
                \draw[sdedge] (V1) -- (V2);

                \node (e1) at (3.3, .1) {};
                \node (e2) at (6.7, -2.1) {};
               
                \node[fit=(e1)(e2), rounded corners=5pt, draw] (g) {};
                \node at ($(g.south) + (0,-.4)$) {$H_2$};
            \end{scope}
         \end{tikzpicture}
        \caption{$H_1=K_4$ and $H_2$ which can be obtained from $K_4$ by replacing an edge by $\bI_{2,2}$.}\label{fig:K4}
    \end{figure}
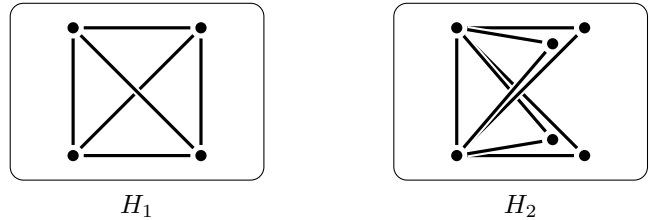

For the choice of $H$ it turns out to be helpful to consider sets of vertices in $H$ that have identical neighbourhood (so-called \emph{false twins}). The false twin relation partitions the vertices of $H$ into \emph{blocks}. By $H\down$ we denote the corresponding quotient graph, in which each block is replaced by a single vertex. We refer to it as the \emph{twin-free quotient}.\footnote{The twin-free quotient was implicitly used in~\cite{RothSW20} as well, although in a much less general reduction.}
Note that the twin-free quotient of the left-hand graph in Figure~\ref{fig:K4} is $K_4$ itself, whereas the twin-free quotient of the right-hand graph is $K_4$ minus an edge.

For our refined construction that will ultimately lead to the definition of the bipartite property $\Psi_\Phi$,
we will choose a graph $H$ from $\obstr(\Phi)$ for which $H\down$ has minimal number of edges. For an edge $\{u_1,u_2\}$ in $H$ let $B_1$ and $B_2$ be the blocks containing $u_1$ and $u_2$, respectively. Given a bipartite graph $\bG=(V_1, V_2, E)$ we define a graph $F_{\bG}$, where we now replace not only the edge $\{u_1,u_2\}$ but the complete bipartite graph induced by $B_1$ and $B_2$, and insert in its stead the graph $\bG$, where $V_1$ replaces $B_1$, and $V_2$ replaces $B_2$, see Figure~\ref{fig:implant} for an example. As in the initial construction, for sufficiently large $k$, $H$ is an induced subgraph of $F_{\bB_{k,k}}$, which implies $\Phi(F_{\bB_{k,k}})=0$. However, one can also show that for every induced subgraph $F'$ of $F_{\bI_{k,k}}$, it holds that $|E(F'\down)|\le |E(F_{\bI_{k,k}}\down)| < |E(H\down)|$, and so, by our choice of $H$, $F'$ is not in $\obstr(\Phi)$, i.e., it is not a forbidden induced subgraph of $\Phi$. Consequently, $\Phi(F_{\bI_{k,k}})=1$ as intended.
This way we establish that the bipartite property $\Psi_\Phi$ with $\Psi_\Phi(\bG)\coloneqq\Phi(F_{\bG})$ distinguishes, for sufficiently large $k$, the independent set $\bI_{k,k}$ from the biclique $\bB_{k,k}$ and thereby satisfies the requirements of Theorem~\ref{thm:main_bipartite}. This shows that $\#\bipindsubsprob(\Psi_\Phi)$ is $\#\W{1}$-hard with the corresponding conditional lower bound.

 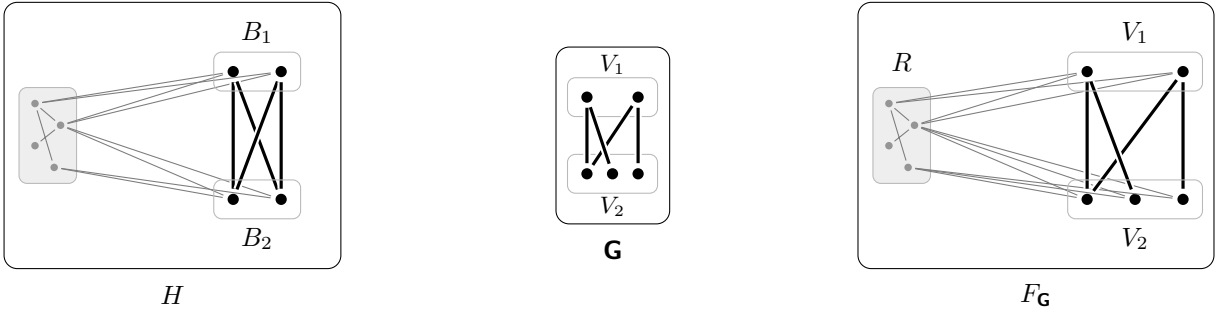
\begin{figure}[t]
        \centering
        \begin{tikzpicture}[scale=0.85]

            \begin{scope}[]
                \node (e1) at (-2.2, .8) {};
                \node (e2) at (.4, -2.8) {};
                \node (f1) at (2.5, .8) {};
                \node (f2) at (1.6, -2.8) {};
                \node[cvertex] (x1) at (-1.8, -1.5) {};
                \node[cvertex] (x2) at (-2, -.5) {};
                \node[cvertex] (x3) at (-1.6, -1) {};
                \node[fit=(x1)(x2)(x3), rounded corners=3pt, draw=lipicsGray!40,
                    fill=lipicsGray!10] {};
                \node[fit=(e1)(e2)(f1)(f2), rounded corners=5pt, draw] (f) {};
                \node at ($(f.south) + (0,-.4)$) {$H$};

                \node[cvertex] (x1) at (-1.7, -1.5) {};
                \node[cvertex] (x2) at (-2, -.5) {};
                \node[cvertex] (x3) at (-1.6, -.84) {};
                \node[cvertex] (x4) at (-2, -1.16) {};

                \node[vertex] (U4) at (1.1, -0) {};
                \node[vertex] (U2) at (1.85, -0) {};
                \node[vertex] (V3) at (1.85, -2) {};
                \node[vertex] (V4) at (1.1, -2) {};
                \draw[cedge] (U4) -- (x3) -- (V4);
                \draw[cedge] (x3) -- (V3);
                \draw[cedge] (U4) -- (x2);
                \draw[cedge] (V4) -- (x1) -- (x2) -- (x3) -- (x4);
                \draw[cedge] (V3) -- (x1);
                \draw[cedge] (U2) -- (x3);
                \draw[cedge] (U2) -- (x2);
                
                \draw[sdedge] (U4) -- (V4);
                \draw[sdedge] (U4) -- (V3);
                \draw[sdedge] (U2) -- (V4);
                \draw[sdedge] (U2) -- (V3);
                
                \node[fit=(U2)(U4), rounded corners=3pt, draw=lipicsGray!40] (x1) {};
                \node[fit=(V3)(V4), rounded corners=3pt, draw=lipicsGray!40] (x2) {};
                \node at ($(x1.north) + (0,.3)$) {$B_1$};
                \node at ($(x2.south) + (0,-.3)$) {$B_2$};
            \end{scope}

             \begin{scope}[xshift=13.75em]
                \node[vertex] (U3) at (2.6, -0.4) {};
                \node[vertex] (U4) at (1.8, -0.4) {};
                \node[vertex] (V2) at (2.6, -1.6) {};
                \node[vertex] (V3) at (2.2, -1.6) {};
                \node[vertex] (V4) at (1.8, -1.6) {};
                \node[fit=(U3)(U4), rounded corners=3pt, draw=lipicsGray!40] (x1) {};
                \node[fit=(V2)(V3)(V4), rounded corners=3pt, draw=lipicsGray!40] (x2) {};
                \node at ($(x2.south) + (0,-.2)$) {\small$V_2$};
                \node at ($(x1.north) + (0,.2)$) {\small$V_1$};
                \node (f1) at (2.81, .1) {};
                \node (f2) at (1.6, -2.1) {};
                \draw[sdedge] (U3) -- (V4);
                \draw[sdedge] (U3) -- (V2);
                \draw[sdedge] (V3) -- (U4) -- (V4);
                \node[fit=(f1)(f2), rounded corners=5pt, draw] (g) {};
                \node at ($(g.south) + (0,-.4)$) {$\bG$};
            \end{scope}

            \begin{scope}[xshift=38em]
             \node (e1) at (-2.2, .8) {};
                \node (e2) at (.4, -2.8) {};
                \node (f1) at (2.8, .8) {};
                \node (f2) at (1.6, -2.8) {};
                \node[cvertex] (x1) at (-1.8, -1.5) {};
                \node[cvertex] (x2) at (-2, -.5) {};
                \node[cvertex] (x3) at (-1.6, -1) {};
                \node[fit=(x1)(x2)(x3), rounded corners=3pt, draw=lipicsGray!40,
                    fill=lipicsGray!10] (z) {};
                \node at ($(z.north) + (0,.4)$) {$R$};
                \node[fit=(e1)(e2)(f1)(f2), rounded corners=5pt, draw] (f) {};
                \node at ($(f.south) + (0,-.4)$) {$F_{\bG}$};

                \node[cvertex] (x1) at (-1.7, -1.5) {};
                \node[cvertex] (x2) at (-2, -.5) {};
                \node[cvertex] (x3) at (-1.6, -.84) {};
                \node[cvertex] (x4) at (-2, -1.16) {};

                \node[vertex] (U3) at (2.6, -0) {};
                \node[vertex] (U4) at (1.1, -0) {};
                \node[vertex] (V2) at (2.6, -2) {};
                \node[vertex] (V3) at (1.85, -2) {};
                \node[vertex] (V4) at (1.1, -2) {};
                \draw[cedge] (U4) -- (x3) -- (V4);
                \draw[cedge] (U3) -- (x3) -- (V2);
                \draw[cedge] (x3) -- (V3);
                \draw[cedge] (U3) -- (x2);
                \draw[cedge] (U4) -- (x2);
                \draw[cedge] (V4) -- (x1) -- (x2) -- (x3) -- (x4);
                \draw[cedge] (V2) -- (x1);
                \draw[cedge] (V3) -- (x1);
                \draw[sdedge] (U3) -- (V4);
                \draw[sdedge] (U3) -- (V2);
                \draw[sdedge] (V3) -- (U4) -- (V4);
                \node[fit=(U3)(U4), rounded corners=3pt, draw=lipicsGray!40] (x1) {};
                \node[fit=(V2)(V3)(V4), rounded corners=3pt, draw=lipicsGray!40] (x2) {};
                \node at ($(x1.north) + (0,.3)$) {$V_1$};
                \node at ($(x2.south) + (0,-.3)$) {$V_2$};
         \end{scope}
        \end{tikzpicture}
        \caption{Replacing a pair of adjacent blocks $B_1$ and $B_2$ by a bipartite graph $\bG$.}\label{fig:implant}
    \end{figure}

It is worth to mention that Conjecture~\ref{conj:main} was previously confirmed for hereditary properties that only have a \emph{single forbidden induced subgraph}, i.e., for the case $|\obstr(\Phi)|=1$~\cite{RothSW20}. The corresponding proof uses the idea of replacing a single edge that we described in the initial construction. It then boils down to a reduction from counting independent sets. In our work, we significantly generalise the gadget construction, and by using $\#\bipindsubsprob(\Psi_\Phi)$, where $\Psi_\Phi$ depends on $\Phi$, we also broaden the class of problems we reduce from. Note that the problem of counting independent sets is a very special case of a property that distinguishes independent sets from bicliques.

We continue by giving an overview of the tight parameterised reduction from $\#\bipindsubsprob(\Psi_\Phi)$ to $\#\indsubsprob(\Phi)$.
Given a bipartite graph $\bG=(V_1, V_2, E)$ together with a  positive integer $k$ as input to $\#\bipindsubsprob(\Psi_\Phi)$ let $F_{\bG}$ be as defined previously. Let $R=V(F_{\bG})\setminus (V_1 \cup V_2)$, so $R$ contains the vertices of $H$ that are outside of $B_1$ and $B_2$ (which were replaced by $V_1$ and $V_2$ in the construction of $F_{\bG}$). Let $k'=k+|R|$. One can show that for the sought-for number of $k$-vertex induced subgraphs of $\bG$ that satisfy $\Psi_\Phi$ we have
\[
    \#\bipindsubs{\Psi_\Phi,k}{\bG} = \#\{S \in \indsubs{\Phi,k'}{F_{\bG}} \mid R \subseteq S\}.
\]
Then, from the standard inclusion-exclusion principle, it follows that 
\[
    \#\{S \in \indsubs{\Phi,k'}{F_{\bG}} \mid R \subseteq S\} = \sum_{J\subseteq R}{(-1)^{|J|} \cdot \indsubs{\Phi,k'}{F_{\bG}\setminus J}},
\]
where $F_{\bG}\setminus J$ is the graph obtained from $F_{\bG}$ by deleting the vertices in $J$.
Thus, an algorithm that makes $2^{|R|}\in O(1)$ oracle calls, each of the form $(F_{\bG}\setminus J, k')$, can compute the  value $\#\bipindsubs{\Psi_\Phi,k}{\bG}$, which gives the sought-for reduction, i.e., the connection between the bipartite property $\Psi_\Phi$ and the original graph property $\Phi$.

There is an additional ingredient which so far we have swept under the rug. In the definition of $F_{\bG}$ we replaced two adjacent blocks in the chosen graph $H$ by the graph $\bG$. It is important that the blocks $B_1$ and $B_2$ share an edge. (Since they are blocks, this means that there is a complete set of edges between them.) However, it is possible that $\Phi$ contains an independent set as forbidden induced subgraph. It would follow that the graph $H$ in $\obstr(\Phi)$ with edge-minimal $H\down$ is an independent set, which would spoil the construction. In this case it helps to consider a closely related property. For each graph $G$, let $\overline{G}$ be the complement of $G$. Then we define $\overline{\Phi}$ with $\overline{\Phi}(G)=\Phi(\overline{G})$. It is known that $\overline{\Phi}$ is hereditary if and only if $\Phi$ is hereditary. Furthermore, $\#\indsubsprob(\Phi)$ and $\#\indsubsprob(\overline{\Phi})$ are known to be tightly interreducible by parameterised reductions~\cite{RothSW20}. So, for all our purposes, we are free to work with either one of $\Phi$ or $\overline{\Phi}$. By a simple application of Ramsey's theorem, we show that every hereditary property $\Phi$, for which both $\Phi$ and $\overline{\Phi}$ have an independent set as forbidden induced subgraph, has to be \meagre.
Conversely, if $\Phi$ is not \meagre then at least one of $\Phi$ or $\overline{\Phi}$ is a suitable candidate for our construction.

The approach of utilising bipartite properties and implanting bipartite graphs into some fixed graph $H$ that depends on the graph property $\Phi$ is not only applicable to hereditary properties. With a slightly different construction we also prove Theorem~\ref{thm:homequivalence}, the classification for properties that are invariant under homomorphic equivalence. In this case it suffices to implant a graph $\bG$ into an edge $e$ in a graph $H$ for which $\Phi(H)\neq \Phi(H-e)$ holds. We omit further details but it is worth to point out that we actually classify a less natural but even more general class of properties. A graph property $\Phi$ is \emph{twin-invariant} if, for each pair of graphs $H_1$ and $H_2$ that have isomorphic twin-free quotients\footnote{In the definition we can even get away with only considering graphs whose twin-free quotient contains at least two vertices. This is a technicality which makes the class of covered properties more general and ensures that this result covers, for instance, also the property of being (dis)connected, which was of interest in some of the earlier works~\cite{JerrumM15, RothS18}.}, we have $\Phi(H_1)=\Phi(H_2)$.
This criterion covers previously unclassified properties such as ``disconnected or bipartite'' or ``disconnected or triangle-free'' but more importantly it is not hard to see that every property that is invariant under homomorphic equivalence is also twin-invariant. Thus, Theorem~\ref{thm:homequivalence} is a direct consequence of the following more general result.

\begin{restatable}{theorem}{twinequivalence}\label{thm:twinequivalence}
Let $\Phi$ be a computable twin-invariant graph property. If $\Phi$ is \meagre then $\#\indsubsprob(\Phi)$ is solvable in polynomial time. Otherwise, $\#\indsubsprob(\Phi)$ is $\#\W{1}$-complete and, assuming ETH, cannot be solved in time $f(k)\cdot |G|^{o(k)}$ for any function $f$.
\end{restatable}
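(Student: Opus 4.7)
The \meagre case follows from the generic observation in Section~\ref{sec:prelim}: every \meagre property admits a trivial polynomial-time counting algorithm. For the non-\meagre direction the plan is to mimic the strategy from the hereditary case but with a simpler gadget that implants a bipartite graph in place of a single edge of a suitably chosen host graph $H$, and then to appeal to Theorem~\ref{thm:main_bipartite}.

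First I would locate a single-edge witness for non-triviality. Since $\Phi$ is computable and not \meagre, there exists some $k$ on which $\Phi$ is non-constant; picking $H_1, H_2$ on a common vertex set with $\Phi(H_1) \neq \Phi(H_2)$ and interpolating one edge-flip at a time yields a graph $H$ and an edge $e = \{u_1, u_2\} \in E(H)$ with $\Phi(H) \neq \Phi(H - e)$. Next I would define the implant: for a bipartite graph $\bG = (V_1, V_2, E_\bG)$, let $F_\bG$ be obtained from $H$ by deleting $u_1$ and $u_2$, inserting the vertices $V_1$ and $V_2$ in their place, making every vertex of $V_i$ adjacent to $N_H(u_i)\setminus\{u_{3-i}\}$, and adding the edges of $\bG$ between $V_1$ and $V_2$. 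Setting $\Psi_\Phi(\bG) := \Phi(F_\bG)$ yields a well-defined computable bipartite property, because any consistent isomorphism of $\bG$ extends to $F_\bG$ by the identity on $R := V(H)\setminus\{u_1,u_2\}$.

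The crucial observation is that in $F_{\bB_{k,k}}$ all vertices of $V_1$ share the open neighbourhood $V_2 \cup (N_H(u_1)\setminus\{u_2\})$ and hence are pairwise false twins, and symmetrically for $V_2$; after collapsing these twin-classes one recovers $H$. Likewise $F_{\bI_{k,k}}$ collapses to $H-e$. Hence $F_{\bB_{k,k}}$ and $H$ have isomorphic twin-free quotients, as do $F_{\bI_{k,k}}$ and $H-e$, so twin-invariance of $\Phi$ yields $\Psi_\Phi(\bB_{k,k}) = \Phi(H) \neq \Phi(H-e) = \Psi_\Phi(\bI_{k,k})$ for \emph{every} positive integer~$k$, in particular every prime. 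The hypothesis of Theorem~\ref{thm:main_bipartite} is therefore satisfied with the dense set $\mathcal{K}$ of all primes, so $\#\bipindsubsprob(\Psi_\Phi)$ is $\#\W{1}$-hard and, under ETH, not solvable in time $f(k)\cdot|G|^{o(k)}$.

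Finally, exactly as in the hereditary case, the identity $\#\bipindsubs{\Psi_\Phi,k}{\bG} = \#\{S \in \indsubs{\Phi,k+|R|}{F_\bG} : R \subseteq S\}$ combined with inclusion-exclusion over the constant-size set $R$ gives a tight parameterised Turing reduction from $\#\bipindsubsprob(\Psi_\Phi)$ to $\#\indsubsprob(\Phi)$, transferring both lower bounds. The main obstacle is the twin-collapse argument of the previous paragraph: one has to be careful that vertices of $R$ do not coincidentally merge with the new twin-classes in an unintended way, but since $\Phi$ is evaluated only through its twin-free quotient, any such extra collapses are harmless and occur symmetrically in $F_{\bG}$ and in $H$ or $H-e$, so the argument should go through cleanly.
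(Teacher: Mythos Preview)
Your non-\meagre argument is essentially identical to the paper's: the paper factors it as Lemma~\ref{lem:psi_phi_twin} (find $H$ and $e$ with $\Phi(H)\neq\Phi(H-e)$, observe that $F_{\bB_{k,k}}$ is twin-equivalent to $H$ and $F_{\bI_{k,k}}$ to $H-e$) plus Theorem~\ref{thm:technicalhardmain} (which packages Theorem~\ref{thm:main_bipartite} together with the inclusion--exclusion reduction of Lemma~\ref{lem:implantreduction}). Your worry about coincidental merges in $R$ is unnecessary: $F_{\bB_{k,k}}$ is literally $H$ with $k-1$ false twins added to each of $u_1,u_2$, so $F_{\bB_{k,k}}\down\cong H\down$ directly, and likewise $F_{\bI_{k,k}}\down\cong (H-e)\down$.

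There is, however, a genuine gap in the \meagre direction. The generic result in Section~\ref{sec:prelim} (Lemma~\ref{lem:meagre_easy}) only yields \emph{fixed-parameter tractability}, not polynomial time; the paper explicitly remarks right after that lemma that one cannot strengthen this to polynomial time for arbitrary computable \meagre properties. To obtain the polynomial-time claim of Theorem~\ref{thm:twinequivalence} you must use twin-invariance once more: since $\Phi$ is closed under the addition of false twins, any graph $H$ with $\Phi(H)=1$ and $|V(H)|\geq 2$ can be blown up to every size $\geq|V(H)|$, so a \meagre twin-invariant $\Phi$ is either constant false or there is a constant $B$ with $\Phi$ true on all graphs of size at least $B$. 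In either of these two cases the algorithm of Lemma~\ref{lem:meagre_easy} runs in genuinely polynomial time. Without this extra step your argument only establishes FPT, which is weaker than what the theorem asserts.
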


\subsection{Further Related Work}\label{sec:furtherWork}
The decision version of $\#\indsubsprob(\Phi)$ for hereditary properties $\Phi$ was studied and fully classified by Khot and Raman~\cite{KhotR02} --- their results are summarised in Table~\ref{tab:results} --- and also by Eppstein, Gupta and Havvaei~\cite{EppsteinGH21} who additionally restricted the problem to hereditary classes of input graphs. Furthemore, if for each $k$, the property $\Phi$ is true for at most one $k$-vertex graph, the work of Chen, Thurley and Weyer~\cite{ChenTW08} establishes hardness for both, decision and counting, whenever $\Phi$ is not \meagre.\footnote{We remark that our notion of \meagre coincides with their notion of \meagre in the special case where $\Phi$ is true for at most one $k$-vertex graph for each $k$, which applies, e.g., to the properties of being a path, a cycle, or a matching.}

The complexity of computing an $\varepsilon$-approximation of $\#\indsubsprob(\Phi)$ was investigated by Jerrum and Meeks in a sequence of papers~\cite{JerrumM15,JerrumM15density,Meeks16,JerrumM17}, and in case of hereditary properties, it was ultimately resolved by Meeks~\cite{Meeks16} (see Table~\ref{tab:results}). For more general classes of properties, there are only partial results, and to the best of our knowledge the complexity of approximating $\#\indsubsprob(\Phi)$ is still open for edge-monotone properties. However, there are strong recent meta-theorems such as the $k$-Hypergraph framework due to Dell, Lapinskas, and Meeks~\cite{DellLM20} that yield efficient approximation algorithms for $\#\indsubsprob(\Phi)$ by reduction to vertex-coloured decision problems.

Most results on $\#\indsubsprob(\Phi)$ are concerned with hardness of exact counting; we list them chronologically.
\begin{itemize}
    \item In~\cite{JerrumM15}, Jerrum and Meeks proved that $\#\indsubsprob(\Phi)$ always belongs to $\#\W{1}$, given that $\Phi$ is computable. Additionally, $\#\W{1}$-hardness was established for the property $\Phi$ of being connected.
    \item In~\cite{JerrumM15density}, $\#\W{1}$-hardness was proved by the same authors for all properties with low edge-densities. This covers, for example, all non-trivial minor-closed properties.
    \item In the survey paper of Meeks~\cite{Meeks16} a $\#\W{1}$-hardness result was established for properties that are closed under the addition of edges, and whose edge-minimal elements have unbounded treewidth.
    \item In~\cite{JerrumM17}, Jerrum and Meeks proved $\#\W{1}$-hardness for the property of having an even, or an odd, number of edges.
    \item In the breakthrough paper of Curticapean, Dell and Marx~\cite{CurticapeanDM17}, the principle of complexity monotonicity was introduced, and it was shown that $\#\indsubsprob(\Phi)$ is always either fixed-parameter tractable or $\#\W{1}$-hard, given that $\Phi$ is computable. (See Theorem~\ref{thm:intro_monotonicity}.)
    \item In~\cite{RothS18}, Roth and Schmitt established $\#\W{1}$-hardness and a tight lower bound under ETH for edge-monotone properties that are non-\meagre and false on odd cycles.
    \item In~\cite{DorflerRSW22}, D\"orfler, Roth, Schmitt, and Wellnitz introduced the ``algebraic approach to hardness'' and established $\#\W{1}$-hardness for properties that distinguish independent sets from so-called wreath graphs. As a concrete example, their result applies to monotone properties that are non-trivial on bipartite graphs, in which case a tight lower bound under ETH is also achieved.
    \item Finally, in~\cite{RothSW20}, Roth, Schmitt and Wellnitz established $\#\W{1}$-hardness and an \emph{almost} tight conditional lower bound of the form $f(k)\cdot n^{o(k/\sqrt{\log k})}$ for non-\meagre monotone properties. More generally, they proved the result for any property with so-called $f$-vectors of small Hamming weight; we refer the reader to~\cite[Sections 3 and 4]{RothSW20} for a detailed exposition but remark that hereditary properties do not, in general, have $f$-vectors with small enough Hamming weight for the result in the current paper to be covered by their meta theorem.
    
    Additionally, Roth, Schmitt and Wellnitz proved Conjecture~\ref{conj:main} for the restricted case of hereditary properties that are defined by a single forbidden induced subgraph.
\end{itemize}

None of the previous results comes close to resolving Conjecture~\ref{conj:main} for all hereditary properties. In particular, since every monotone property is also hereditary, we not only subsume the classification for monotone properties from~\cite{RothSW20}, but we also improve the conditional lower bound from almost tight to tight.

\subsection{Open Problems}\label{sec:conclusions}
We conclude our presentation with two open problems and suggestions for further work. 

First, the most important open question is whether Conjecture~\ref{conj:main} is indeed true for all computable properties. With the case of hereditary properties now being resolved, the other central remaining family of properties with a natural closure condition is the class of all edge-monotone properties. Between the results from~\cite{Meeks16} and~\cite{RothS18}, large classes of edge-monotone properties are already covered, but there remains a significant gap towards a complete understanding. For example, none of the existing partial results resolves the complexity of $\#\indsubsprob(\Phi)$ for the following (slightly artificial) edge-monotone property:
\begin{center}
    $\Phi(H)=1$ if and only if $H$ is bipartite or has no apex.\footnote{An apex is a vertex adjacent to all other vertices.}
\end{center}
For this reason, we suggest to tackle Conjecture~\ref{conj:main} for the case of edge-monotone properties as a concrete next step. Our hope is that the technical framework that we introduce in the work at hand will help to close the remaining gap for edge-monotone properties as well.

However, we wish to point out that, in this case, strengthening the intractability part of Conjecture~\ref{conj:main} by additionally aiming for tight conditional lower bounds might require hardness assumptions stronger than ETH. The reason for the latter is the existence of artificial ``non-dense'' edge-monotone properties, such as
\[\Phi(H) = \begin{cases} 1 & \exists \ell: H = I_{a(\ell,\ell)}  \\ 0 & \text{otherwise}\,, \end{cases} \]
where $a$ is the Ackermann function. Observe that $\Phi$ is closed under the removal of edges (but not under the removal of vertices). It is easy to establish $\#\W{1}$-hardness of $\#\indsubsprob(\Phi)$ by reducing from the parameterised problem of counting independent sets using standard methods. However, the parameter explodes drastically in this reduction due to the fact that we can only reduce to instances of $\#\indsubsprob(\Phi)$ in which $k$ is in the image of the Ackermann function. This prevents us from coming even close to a tight ETH-based lower bound. One way to circumvent this problem is to restrict ourselves to graph properties that are \emph{dense} in the sense that the set of $\ell$ for which $\Phi$ is non-trivial on $\ell$-vertex graphs is a dense enough subset of the natural numbers; this approach was formalised and used in previous work~\cite{RothS18,DorflerRSW22,RothSW20}.

Finally, given that this work provides even more evidence for the intractability of $\#\indsubsprob(\Phi)$, we stress that a relaxation of the problem is unavoidable if efficient algorithms are sought. The obvious and most promising candidate for such a relaxation is to only aim for an approximation of the solution. 

As summarised in Table~\ref{tab:results}, Meeks~\cite{Meeks16} explicitly and exhaustively identified those hereditary properties $\Phi$ for which $\#\indsubsprob(\Phi)$ admits an FPTRAS. In particular, our main result shows that Meeks' result cannot be strengthened to yield fixed-parameter tractability of exact counting, unless ETH fails. 

However, for general (not necessarily hereditary) properties much less is known about the complexity of approximating $\#\indsubsprob(\Phi)$; again, we refer the reader to the survey of Meeks for a comprehensive overview~\cite{Meeks16}.

\subsection*{Acknowledgements}
We are very grateful to Johannes Schmitt and Philip Wellnitz for helpful comments on early and recent drafts of this work.

\newpage

\section{Preliminaries}\label{sec:prelim}\label{sec:background} 
Given a finite set $S$, we write $|S|$, or $\#S$ for its cardinality. Graphs in this work are undirected and without self-loops. We assume further that graphs are encoded by their adjacency matrix.\footnote{As we aim to establish lower bounds ruling out algorithms with running time $f(k)\cdot |G|^{o(k)}$ any reasonable encoding of graphs yields the same lower bounds.} Given a graph $G$, we write $\overline{G}$ for its complement, that is, $V(\overline{G})=V(G)$ and for each pair of distinct vertices $u,v\in V(G)$ we have $\{u,v\}\in E(\overline{G})$ if and only if $\{u,v\}\notin E(G)$.

\paragraph*{Homomorphisms, Induced Subgraphs, and Coloured Variants}
A \emph{homomorphism}  from a graph $G$ to a graph $H$ is an edge-preserving mapping $\varphi$ from $V(G)$ to $V(H)$, that is, for every edge $\{u,v\}\in E(G)$ we have $\{\varphi(u),\varphi(v)\}\in E(H)$. Such a homomorphism is also referred to as an $H$-colouring of $G$. Injective homomorphism are also known as \emph{embeddings}. A \emph{strong embedding} is an embedding $\varphi$ for which  $\{u,v\}\in E(G)$ if and only if $\{\varphi(u),\varphi(v)\}\in E(H)$. A bijective strong embedding is an \emph{isomorphism}, and an isomorphism from a graph $G$ to itself is an \emph{automorphism} of $G$.

Given a graph $G$ with an $H$-colouring $c$, we say that a homomorphism $\varphi$ from $H$ to $G$ is \emph{$c$-colour-prescribed} if, for each $v\in V(H)$, we have $c(\varphi(v))=v$, that is, $\varphi$ maps a vertex $v\in V(H)$ to a vertex in $G$ which is coloured with $v$. If the choice of $c$ is not specified or clear from context we just say that $\varphi$ is colour-prescribed. Note that a $c$-colour-prescribed homomorphism can only exist if $c$ is surjective. Therefore, we will generally only be interested in $H$-colourings $c$ that are surjective, as otherwise the computational problems we consider become trivial.

Given a subset $A$ of the edges of a graph $H$, we write $H[A]\coloneqq (V(H),A)$ for the edge-subgraph of $H$ with respect to $A$; note that $H[A]$ might contain isolated vertices. Note that we slightly overload notation here: For a subset of vertices $S\subseteq V(H)$, the expression $H[S]$ denotes the induced subgraph $(S,E')$ of $H$, where $E'\subseteq E(H)$ contains all edges of $H$ between vertices in $S$.

The notion of colour-prescribed homomorphisms readily extends to edge-subgraphs, and we will mostly use it in this context: Given a graph $H$, a subset $A\subseteq E(H)$, and an $H$-colouring $c$ of a graph $G$, a homomorphism $\varphi$ from $H[A]$ to $G$ is \emph{($c$-)colour-prescribed} if, for each $v\in V(H)$, we have $c(\varphi(v))=v$.

We write $\cphoms{H[A]}{c}{G}$ for the set of all $c$-colour-prescribed homomorphisms from $H[A]$ to $G$\footnote{We decided to make the $H$-colouring $c$ of $G$ explicit in our notation of colour-prescribed homomorphisms by using $\to_c$. Let us point out, however, that $\cphoms{H[A]}{c}{G}$ is the same as $\cphoms{H[A]}{H}{G}$ in~\cite{DorflerRSW22}, where the colouring $c$ was assumed to be given implicitly.}.
Following the convention of~\cite{DorflerRSW22}, it will be convenient to write $\cpindsubs{H[A]}{c}{G}$ for the set of all images of $c$-colour-prescribed strong embeddings from $H[A]$ to $G$. Formally, $\cpindsubs{H[A]}{c}{G}$ contains all $S\subseteq V(G)$ with $|S|=|V(H)|$, $c(S)=V(H)$, and $\{u,v\}\in E(G[S])$ if and only if $\{c(u),c(v)\}\in A$. In other words, it contains all subsets $S$ of vertices of $G$ for which $c\vert_S$ is an isomorphism from $G[S]$ to $H[A]$.

\paragraph*{False Twins and Twin-free Quotients}
Two vertices $u$ and $v$ of a graph $H$ are called \emph{false twins} if they have the same neighbourhood; note that the latter implies that $u$ and $v$ are not adjacent since we do not allow self-loops.

The notion of false twins induces an equivalence relation on $V(H)$ in the canonical way: Two vertices are set to be equivalent if and only if they are false twins. The equivalence classes are called \emph{blocks}, and we define the \emph{twin-free quotient} $H\down$ of $H$ as the quotient graph w.r.t.\ these blocks, that is, $H\down$ contains a vertex for each block, and two blocks $B_1$ and $B_2$ are adjacent if and only if there are vertices $u\in B_1$ and $v\in B_2$ such that $\{u,v\}\in E(H)$. Note that, by definition of false twins, the latter is in fact equivalent to $H$ containing a complete set of edges between the vertices in $B_1$ and the vertices in $B_2$.

We will be using the fact that the twin-free quotient of an induced subgraph can never have more edges than the twin-free quotient of the entire graph.
\begin{lemma}\label{lem:X}
    Let $H$ be a graph and let $S$ be a subset of its vertices. Then $H[S]\down$ has at most as many edges as $H\down$.
\end{lemma}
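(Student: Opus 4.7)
The plan is to exhibit an injective map $\eta$ from $E(H[S]\down)$ to $E(H\down)$. Since the blocks of $H[S]$ can in principle be coarser than the ``restriction'' of the blocks of $H$ to $S$ (two vertices in different blocks of $H$ may become false twins in $H[S]$ once the distinguishing neighbours are dropped), there is no canonical map on blocks. I would sidestep this by choosing a representative $r(\tilde{B}) \in \tilde{B}$ for each block $\tilde{B}$ of $H[S]$ and defining
\[
    \eta\bigl(\{\tilde{B}_1,\tilde{B}_2\}\bigr) \;\coloneqq\; \{B_H(r(\tilde{B}_1)),\, B_H(r(\tilde{B}_2))\},
\]
where $B_H(v)$ denotes the block of $H$ containing $v$.

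First I would check that $\eta$ lands in $E(H\down)$. The key preliminary observation is that if $\{\tilde{B}_1,\tilde{B}_2\}$ is an edge of $H[S]\down$, then \emph{every} vertex of $\tilde{B}_1$ is adjacent to \emph{every} vertex of $\tilde{B}_2$ in $H[S]$, hence in $H$: this follows immediately from the fact that all vertices inside a block have identical neighbourhoods in $H[S]$. In particular $r(\tilde{B}_1)$ and $r(\tilde{B}_2)$ are adjacent in $H$, and since adjacent vertices cannot be false twins, they lie in distinct blocks of $H$, so the image is genuinely an edge of $H\down$.

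The core step is injectivity. Suppose $\eta$ sends two edges $\{\tilde{B}_1,\tilde{B}_2\}$ and $\{\tilde{B}_1',\tilde{B}_2'\}$ to the same unordered pair; after relabelling we may assume $B_H(r(\tilde{B}_i)) = B_H(r(\tilde{B}_i'))$ for $i=1,2$. Then $r(\tilde{B}_i)$ and $r(\tilde{B}_i')$ are false twins in $H$, so $N_H(r(\tilde{B}_i)) = N_H(r(\tilde{B}_i'))$. Intersecting with $S$ (both representatives lie in $S$) gives $N_{H[S]}(r(\tilde{B}_i)) = N_{H[S]}(r(\tilde{B}_i'))$, so they are false twins in $H[S]$ and therefore $\tilde{B}_i = \tilde{B}_i'$. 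Thus $\eta$ is injective and the desired inequality $|E(H[S]\down)| \le |E(H\down)|$ follows.

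I do not expect any genuine obstacle here; the only subtle point is the non-existence of a natural map on blocks, which is handled cleanly by the representative trick together with the observation that a block-to-block edge in $H[S]\down$ forces a complete bipartite set of edges in $H$ between the chosen representatives. Everything else is immediate from the definitions of false twins and induced subgraphs.
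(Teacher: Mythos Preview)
Your proof is correct. The well-definedness and injectivity arguments go through exactly as you describe; in particular, the point that false twins in $H$ remain false twins in $H[S]$ (obtained by intersecting neighbourhoods with $S$) is the crux, and you use it correctly.

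Your route differs from the paper's. The paper argues structurally: since false twins in $H$ stay false twins in $H[S]$, one has $H[S]\down \cong (H\down[S])\down$, and then the chain
\[
\#E(H[S]\down)=\#E((H\down[S])\down)\le \#E(H\down[S])\le \#E(H\down)
\]
finishes (quotienting can only merge edges; passing to an induced subgraph can only delete them). Your argument bypasses this isomorphism entirely and builds an explicit injection on edge sets via representatives. The paper's version is shorter and more conceptual, giving a reusable structural identity as a by-product; yours is more self-contained and avoids the slightly awkward notational convention (viewing $S$ as a subset of $V(H\down)$) that the paper has to footnote. Both rest on the same underlying fact about false twins surviving restriction.
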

\begin{proof}
    Since every pair of vertices in $S$ that are false twins in $H$ are also false twins in $H[S]$, it follows that $H[S]\down$ is isomorphic to $(H\down[S])\down$
    \footnote{Technically, we assume that each vertex of the quotient $H\down$ is obtained by identifying all vertices of the corresponding block, and thereby keeps all the corresponding labels. (So that $S$ is a subset of the vertices of $H\down$, where some vertices in $S$ might also refer to the same vertex of $H\down$.)}.
    So we have
    \[
        \#E(H[S]\down)=\#E((H\down[S])\down)\le \#E(H\down[S])\le \#E(H\down).
    \]
\end{proof}

\paragraph{Graph Properties}
A graph property $\Phi$ is a function from the set of graphs to $\{0,1\}$ that is invariant under graph isomorphisms, i.e., if $H_1$ and $H_2$ are isomorphic then $\Phi(H_1)=\Phi(H_2)$. A graph $H$ \emph{satisfies} $\Phi$ if $\Phi(H)=1$. 
We write $\overline{\Phi}$ for the \emph{inverse}\footnote{We avoid confusion and use the term ``inverse'' rather than ``complement'', which has different meanings in the literature.} of $\Phi$, which is defined as follows:
\[\overline{\Phi}(H)=1 \Leftrightarrow \Phi(\overline{H})=1\,,\]
where $\overline{H}$ is the previously defined complement of $H$.
A graph property $\Phi$ is \emph{$k$-trivial} for some positive integer $k$ if it is constant ($0$ or $1$) on all $k$-vertex graphs. The property $\Phi$ is \emph{\meagre}\footnote{Our notion of meagre graph properties generalises the notion of \emph{meagre graph classes} as introduced by Chen, Thurley and Weyer~\cite{ChenTW08}: If, for each $k$, $\Phi$ is true for at most one $k$-vertex graph, then $\#\indsubsprob(\Phi)$ is equivalent to the problem $\#\textsc{StrEmb}(\mathcal{C})$ as studied in~\cite{ChenTW08}, where $\mathcal{C}$ is the class of all graphs that satisfy $\Phi$. In this case $\Phi$ is meagre if and only if $\mathcal{C}$ is.} if there are only finitely many $k$ for which $\Phi$ is \emph{not} $k$-trivial. In other words $\Phi$ is \meagre if there are only finitely many $k$ for which there are $k$-vertex graphs $H_1$ and $H_2$ such that $\Phi(H_1)\neq\Phi(H_2)$. In particular, there is a constant $B$ such that $\Phi$ is $k$-trivial for all $k\geq B$.

A graph property $\Phi$ is \emph{hereditary} if it is closed under taking induced subgraphs, that is, for each graph $H$ and $S\subseteq V(H)$, $\Phi(H)=1$ implies $\Phi(H[S])=1$. Note that hereditary properties are precisely those properties that are closed under vertex-deletion. Each hereditary property $\Phi$ is defined by a (possibly infinite) set $\obstr(\Phi)$ of forbidden induced subgraphs such that $\Phi(H)=1$ if and only if no induced subgraph of $H$ is contained in $\obstr(\Phi)$. The following observation will be convenient in our main reduction. Note that the implication from item~\ref{item:obs2} to item~\ref{item:obs3} follows from a simple application of Ramsey's theorem, as mentioned in the footnote in Table~\ref{tab:results}.
\begin{obs}\label{obs:hereditarynontriv}
For a hereditary property $\Phi$ the following are equivalent:
\begin{enumerate}
    \item $\Phi$ is not meagre.
    \item $\Phi$ is true for infinitely many graphs and $\obstr(\Phi)\neq \emptyset$.\label{item:obs2}
    \item $\Phi$ is not constant true, and there are no $s$ and $t$ such that $\Phi(K_s)=\Phi(I_t)=0$\label{item:obs3} (cf.~Table~\ref{tab:results})
\end{enumerate}
\end{obs}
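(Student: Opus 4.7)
The plan is to prove the three equivalences cyclically as $(1) \Rightarrow (3) \Rightarrow (2) \Rightarrow (1)$. The key tools are Ramsey's theorem (to pass from local information at $K_s$ and $I_t$ to global constantness) and the hereditariness of $\Phi$ (to lift a single forbidden induced subgraph to infinitely many false instances of any desired size).

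For $(1) \Rightarrow (3)$, I first observe that the constant-true property is $k$-trivial for every $k$, hence meagre; so non-meagreness immediately gives non-constant-true. For the remaining part, I argue by contrapositive: assume there exist $s,t$ with $\Phi(K_s)=\Phi(I_t)=0$. By Ramsey, any graph on at least $R(s,t)$ vertices contains either $K_s$ or $I_t$ as an induced subgraph. Since $\Phi$ is hereditary, every such graph must satisfy $\Phi=0$, so $\Phi$ is constant false (in particular $k$-trivial) on all $k\geq R(s,t)$, and hence meagre.

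For $(3) \Rightarrow (2)$, not being constant true means there is some graph with $\Phi = 0$, and any inclusion-minimal such graph witnesses $\obstr(\Phi)\neq \emptyset$. For the infinite-true-set part, I argue by contrapositive: if $\Phi$ is true for only finitely many graphs, say all of size at most $N$, then $\Phi(K_{N+1})=\Phi(I_{N+1})=0$, contradicting the hypothesis. For $(2) \Rightarrow (1)$, fix $H_0\in\obstr(\Phi)$ with $n_0=|V(H_0)|$. For every $k\geq n_0$, the disjoint union of $H_0$ with $I_{k-n_0}$ is a $k$-vertex graph containing $H_0$ as induced subgraph, and so (by hereditariness) falsifies $\Phi$. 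On the other hand, since there are only finitely many graphs of each fixed size, the assumption that $\Phi$ is true for infinitely many graphs forces the set of sizes $k$ admitting a $k$-vertex true graph to be infinite. On the intersection of these two infinite sets of sizes, $\Phi$ is non-constant, so $\Phi$ is not $k$-trivial for infinitely many $k$, i.e., $\Phi$ is not meagre.

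The whole argument is essentially bookkeeping, and there is no real obstacle: the only substantive ingredient is Ramsey's theorem (used once, in $(1)\Rightarrow (3)$), and every other step is a direct application of the hereditary closure property or an elementary finiteness argument. The only subtle point worth flagging in the write-up is that, in $(2)\Rightarrow (1)$, one must separately exhibit both a true and a false $k$-vertex graph for infinitely many $k$; the false side is delivered uniformly for all sufficiently large $k$ by appending isolated vertices to $H_0$, while the true side uses the pigeonhole observation that infinitely many graphs must be spread across infinitely many sizes.
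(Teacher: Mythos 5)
Your proof is correct and complete. The paper itself leaves this Observation unproved (it only remarks that Ramsey's theorem gives the implication from item 2 to item 3), and the key step in your argument --- using Ramsey to show that $\Phi(K_s)=\Phi(I_t)=0$ forces $\Phi$ to be eventually false --- is exactly the same application the paper has in mind, merely packaged into your $(1)\Rightarrow(3)$ leg of the cycle rather than a direct $(2)\Rightarrow(3)$; all the remaining finiteness and pigeonhole bookkeeping (appending isolated vertices to a member of $\obstr(\Phi)$, spreading infinitely many true graphs across infinitely many sizes) is sound.
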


Given a graph property $\Phi$, a positive integer $k$, and a graph $G$, we write $\indsubs{\Phi,k}{G}$ for the set of $k$-vertex induced subgraphs of $G$ that satisfy $\Phi$. More formally
\[\indsubs{\Phi,k}{G} = \{S \subseteq V(G)\mid |S|=k ~\wedge~ \Phi(G[S])=1\}\,.  \]

\paragraph*{Bipartite Properties}
For what follows, we assume that bipartite graphs come with an (ordered) bipartition: Formally, a \emph{bipartite graph} is a triple $\bG=(V_1,V_2,E)$, where $V_1$ and $V_2$ are (possibly empty) vertex sets, and $E$ is a set of (undirected) edges between $V_1$ and $V_2$. We set $V(\bG)\coloneqq V_1\cup V_2$, $V_1(\bG)\coloneqq V_1$, and $V_2(\bG)\coloneqq V_2$. 
The \emph{underlying graph} of $\bG$ is the graph $G\coloneqq(V_1 \cup V_2, E)$. Note that we use different typestyles for bipartite graphs (which have a specified ordered bipartition) and the corresponding underlying graph, for which no particular bipartition is specified.

A homomorphism $\varphi$ from a bipartite graph $\bG=(V_1,V_2,E)$ to a bipartite graph $\hat{\bG}=(\hat{V}_1,\hat{V}_2,\hat{E})$ is a homomorphism from the underlying graph of $\bG$ to the underlying graph of $\hat{\bG}$. Such a homomorphism is a \emph{consistent} homomorphism from $\bG$ to $\hat{\bG}$ if $\varphi(V_i)\subseteq \hat{V}_i$ for $i\in\{1,2\}$. If $\varphi$ is an isomorphism then $\bG$ and $\hat{\bG}$ are \emph{consistently isomorphic}.
Similarly as for (not necessarily bipartite) graphs, given bipartite graphs $\bH$ and $\bG$, a consistent $\bH$\emph{-colouring} of $\bG$ is a consistent homomorphism from $\bG$ to $\bH$.
\begin{obs}
Let $\bH$ and $\bG$ be bipartite graphs with corresponding underlying graphs $H$ and $G$, respectively.
If $c$ is a consistent $\bH$-colouring of $\bG$, then $c$ is also an $H$-colouring of $G$. 
\end{obs}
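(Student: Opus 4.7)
The observation is immediate from unfolding the definitions given just before it. The plan is simply to chase through the chain of definitions: a consistent $\bH$-colouring of $\bG$ is, by definition, a consistent homomorphism from $\bG$ to $\bH$; a consistent homomorphism from $\bG$ to $\bH$ is in particular a homomorphism from $\bG$ to $\bH$; and a homomorphism from the bipartite graph $\bG$ to the bipartite graph $\bH$ is, per the paper's own definition, a homomorphism from the underlying graph $G$ to the underlying graph $H$. Composing these three ``unfoldings'' yields that $c$ is a homomorphism from $G$ to $H$, which is exactly what it means for $c$ to be an $H$-colouring of $G$.

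Carrying this out, the first step is to apply the definition of \emph{consistent $\bH$-colouring} to extract that $c\colon V(\bG) \to V(\bH)$ is edge-preserving between the underlying graphs and additionally satisfies $c(V_i(\bG)) \subseteq V_i(\bH)$ for $i\in\{1,2\}$. The second step is to note that $V(\bG)=V(G)$ and $V(\bH)=V(H)$, and that ``edge-preserving between the underlying graphs'' is literally the definition of a homomorphism from $G$ to $H$, i.e., an $H$-colouring of $G$. The consistency requirement $c(V_i(\bG))\subseteq V_i(\bH)$ simply drops out, since the underlying graphs discard the bipartition; this is not an obstacle but rather the reason the observation is worth stating. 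There is no real difficulty here --- the only purpose of the observation is to license, later in the paper, the transition from consistent colourings of bipartite graphs to ordinary colourings of their underlying graphs when one applies results such as complexity monotonicity or the homomorphism basis.
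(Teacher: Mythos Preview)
Your proposal is correct and matches the paper's approach: the paper states this as an observation without proof, relying exactly on the definition-chasing you spell out (a consistent $\bH$-colouring is by definition a consistent homomorphism $\bG\to\bH$, which is in particular a homomorphism of the underlying graphs, which is an $H$-colouring of $G$).
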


The $(k,k)$\emph{-biclique} is the bipartite graph $\bB_{k,k}\coloneqq(L,R,E)$ where $|L|=|R|=k$ and $E$ contains all edges between $L$ and $R$. 

Let $\bG=(V_1,V_2,E)$ be a bipartite graph with a subset of vertices $S\subseteq V_1\cup V_2$. Let $E'\subseteq E$ contain all edges in $E$ between $V_1\cap S$ and $V_2\cap S$. Then $\bG[S]=(V_1\cap S, V_2\cap S, E')$ is an \emph{induced subgraph} of $\bG$.
Given a subset $A$ of the edges of a bipartite graph $\bH$, we write $\bH[A]\coloneqq (V_1(\bH),V_2(\bH),A)$ for the edge-subgraph of $\bH$ with respect to $A$; note that $\bH[A]$ might contain isolated vertices, and that the underlying graph of $\bH[A]$ is $H[A]$.

A \emph{bipartite property} $\Psi$ is a function from the set of bipartite graphs to $\{0,1\}$ that is invariant under \emph{consistent} isomorphisms, i.e., if $\bH_1$ and $\bH_2$ are consistently isomorphic then $\Psi(\bH_1)=\Psi(\hat{\bH_2})$. 
Every graph property $\Phi$ can be translated into a bipartite property by mapping every bipartite graph to $1$ for which the underlying graph satisfies $\Phi$. However, there are bipartite properties that cannot be translated into graph properties. For example, consider $\Psi((V_1,V_2,E))=1$ if and only if $|V_1|=1$. For $V_1=\{c\}$, $V_2=\{l,r\}$, and $E=\{\{l,c\},\{c,r\}\})$, the bipartite graph $(V_1, V_2, E)$ is in $\Psi$, whereas $(V_2, V_1, E)$ is not, despite the fact that the corresponding underlying graphs are isomorphic.

Given a bipartite graph $\bG=(V_1,V_2,E)$, a positive integer $k$, and a bipartite property $\Psi$, we define
\[\bipindsubs{\Psi,k}{\bG} = \{S \subseteq V_1\cup V_2 \mid |S|=k ~\wedge~ \Psi(\bG[S])=1\}\,.  \]
Given a bipartite graph $\bG=(V_1,V_2,E)$ with a \emph{consistent} $\bH$-colouring $c$, $\cpbipindsubs{\Psi}{c}{\bG}$
is the set of all $S\subseteq V_1\cup V_2$ of size $|V(\bH)|$ such that $c(S)=V(\bH)$ (that is, $S$ must include precisely one vertex per colour) and $\Psi(\bG[S])=1$. We will refer to the elements of $\cpbipindsubs{\Psi}{c}{\bG}$ as \emph{($c$-)colour-prescribed} induced subgraphs of $\bG$ that satisfy $\Psi$.

\paragraph{A Colourful Bipartite Tensor Product}
Given two graphs $G, \hat{G}$ with corresponding $H$-colourings $c$ and $\hat{c}$, the \emph{colour-prescribed tensor product} $(G,c) \timesc (\hat{G},\hat{c})$ is the graph with vertex set $\{(v,\hat{v})\in V(G)\times V(\hat{G}) \mid c(v)=\hat{c}(\hat{v})\}$ and edges between $(u,\hat{u})$ and $(v,\hat{v})$ whenever $\{u,v\}$ is an edge in $G$ and $\{\hat{u},\hat{v}\}$ is an edge in $\hat{G}$.\footnote{Again we make the colourings explicit in this notation. $(G,c) \timesc (\hat{G},\hat{c})$ coincides with the graph $G\times_H \hat{G}$ as introduced in~\cite{DorflerRSW22}.}

We will need a bipartite version of this tensor product.
Given two bipartite graphs $\bG$ and $\hat{\bG}$ with consistent $\bH$-colourings $c$ and $\hat{c}$, respectively, the \emph{bipartite colour-prescribed tensor product}, denoted by $(\bG, c)\timescc (\hat{\bG},\hat{c})$, is a bipartite graph with bipartition $(V_1, V_2)$ defined as follows:
\begin{itemize}
    \item $V_1\coloneqq \{ (v,\hat{v})\in V_1(\bG)\times V_1(\hat{\bG}) \mid c(v)=\hat{c}(\hat{v}) \}$
    \item $V_2\coloneqq \{ (v,\hat{v})\in V_2(\bG)\times V_2(\hat{\bG}) \mid c(v)=\hat{c}(\hat{v}) \}$
    \item $(u,\hat{u})$ and $(v,\hat{v})$ are adjacent if and only if $\{u,v\}\in E(\bG)$ and $\{\hat{u},\hat{v}\}\in E(\hat{\bG})$.
\end{itemize}
Observe that $c$ and $\hat{c}$ correspond to  consistent $\bH$-colourings of $(\bG, c)\timescc(\hat{\bG},\hat{c})$ by mapping $(v,\hat{v})\mapsto c(v) (=\hat{c}(\hat{v}))$.
Observe further, that $V_1\cup V_2= \{ (v,\hat{v})\in V(G)\times V(\hat{G}) \mid c(v)=\hat{c}(\hat{v})\}$, where $G$ and $\hat{G}$ are the underlying graphs of $\bG$ and $\hat{\bG}$, respectively. The underlying graph of $(\bG, c)\timescc (\hat{\bG}, \hat{c})$ is, in fact, the colour-prescribed tensor product $(G, c)\timesc (\hat{G},\hat{c})$ --- a corresponding $H$-colouring is given by $(v,\hat{v})\mapsto c(v) (=\hat{c}(\hat{v}))$. We state the aforementioned fact formally:

\begin{fact}\label{fact:convenient}
The underlying graph of $(\bG, c)\timescc (\hat{\bG}, \hat{c})$ is $(G, c)\timesc (\hat{G},\hat{c})$, where $c$ is an $H$-colouring of $(G, c)\timesc (\hat{G},\hat{c})$.
\end{fact}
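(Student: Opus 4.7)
The statement is essentially an unpacking of definitions, so the plan is to verify the two claims (matching vertex sets, matching edge sets, and the colouring claim) directly, with the only subtlety being the role of \emph{consistency} of $c$ and $\hat{c}$.

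First I would argue that the vertex set $V_1\cup V_2$ of the underlying graph of $(\bG,c)\timescc(\hat{\bG},\hat{c})$ coincides with the vertex set $\{(v,\hat v)\in V(G)\times V(\hat G)\mid c(v)=\hat c(\hat v)\}$ of $(G,c)\timesc(\hat G,\hat c)$. The inclusion ``$\subseteq$'' is immediate from the definition of $V_1,V_2$ together with $V_i(\bG)\subseteq V(G)$ and $V_i(\hat\bG)\subseteq V(\hat G)$. For ``$\supseteq$'' I would invoke the \emph{consistency} of the $\bH$-colourings: if $(v,\hat v)\in V(G)\times V(\hat G)$ satisfies $c(v)=\hat c(\hat v)$, then the unique $i\in\{1,2\}$ with $v\in V_i(\bG)$ forces $c(v)\in V_i(\bH)$, and since $\hat c$ is also consistent, $\hat c(\hat v)=c(v)\in V_i(\bH)$ forces $\hat v\in V_i(\hat\bG)$. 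Hence $(v,\hat v)\in V_i\subseteq V_1\cup V_2$.

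Next I would verify that the edge sets agree. In both definitions, an edge is placed between $(u,\hat u)$ and $(v,\hat v)$ exactly when $\{u,v\}$ is an edge in the first factor and $\{\hat u,\hat v\}$ is an edge in the second. Since the edges of a bipartite graph and its underlying graph are literally the same set, $E(\bG)=E(G)$ and $E(\hat\bG)=E(\hat G)$, the two edge conditions coincide. Combined with the vertex set identification from the first step, this shows that the underlying graph of $(\bG,c)\timescc(\hat\bG,\hat c)$ is precisely $(G,c)\timesc(\hat G,\hat c)$.

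Finally, I would confirm that $(v,\hat v)\mapsto c(v)$ (which equals $\hat c(\hat v)$ by definition) is an $H$-colouring of $(G,c)\timesc(\hat G,\hat c)$. This amounts to checking that whenever $\{(u,\hat u),(v,\hat v)\}$ is an edge of the tensor product, $\{c(u),c(v)\}$ is an edge of $H$; but this is immediate from $\{u,v\}\in E(G)$ together with the fact that $c$ is an $H$-colouring of $G$. No real obstacle is anticipated: the only non-trivial observation is the use of consistency to rule out ``cross-side'' pairs $(v,\hat v)$ with $v\in V_1(\bG)$ and $\hat v\in V_2(\hat\bG)$ when passing between the bipartite and underlying settings.
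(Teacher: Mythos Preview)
Your proposal is correct and matches the paper's own reasoning: the paper does not give a formal proof of this fact but merely observes in the paragraph preceding it that $V_1\cup V_2=\{(v,\hat v)\in V(G)\times V(\hat G)\mid c(v)=\hat c(\hat v)\}$ and that the map $(v,\hat v)\mapsto c(v)$ is the desired $H$-colouring. Your write-up fleshes this out and, in particular, makes explicit the one point the paper leaves implicit, namely that consistency of $c$ and $\hat c$ is what rules out cross-side pairs in the ``$\supseteq$'' direction of the vertex-set equality.
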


\paragraph{Parameterised and Fine-Grained Complexity}
A detailed introduction to the field of parameterised complexity can be obtained from one of several textbooks on the subject, such as~\cite{CyganFKLMPPS15, FlumG06}, where~\cite{FlumG06} also explicitly covers parameterised counting problems.
A \emph{parameterised counting problem} is a function $P:\{0,1\}^\ast \rightarrow \mathbb{N}$ together with a computable parameterisation $\kappa: \{0,1\}^\ast \rightarrow \mathbb{N}$.
We will focus on the subsequent parameterised counting problems; in what follows, $\mathcal{H}$ denotes a class of graphs, $\Phi$ denotes a graph property, and $\Psi$ denotes a bipartite property.

\begin{itemize}[leftmargin = 3.75cm]
\item[$\#\clique$] \problem{A graph $G$ and a positive integer $k$}{$\kappa(G,k)= k$}{The number of cliques of size $k$ in $G$}
\item[$\#\indsubsprob(\Phi)$] \problem{A graph $G$ and a positive integer $k$}{$\kappa(G,k)= k$}{$\#\indsubs{\Phi,k}{G}$, i.e., the number of $k$-vertex induced subgraphs of $G$ that satisfy $\Phi$.}
\item[$\#\bipindsubsprob(\Psi)$] \problem{A bipartite graph $\bG$ and a positive integer $k$}{$\kappa(\bG,k)= k$}{$\#\bipindsubs{\Psi,k}{\bG}$, i.e., the number of $k$-vertex induced bipartite subgraphs of $\bG$ that satisfy $\Psi$}\footnote{We stress that $\#\bipindsubsprob(\Psi)$ is not equivalent to the restriction of $\#\indsubsprob(\Phi)$ to bipartite input graphs since the bipartitions of the input graph and all of its induced subgraphs are fixed in case of $\#\bipindsubsprob(\Psi)$. Also, on bipartite graphs, bipartite properties are more expressive than graph properties.}
\item[$\#\cphomsprob(\mathcal{H})$] \problem{A graph $H\in \mathcal{H}$, a graph $G$, and an $H$-colouring $c$ of $G$}{$\kappa(H,G,c)= |V(H)|$}{$\#\cphoms{H}{c}{G}$, i.e., the number of $c$-colour-prescribed homomorphisms from $H$ to $G$}
\item[$\#\cpbipindsubsprob(\Psi)$] \problem{Bipartite graphs $\bH$ and $\bG$, and a consistent $\bH$-colouring $c$ of $\bG$}{$\kappa(\bH,\bG,c)= |V(\bH)|$}{$\#\cpbipindsubs{\Psi}{c}{\bG}$, i.e., the number of $c$-colour-prescribed induced subgraphs of $\bG$ that satisfy $\Psi$}
\end{itemize}

A parameterised counting problem $(P,\kappa)$ is \emph{fixed-parameter tractable (FPT)} if there is a computable function $f$ and an algorithm $\mathbb{A}$, such that $\mathbb{A}$ computes $P(x)$ in time $f(\kappa(x))\cdot |x|^{O(1)}$; $\mathbb{A}$ is referred to as \emph{FPT algorithm}. For example, the problem $\#\indsubsprob(\Phi)$ is FPT whenever $\Phi$ is computable and \meagre; the following result is simple and well-known, and implicit in~\cite{RothSW20}, however, we provide a proof for completeness:

\begin{lemma}\label{lem:meagre_easy}
Let $\Phi$ be a computable \meagre graph property. Then $\#\indsubsprob(\Phi)$ is FPT.
\end{lemma}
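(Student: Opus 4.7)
The plan is to exploit meagreness directly: for all sufficiently large $k$ the answer is either $0$ or $\binom{|V(G)|}{k}$, while for the finitely many remaining values of $k$ a brute-force enumeration of $k$-subsets is already polynomial in $|V(G)|$. Concretely, since $\Phi$ is \meagre, there is a constant $B=B(\Phi)$ such that for every $k\geq B$, $\Phi$ is either identically true or identically false on the class of $k$-vertex graphs. Because $\Phi$ is computable, for any given $k$ we can determine the triviality status (trivially true, trivially false, or non-$k$-trivial) by evaluating $\Phi$ on each of the $2^{\binom{k}{2}}$ labelled graphs on the vertex set $\{1,\dots,k\}$; call the resulting cost $g(k)$, which depends only on $k$ and on $\Phi$.

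The algorithm on input $(G,k)$ will then proceed as follows. First, perform the $k$-triviality test just described. If $\Phi$ is constant false on $k$-vertex graphs, return $0$; if $\Phi$ is constant true on $k$-vertex graphs, return $\binom{|V(G)|}{k}$; otherwise, iterate over every $k$-subset $S\subseteq V(G)$, evaluate $\Phi(G[S])$, and count the successes. Correctness is immediate in the trivial cases, because every $k$-vertex induced subgraph of $G$ is then evaluated uniformly by $\Phi$, and in the fallback case the output is literally the definition of $\#\indsubs{\Phi,k}{G}$.

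For the runtime, the triviality test costs $g(k)$, and returning $\binom{|V(G)|}{k}$ takes time polynomial in $|V(G)|$ and $k$. The brute-force branch is entered only when $\Phi$ is not $k$-trivial, which by meagreness happens only for $k$ in a finite set $\mathcal{K}_\Phi \subseteq \{1,\dots,B-1\}$. In this branch the cost is bounded by $\binom{|V(G)|}{k}\cdot g(k)\leq g(B)\cdot|V(G)|^{B}$, which is polynomial in $|V(G)|$ since $B$ is a constant depending only on $\Phi$. Absorbing all purely $k$-dependent factors into a function $f$, the total runtime is of the form $f(k)\cdot|V(G)|^{O(1)}$, matching the definition of FPT.

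There is essentially no obstacle in the argument. The one subtlety worth flagging is that the algorithm itself does not need to compute or even know $B$: it runs obliviously, and meagreness is invoked purely in the complexity analysis to bound the worst-case size of $k$ in the brute-force branch. Hence computability of $\Phi$, rather than computability of $B$ itself, is what is required.
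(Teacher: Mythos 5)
Your proof is correct and follows essentially the same route as the paper: brute force over $k$-subsets when $\Phi$ is not $k$-trivial (which by meagreness happens only for a bounded range of $k$), and a direct $0$ or $\binom{|V(G)|}{k}$ answer otherwise, using computability of $\Phi$ to decide $k$-triviality. The one presentational difference is that your algorithm is oblivious to the threshold $B$, gating on the triviality test itself rather than on a hardcoded check $k<B$ as in the paper; both are valid since $B$ is a constant of $\Phi$ and may be baked into the algorithm. A tiny nit: the inequality $g(k)\leq g(B)$ is not literally guaranteed since $g$ need not be monotone, but the argument only needs that $g(k)$ is bounded by a constant for the finitely many $k<B$, which is immediate.
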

\begin{proof}
Since $\Phi$ is \meagre, there exists a constant $B$ such that for all $k\geq B$ we have that $\Phi$ is either constant true or constant false on the set of all $k$-vertex graphs. Consequently, $\#\indsubsprob(\Phi)$ can be solved as follows: On input $(G,k)$, we check whether $k<B$. If so, we solve the problem by brute force, i.e., we iterate over all $k$-vertex subsets of $G$ and count how many of them induce a subgraph that satisfies $\Phi$. Otherwise, the output is trivially $0$ or $\binom{|V(G)|}{k}$, depending on whether $\Phi$ is constant true or constant false on the set of all $k$-vertex graphs. Since $\Phi$ is computable, the latter can be decided (in time only depending on $k$). Since $B\in O(1)$, the overall running time is bounded by $f(k)\cdot |x|^{O(1)}$ for some computable function $f$.
\end{proof}
Note that we cannot improve fixed-parameter tractability to polynomial-time solvability in the generality of the previous lemma, since there are \meagre computable graph properties for which $\#\indsubsprob(\Phi)$ is not solvable in polynomial time. For example, given a $2$-$\mathrm{EXP}$-complete problem $\mathcal{L}\subseteq \{0,1\}^\ast$, we can set $\Phi(H)=1$ if and only if $\mathsf{bin}(|V(H)|) \in \mathcal{L}$.

A \emph{parameterised Turing-Reduction} from $(P,\kappa)$ to $(\hat{P},\hat{\kappa})$ is an FPT algorithm for $(P,\kappa)$ which is equipped with oracle access to $\hat{P}$ and additionally satisfies that there is a computable function $g$ such that on input $x$, the parameter $\hat{\kappa}(y)$ of each oracle query $y$ is bounded by $g(\kappa(x))$. We write $(P,\kappa)\fptred (\hat{P},\hat{\kappa})$ if such a parameterised Turing-reduction exists.

A parameterised counting problem $(P,\kappa)$ is $\#\W{1}$-\emph{hard} if $\#\clique \fptred (P,\kappa)$. Evidence for the (fixed-parameter) intractability of $\#\W{1}$-hard problems is given by the Exponential Time Hypothesis:

\begin{conjecture}[ETH~\cite{ImpagliazzoP01}]
The \emph{Exponential Time Hypothesis} (ETH) asserts that the problem $3\textsc{-SAT}$ cannot be solved in time $\exp(o(n))$ where $n$ is the number of variables of the input formula.
\end{conjecture}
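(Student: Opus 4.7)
The plan is to prove that $3$-SAT on $n$ variables cannot be decided in deterministic time $2^{o(n)}$. I should be upfront that this is one of the central open problems in complexity theory, and no currently available technique comes close to resolving it; what follows is a sketch of the attack strategies that seem most promising, together with the barriers each would have to overcome, rather than a concrete line of attack.

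First, I would try to go via strong non-uniform circuit lower bounds: if every Boolean circuit deciding $3$-SAT on $n$ variables must have size $2^{\Omega(n)}$, then by the standard $O(t \log t)$-size circuit simulation of a time-$t$ Turing machine (Pippenger--Fischer) the ETH would follow immediately. The starting point would be the existing toolkit for restricted classes --- Razborov's approximation method, H\aa{}stad's switching lemma for bounded-depth circuits, and the polynomial method for $\mathsf{AC}^0[p]$ --- together with an attempt to lift these techniques to unrestricted circuits. The hard part is that every known technique is bound by at least one of three documented barriers: relativization (Baker--Gill--Solovay), natural proofs (Razborov--Rudich, conditional on the existence of strong pseudorandom function generators), and algebrization (Aaronson--Wigderson). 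Any proof of ETH must therefore be simultaneously non-relativizing, non-naturalizing, and non-algebrizing, which rules out essentially every technique in the current arsenal.

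A second route is to derive ETH from a weaker complexity-theoretic separation via fine-grained reductions, and then to prove that separation directly. Here the plan would be to push Williams' ``algorithms-versus-lower-bounds'' paradigm --- which produced $\mathsf{NEXP} \not\subseteq \mathsf{ACC}^0$ --- towards exponential separations against larger circuit classes, by discovering successively faster satisfiability algorithms and converting them into hardness statements via Williams' transfer theorem. The main obstacle will be the enormous gap between what can currently be proved and what ETH demands: the best explicit circuit lower bound for any function in $\mathsf{NP}$ is only a small constant multiple of $n$ (around $(3+\tfrac{1}{86})n$, due to Find, Golovnev, Hirsch and Kulikov), so even a super-polynomial lower bound, let alone an exponential one, remains out of reach. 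My honest expectation is that any genuine proof of ETH would require a fundamentally new framework --- perhaps along the lines of geometric complexity theory, or an unconditional version of the algorithmic approach --- and would constitute a breakthrough on the scale of settling $\mathsf{P}$ versus $\mathsf{NP}$ itself; this is the reason the statement is advanced in the paper as a \emph{conjecture} used to obtain conditional lower bounds, rather than as a theorem.
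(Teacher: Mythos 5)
The statement you were asked about is the Exponential Time Hypothesis itself, which the paper records as a \emph{conjecture} (attributed to Impagliazzo and Paturi) and never attempts to prove; it is used solely as a hardness assumption from which conditional lower bounds are derived. You correctly recognize this, and your discussion of why a proof is far out of reach --- circuit lower bounds via Pippenger--Fischer, the relativization/natural-proofs/algebrization barriers, and the Williams algorithms-to-lower-bounds program --- is a reasonable and accurate survey of the landscape. There is therefore no gap to report and no paper proof to compare against: declining to offer a proof and explaining why none is currently available is exactly the right response to a request to ``prove'' an open conjecture.
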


\cite{Chenetal05,Chenetal06} proved that $\#\clique$ cannot be solved in time $ f(k) \cdot |G|^{o(k)}$ for any function $f$, unless ETH fails.\footnote{In fact,~\cite{Chenetal05,Chenetal06} proved that \emph{detecting} a clique of size $k$ cannot be done in time $f(k)\cdot |V(G)|^{o(k)}$, unless ETH fails. Our statement follows by observing that the number of $k$-cliques certainly reveals whether one exists, and using that $|V(G)|^{o(k)}=|G|^{o(k)}$.} In particular, the latter implies that $\#\W{1}$-hard problems are not fixed-parameter tractable, unless ETH fails. However, since we aim for matching lower bounds under ETH, rather than just $\#\W{1}$-hardness, we introduce a more restricted notion of reductions.

\begin{defn}
A parameterised Turing-Reduction from $(P,\kappa)$ to $(\hat{P},\hat{\kappa})$ is called \emph{tight} if, on input $x$, we have that $\hat{\kappa}(y) \in O(\kappa(x))$ for each oracle query $y$. We write $(P,\kappa)\tightred(\hat{P},\hat{\kappa})$ if a tight parameterised Turing-Reduction exists.
\end{defn}

The following simple lemma will allow us to derive our desired lower bounds using the above notion of tight reductions. 

\begin{lemma}\label{lem:tightred}
Let $(P,\kappa)$ and $(\hat{P},\hat{\kappa})$ be parameterised counting problems such that
\begin{itemize}
    \item $(P,\kappa)$ cannot be solved in time $f(\kappa(x))\cdot |x|^{o(\kappa(x))}$ for any function $f$, and
    \item $(P,\kappa)\tightred (\hat{P},\hat{\kappa})$.
\end{itemize}
Then $(\hat{P},\hat{\kappa})$ cannot be solved in time $f(\hat{\kappa}(x))\cdot |x|^{o(\hat{\kappa}(x))}$ for any function $f$.
\end{lemma}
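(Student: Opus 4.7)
The plan is a direct contrapositive. Suppose, for contradiction, that there is an algorithm $\mathbb{A}$ solving $(\hat{P},\hat{\kappa})$ in time $f(\hat{\kappa}(y))\cdot |y|^{o(\hat{\kappa}(y))}$ for some function $f$; let $\mathbb{B}$ be the tight parameterised Turing-reduction witnessing $(P,\kappa)\tightred(\hat{P},\hat{\kappa})$. I would construct an algorithm $\mathbb{C}$ for $(P,\kappa)$ by running $\mathbb{B}$ on input $x$ and answering every oracle call $y$ by simulating $\mathbb{A}$ on $y$.

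Next I would assemble the running-time bound. Write $k=\kappa(x)$. By definition of an FPT reduction, the total running time of $\mathbb{B}$ (not counting the oracle) is bounded by $t(k)\cdot |x|^{\alpha}$ for some computable $t$ and constant $\alpha$, hence every oracle query $y$ satisfies $|y|\le t(k)\cdot|x|^{\alpha}$. By tightness of the reduction there is a constant $c$ with $\hat{\kappa}(y)\le c\cdot k$ for every such $y$. Substituting these into the assumed bound for $\mathbb{A}$, the time to answer a single oracle query is at most
\[
f(c\cdot k)\cdot \bigl(t(k)\cdot|x|^{\alpha}\bigr)^{o(c\cdot k)} \;=\; f(c\cdot k)\cdot t(k)^{o(k)}\cdot |x|^{o(k)}.
\]
The factor $f(ck)\cdot t(k)^{o(k)}$ depends only on $k$, so it can be absorbed into a computable function $f'(k)$, giving a per-query bound of $f'(k)\cdot |x|^{o(k)}$. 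The number of oracle queries is itself bounded by $t(k)\cdot|x|^{\alpha}$, so the overall running time of $\mathbb{C}$ is $f''(k)\cdot |x|^{o(k)}$ for some computable $f''$, contradicting the hypothesis that $(P,\kappa)$ admits no such algorithm.

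The only delicate point is the manipulation of the little-oh in the exponent: one has to verify that $o(\hat{\kappa}(y))$, when $\hat{\kappa}(y)\le c\cdot k$, is indeed $o(k)$, and that $|x|^{o(k)}\cdot t(k)^{o(k)}$ collapses into $f''(k)\cdot|x|^{o(k)}$. Both are routine — the first holds because $o(ck)=o(k)$ for fixed $c$, and the second because $t(k)^{o(k)}$ is a function of $k$ alone — so no real obstacle arises, and the lemma follows.
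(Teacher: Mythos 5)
Your proof is correct and takes essentially the same approach as the paper: assume a fast algorithm for $(\hat{P},\hat{\kappa})$, compose it with the tight reduction, bound the size and parameter of each oracle query, and collect the factors into an $f''(k)\cdot|x|^{o(k)}$ bound for $(P,\kappa)$, yielding a contradiction. The notational details (writing $t(k)\cdot|x|^{\alpha}$ versus $g(\kappa(x))\cdot|x|^{O(1)}$, $c\cdot k$ versus $O(\kappa(x))$) differ, but the argument and its structure match the paper's proof.
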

\begin{proof}
Let $\mathbb{A}$ be the (tight) parameterised Turing-reduction from $(P,\kappa)$ to $(\hat{P},\hat{\kappa})$, that is, there exists a (computable) function $g$ such that $\mathbb{A}$ computes, on input $x$, the value $P(x)$ in time $g(\kappa(x))\cdot |x|^{O(1)}$. Additionally, $\mathbb{A}$ has oracle access to $\hat{P}$ and $\hat{\kappa}(y)\in O(\kappa(x))$ for each oracle query $y$. 

Let us assume for contradiction that there exists a function $\hat{f}$ and an algorithm $\hat{\mathbb{A}}$ such that, on input $y$, $\hat{\mathbb{A}}$ computes $\hat{P}(y)$ in time $\hat{f}(\hat{\kappa}(y))\cdot |y|^{o(\hat{\kappa}(y))}$. An algorithm for $(P,\kappa)$ is obtained by running $\mathbb{A}$ and by simulating each oracle query by running $\hat{\mathbb{A}}$. Let us bound the overall running time on input $x$:

First, the size of each oracle query $y$ is bounded by $g(\kappa(x))\cdot |x|^{O(1)}$, since this is the running time bound of $\mathbb{A}$. Further, we have $\hat{\kappa}(y)\in O(\kappa(x))$. Consequently, each oracle query can be simulated in time
 \[\hat{f}(\hat{\kappa}(y))\cdot |y|^{o(\hat{\kappa}(y))} \leq \hat{f}(O(\kappa(x))) \cdot \left(g(\kappa(x))\cdot |x|^{O(1)} \right)^{o(\kappa(x))}  = \hat{f}(O(\kappa(x))) \cdot g(\kappa(x))^{o(\kappa(x))} \cdot |x|^{o(\kappa(x))}\,.\]
 The total running time can thus be (generously) bounded by
 \[g(\kappa(x))\cdot |x|^{O(1)} \cdot \hat{f}(O(\kappa(x))) \cdot g(\kappa(x))^{o(\kappa(x))} \cdot |x|^{o(\kappa(x))} \leq f(\kappa(x)) \cdot |x|^{o(\kappa(x))} \,, \]
 where $f$ is any function with $f(k)\geq g(k)\cdot \hat{f}(O(k)) \cdot g(k)^{o(k)} $. This yields the desired contradiction and thus concludes the proof.
\end{proof}

\paragraph*{A note on $\#\W{1}$-completeness:}
While, intuitively, the class $\#\W{1}$ contains all parameterised counting problems that are reducible to $\#\textsc{Clique}$, the formal definition is rather technical, and we refer the interested reader to the excellent standard textbook of Flum and Grohe~\cite[Chapter 14]{FlumG06} for a detailed exposition. Since $\#\indsubsprob(\Phi)$ was shown to be contained in $\#\W{1}$ for every computable property $\Phi$ by Jerrum and Meeks~\cite{JerrumM15}, we will only argue about $\#\W{1}$-hardness in the current paper.

\paragraph{Group Theory}

Given a group $(\Gamma,\ast)$ with identity $e$ and a set $X$, a \emph{group action} is a function $f$ from $\Gamma\times X$ to $X$ such that, for each $x\in X$ and $g_1, g_2 \in \Gamma$, it holds that $f(e,x)= x$ and $f(g_1, f(g_2,x))= f(g_1\ast g_2, x)$.
If the group action is clear from context we use the shortened notation $g\triangleright x$ instead of $f(g,x)$. For each $x\in X$, $\orb_{\Gamma}(x) =\{g\triangleright x \mid g\in \Gamma\}$ is the \emph{orbit} of $x$, and if $\Gamma$ is clear from context we drop the subscript. If $\orb(x)=\{x\}$ then $x$ is a \emph{fixed point}.
A group action is \emph{transitive} on $X$ if there is only a single orbit.
The \emph{stabiliser} of $x\in X$ by $\Gamma$ is the set $\stab_{\Gamma}(x)=\{g\in \Gamma \mid g \triangleright x= x\}$, again we drop the subscript if it is clear. The Orbit-Stabiliser theorem states that, for each $x\in X$, it holds that $|\Gamma|=|\orb(x)| \cdot |\stab(x)|$.

A finite group $\Gamma$ is a \emph{$p$-group} if its order $|\Gamma|$ is a power of $p$.  A \emph{$p$-Sylow subgroup} of a group $\Gamma$ is a maximal $p$-subgroup of $\Gamma$, i.e., it is not a proper subgroup of any other $p$-subgroup of $\Gamma$. The next theorem is part of Sylow's theorems.
\begin{theorem}[Sylow]\label{thm:Sylow}
    Let $\Gamma$ be a finite group and let $p$ be a prime with multiplicity $\ell\ge 1$ in the factorisation of the order $|\Gamma|$. Then there exists a $p$-Sylow subgroup of $\Gamma$ of order $p^\ell$.
\end{theorem}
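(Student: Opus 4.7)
The plan is to follow Wielandt's combinatorial proof, which gives the existence statement in one shot. Write $|\Gamma| = p^\ell m$ with $\gcd(p, m) = 1$, and let $X$ denote the set of all subsets of $\Gamma$ of cardinality $p^\ell$. Consider the action of $\Gamma$ on $X$ by left multiplication, $g \triangleright S = \{gs : s \in S\}$. The guiding idea is that $|X| = \binom{p^\ell m}{p^\ell}$ will turn out to be coprime to $p$, and this alone forces the existence of an orbit whose stabiliser has exactly the right size.

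First, I would verify the divisibility claim $p \nmid \binom{p^\ell m}{p^\ell}$. Writing $\binom{p^\ell m}{p^\ell} = \prod_{i=0}^{p^\ell - 1} (p^\ell m - i)/(p^\ell - i)$, one checks that for each $i \geq 1$ the numerator and denominator share the same $p$-adic valuation (namely $v_p(i)$), while the $i = 0$ factor contributes exactly $m$, which is coprime to $p$. Second, since the orbits of $\Gamma$ on $X$ partition $X$ and $|X|$ is coprime to $p$, there must exist some orbit $\mathcal{O}$ whose size is also coprime to $p$. Pick any $S \in \mathcal{O}$ and set $H = \stab_\Gamma(S)$. By the Orbit--Stabiliser theorem, $|\Gamma| = |\mathcal{O}| \cdot |H|$, and since $p^\ell \mid |\Gamma|$ while $p \nmid |\mathcal{O}|$, we obtain $p^\ell \mid |H|$. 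Third, fixing any $s \in S$, the map $H \to S$ defined by $h \mapsto hs$ is injective (by cancellation) and has image contained in $S$ (since $hS = S$ for $h \in H$), so $|H| \leq |S| = p^\ell$. Combined with $p^\ell \mid |H|$, this forces $|H| = p^\ell$, and $H$ is the desired $p$-Sylow subgroup.

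The main obstacle is the divisibility claim $p \nmid \binom{p^\ell m}{p^\ell}$; this is the only genuinely arithmetic step in the argument, with everything else following cleanly from Orbit--Stabiliser once the right action on the right set has been set up. An alternative route would be induction on $|\Gamma|$: using the class equation to locate an element of order $p$ in the centre when every proper subgroup has index divisible by $p$ (via Cauchy's theorem), and otherwise passing to a proper subgroup whose order is still divisible by $p^\ell$. This inductive proof is noticeably longer and requires Cauchy's theorem as a black box, so I prefer Wielandt's self-contained combinatorial argument.
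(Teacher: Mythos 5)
Your proof is correct: it is Wielandt's classical combinatorial argument for the existence part of Sylow's theorem, and the details (the $p$-adic valuation count showing $p \nmid \binom{p^\ell m}{p^\ell}$, the choice of an orbit of size coprime to $p$, Orbit--Stabiliser giving $p^\ell \mid |H|$, and the injection $h \mapsto hs$ giving $|H| \le p^\ell$) are all right. The paper, however, does not prove this statement at all; it records Sylow's theorem as a standard textbook fact (``The next theorem is part of Sylow's theorems'') and uses it as a black box, so there is no proof in the paper to compare against. Your self-contained Wielandt argument is a perfectly valid way to supply the omitted proof, and it has the advantage over the inductive route you mention of avoiding Cauchy's theorem and the class equation.
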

The following result is a well-known application of Sylow's theorems and a detailed proof is given for instance in~\cite[Lemma 2]{DorflerRSW22}.
\begin{lemma}\label{lem:Sylow}
    Let $\Gamma$ be a finite group acting transitively on  a set $X$ such that $|X|=p^\ell$ for some $\ell\ge 0$. Then the induced action of any $p$-Sylow subgroup $\Gamma'\subseteq \Gamma$ on $X$ is also transitive.
\end{lemma}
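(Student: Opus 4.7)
The plan is to apply the Orbit--Stabiliser theorem twice — once for the transitive action of $\Gamma$ on $X$ and once for the restricted action of the $p$-Sylow subgroup $\Gamma'$ — and then to use the maximality of $\Gamma'$ as a $p$-subgroup to force the single $\Gamma'$-orbit of a fixed point to fill all of $X$.

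First, I would fix an arbitrary $x_0\in X$. Since $\Gamma$ acts transitively on $X$ and $|X|=p^\ell$, the Orbit--Stabiliser theorem gives $|\stab_{\Gamma}(x_0)|=|\Gamma|/p^\ell$. Writing $|\Gamma|=p^a\cdot m$ with $\gcd(m,p)=1$ — and noting $a\geq \ell$, because $p^\ell$ divides $|\Gamma|$ — this identifies the largest power of $p$ dividing $|\stab_{\Gamma}(x_0)|$ as exactly $p^{a-\ell}$.

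Next, by Theorem~\ref{thm:Sylow}, the $p$-Sylow subgroup $\Gamma'$ has order $p^a$. The stabiliser of $x_0$ under the induced action is $\stab_{\Gamma'}(x_0)=\Gamma'\cap \stab_{\Gamma}(x_0)$, which is simultaneously a subgroup of the $p$-group $\Gamma'$ and of $\stab_{\Gamma}(x_0)$; hence its order divides $p^{a-\ell}$. Applying Orbit--Stabiliser a second time to $\Gamma'$ yields
\[
|\orb_{\Gamma'}(x_0)|\;=\;\frac{|\Gamma'|}{|\stab_{\Gamma'}(x_0)|}\;\geq\;\frac{p^a}{p^{a-\ell}}\;=\;p^\ell\;=\;|X|.
\]
Since $\orb_{\Gamma'}(x_0)\subseteq X$, equality must hold, so $\orb_{\Gamma'}(x_0)=X$ and $\Gamma'$ acts transitively on $X$.

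I do not anticipate any genuine obstacle here: the only delicate point is pinning down the $p$-adic valuation of $|\stab_{\Gamma}(x_0)|$ so that the Sylow condition $|\Gamma'|=p^a$ translates into the desired orbit lower bound. Everything else is a direct invocation of Orbit--Stabiliser together with the basic fact that subgroups of $p$-groups are $p$-groups.
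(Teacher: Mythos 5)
Your proof is correct and follows the standard argument for this well-known fact, which is the same approach used in the reference the paper cites (Lemma~2 of~\cite{DorflerRSW22}): fix a point, use Orbit--Stabiliser for $\Gamma$ to pin down the $p$-adic valuation of the stabiliser, observe that $\stab_{\Gamma'}(x_0)=\Gamma'\cap\stab_{\Gamma}(x_0)$ is a $p$-group whose order therefore divides $p^{a-\ell}$, and apply Orbit--Stabiliser again to $\Gamma'$ to force the orbit to have size at least $p^\ell=|X|$. Every step is justified, and the delicate point you flag — the valuation of $|\stab_{\Gamma}(x_0)|$ — is handled correctly.
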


\section{Bipartite Properties}\label{sec:bipartite}

The goal of this section is to establish Theorem~\ref{thm:main_bipartite}, which states a hardness result for bipartite properties. The result is similar to the one presented in~\cite[Theorem 1]{DorflerRSW22} and we follow their approach but adapt it to bipartite properties.
The crucial piece is Lemma~\ref{lem:bipartite_biclique_coeff} where we adapt the coefficient analysis. The subtle difference to~\cite{DorflerRSW22} is that for bipartite properties we cannot simply use~\cite[Lemma 1]{DorflerRSW22} (which is the heart of their algebraic approach). The issue is that a bipartite property is not necessarily invariant under all automorphisms. It is only known to be invariant under those automorphisms that respect the given bipartitions, i.e., under consistent automorphisms. The new coefficient analysis then relies on the fact that even consistent automorphisms act transitively on the edges of a $(k,k)$-biclique. 

As in several previous works, we use a technique by Curticapean, Dell and Marx~\cite{CurticapeanDM17}, which is now often referred to as \emph{complexity monotonicity}. 
First, in Lemma~\ref{lem:bipcm}, we express $\#\cpbipindsubs{\Psi}{c}{\bG}$ as a linear combination of homomorphism counts. Then, in Lemma~\ref{lem:bip_mono}, we establish the corresponding ``complexity monotonicity'' --- intuitively this is a tight reduction from determining a homomorphism count that contributes to the linear combination to determining the entire linear combination, i.e., to $\#\cpbipindsubs{\Psi}{c}{\bG}$.

\begin{lemma}\label{lem:BipIndtoInd}
Let $\bG$ be a bipartite graph with a consistent $\bH$-colouring $c$, and let $\Psi$ be a bipartite property. Let $G$ and $H$ be the underlying graphs of $\bG$ and $\bH$, respectively. We have
\[\#\cpbipindsubs{\Psi}{c}{\bG} = \sum_{A\subseteq E(\bH)} \Psi(\bH[A]) \cdot \#\cpindsubs{H[A]}{c}{G}\,. \]
\end{lemma}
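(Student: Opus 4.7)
The plan is to prove the identity by a direct partition argument, grouping colour-prescribed vertex subsets of $\bG$ according to the isomorphism type of the induced bipartite subgraph (as determined through the colouring). Unfolding definitions, the left-hand side equals
\[
\#\cpbipindsubs{\Psi}{c}{\bG} \;=\; \sum_{\substack{S\subseteq V(\bG)\\ c(S)=V(\bH)}} \Psi(\bG[S])\,,
\]
since $\Psi$ takes values in $\{0,1\}$ and the condition $c(S)=V(\bH)$ already forces $|S|=|V(\bH)|$ as well as the correct distribution of vertices across $V_1(\bG)$ and $V_2(\bG)$ (because $c$ is a consistent $\bH$-colouring).

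Next, I would associate to every such $S$ a canonical edge-subset of $\bH$. Concretely, let
\[
A_S \;\coloneqq\; \bigl\{\{c(u),c(v)\} \,:\, \{u,v\}\in E(\bG[S])\bigr\} \;\subseteq\; E(\bH)\,.
\]
Because $c$ is consistent, $c|_S$ is a bijection from $S$ onto $V(\bH)$ that sends $S\cap V_i(\bG)$ to $V_i(\bH)$ for $i\in\{1,2\}$; by construction of $A_S$, the map $c|_S$ is a consistent isomorphism from $\bG[S]$ to $\bH[A_S]$. Invariance of $\Psi$ under consistent isomorphisms therefore yields $\Psi(\bG[S])=\Psi(\bH[A_S])$.

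The key bijection-style observation is that, for every $A\subseteq E(\bH)$, one has $S\in \cpindsubs{H[A]}{c}{G}$ if and only if $A_S=A$; this is immediate from the definition of $\cpindsubs{H[A]}{c}{G}$, which demands exactly that $c|_S$ be an isomorphism from $G[S]$ to $H[A]$. Consequently, partitioning the outer sum according to the value of $A_S$ gives
\[
\sum_{\substack{S\subseteq V(\bG)\\ c(S)=V(\bH)}} \Psi(\bG[S])
\;=\; \sum_{A\subseteq E(\bH)} \Psi(\bH[A])\cdot \#\bigl\{S \,:\, c(S)=V(\bH),\; A_S=A\bigr\}
\;=\; \sum_{A\subseteq E(\bH)} \Psi(\bH[A])\cdot \#\cpindsubs{H[A]}{c}{G}\,,
\]
which is the desired identity.

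There is essentially no substantive obstacle here — the lemma is a definitional unfolding. The only point requiring care is ensuring that passing from the bipartite setting to the underlying graph setting on the right-hand side is legitimate; this is justified by the fact that $c$ being a \emph{consistent} $\bH$-colouring of $\bG$ forces $c|_S$ to respect the bipartition, so the edge-set $A_S$ computed via underlying-graph edges coincides with the set of bipartite edges induced by $\bG[S]$, and $\Psi$ (only defined up to consistent isomorphism) is evaluated on $\bH[A_S]$ without ambiguity.
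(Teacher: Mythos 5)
Your proof takes essentially the same route as the paper's: both define the edge-subset $A_S$ associated to a colour-prescribed vertex set $S$, observe that $c|_S$ is a consistent isomorphism from $\bG[S]$ to $\bH[A_S]$ (so $\Psi(\bG[S])=\Psi(\bH[A_S])$), and partition the sum by the value of $A_S$. The only slip is the claim that ``the condition $c(S)=V(\bH)$ already forces $|S|=|V(\bH)|$''. That is not true: surjectivity of $c|_S$ onto $V(\bH)$ only gives $|S|\ge |V(\bH)|$, and nothing stops $S$ from hitting the same colour twice. You must keep the constraint $|S|=|V(\bH)|$ (equivalently, ``one vertex per colour'') explicitly in the unfolded sum; otherwise the left-hand side would pick up extra terms with $|S|>|V(\bH)|$ that never appear in any $\cpindsubs{H[A]}{c}{G}$, and the claimed ``$S\in \cpindsubs{H[A]}{c}{G}$ iff $A_S=A$'' would fail for such $S$. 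With the size constraint reinstated the argument is correct and matches the paper's proof.
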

\begin{proof}
Recall from Section~\ref{sec:prelim} that, for each $A\subseteq E(H)$, $\cpindsubs{H[A]}{c}{G}$ is the set of all $S\subseteq V(G)$ for which $c\vert_S$ is an isomorphism from $G[S]$ to $H[A]$. Analogously, let $\cpbipindsubs{\bH[A]}{c}{\bG}$ be the set of all $S\subseteq V(G)$ for which $c\vert_S$ is a \emph{consistent} isomorphism from $\bG[S]$ to $\bH[A]$. Since $c$ is consistent by assumption of the lemma, we have 
\[
 \cpindsubs{H[A]}{c}{G}=\cpbipindsubs{\bH[A]}{c}{\bG}.
\]

Now recall that $\cpbipindsubs{\Psi}{c}{\bG}$ is the set of all subsets $S\subseteq V(G)$ of size $|V(\bH)|$ such that $c(S)=V(\bH)$ and $\Psi(\bG[S])=1$. We split this set according to the edges covered by $c(S)$: Let $S\in \cpbipindsubs{\Psi}{c}{\bG}$ and let $A_S=\{\{c(u),c(v)\}\in E(H) \mid \{u,v\}\in E(G[S])\}$. Since $c(S)=V(\bH)$ we observe that $c$ is a consistent homomorphism from $\bG[S]$ to $\bH[A_S]$, and that $c|_S$ is a consistent isomorphism from $\bG[S]$ to $\bH[A_S]$, i.e., that $S\in \cpbipindsubs{\bH[A_S]}{c}{\bG}$.
Summarising, we have
\begin{align*}
 \#\cpbipindsubs{\Psi}{c}{\bG} 
 &= \sum_{A\subseteq E(\bH)} \Psi(\bH[A]) \cdot \#\cpbipindsubs{\bH[A]}{c}{\bG}\\
 &= \sum_{A\subseteq E(\bH)} \Psi(\bH[A]) \cdot \#\cpindsubs{H[A]}{c}{G}\,.
\end{align*}
\end{proof}

\begin{lemma}\label{lem:bipcm}
Let $\Psi$ be a bipartite property and let $\bH$ be a bipartite graph. For every bipartite graph $\bG$ with consistent $\bH$-colouring $c$, we have
\begin{equation}\label{eq:bipcm}
    \#\cpbipindsubs{\Psi}{c}{\bG} = \sum_{T\subseteq E(H)} a_T \cdot \#\cphoms{H[T]}{c}{G}\,,
\end{equation}
    where $G$ and $H$ are the underlying graphs corresponding to $\bG$ and $\bH$, respectively, and $a_T=\sum_{A\subseteq T} \Psi(\bH[A])\cdot (-1)^{\#T-\#A}$.
\end{lemma}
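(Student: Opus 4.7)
The plan is to combine Lemma~\ref{lem:BipIndtoInd} with a standard inclusion--exclusion / Möbius inversion argument that converts the colour-prescribed strong-embedding counts into colour-prescribed homomorphism counts, and then to reorganise the resulting double sum.

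First, I would establish the key combinatorial identity
\[
    \#\cphoms{H[A]}{c}{G} \;=\; \sum_{T : A \subseteq T \subseteq E(H)} \#\cpindsubs{H[T]}{c}{G}\,,
\]
valid for every $A \subseteq E(H)$. This is proved by partitioning the set of $c$-colour-prescribed homomorphisms $\varphi$ from $H[A]$ to $G$ according to their ``edge trace'' $T(\varphi) \coloneqq \{\{c(\varphi(u)),c(\varphi(v))\} : \{\varphi(u),\varphi(v)\} \in E(G)\}$. Because $\varphi$ is colour-prescribed, its image picks out exactly one vertex per colour, so $\varphi$ is determined by its image $S = \varphi(V(H))$, and $c\vert_S$ is an isomorphism from $G[S]$ onto $H[T(\varphi)]$; that is, $S \in \cpindsubs{H[T(\varphi)]}{c}{G}$. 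Since $\varphi$ preserves all edges of $A$, one has $A \subseteq T(\varphi)$, and conversely every $S \in \cpindsubs{H[T]}{c}{G}$ with $T \supseteq A$ arises from a unique such $\varphi$ (via $(c\vert_S)^{-1}$).

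Next, Möbius inversion over the Boolean lattice of subsets of $E(H)$ gives
\[
    \#\cpindsubs{H[A]}{c}{G} \;=\; \sum_{T : A \subseteq T \subseteq E(H)} (-1)^{\#T - \#A} \cdot \#\cphoms{H[T]}{c}{G}\,.
\]
Substituting this identity into Lemma~\ref{lem:BipIndtoInd} and observing that $E(\bH) = E(H)$, we get
\[
    \#\cpbipindsubs{\Psi}{c}{\bG} \;=\; \sum_{A \subseteq E(H)} \Psi(\bH[A]) \sum_{T : A \subseteq T \subseteq E(H)} (-1)^{\#T - \#A} \cdot \#\cphoms{H[T]}{c}{G}\,.
\]
Swapping the order of summation collects, for each $T \subseteq E(H)$, the coefficient
\[
    a_T \;=\; \sum_{A \subseteq T} \Psi(\bH[A]) \cdot (-1)^{\#T - \#A}\,,
\]
yielding \eqref{eq:bipcm} as stated.

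I do not expect any real obstacle: the only substantive step is the homomorphism/strong-embedding identity, which is entirely combinatorial and follows immediately from the colour-prescribed set-up (each colour-prescribed homomorphism is identified with its image, on which $c$ restricts to a bijection to $V(H)$). The rest is bookkeeping: Möbius inversion and exchanging the order of summation. Note in particular that the bipartite structure plays no active role in this lemma beyond what is already packaged into Lemma~\ref{lem:BipIndtoInd}; the bipartition is used only to guarantee that $\Psi(\bH[A])$ is well-defined and that the consistency of $c$ carries through to the induced subgraphs that appear in the sum.
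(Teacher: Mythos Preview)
Your proposal is correct and follows essentially the same route as the paper: invoke Lemma~\ref{lem:BipIndtoInd}, replace each $\#\cpindsubs{H[A]}{c}{G}$ by an alternating sum of $\#\cphoms{H[T]}{c}{G}$ over $T\supseteq A$, and then swap the order of summation. The only cosmetic difference is that the paper cites this inclusion--exclusion identity as a black box from~\cite{DorflerRSW22}, whereas you derive it yourself via the forward identity and M\"obius inversion.
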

\begin{proof}
In~\cite[Claim 1]{DorflerRSW22}, it was shown that
\[ \#\cpindsubs{H[A]}{c}{G} = \sum_{J \subseteq E(H)\setminus A} (-1)^{\#J} \cdot \#\cphoms{H[A\cup J]}{c}{G} \,.\]
In combination with Lemma~\ref{lem:BipIndtoInd}, we obtain:

\[
    \#\cpbipindsubs{\Psi}{c}{\bG} = \sum_{A\subseteq E(\bH)} \Psi(\bH[A]) \cdot \sum_{J \subseteq E(H)\setminus A} (-1)^{\#J} \cdot \#\cphoms{H[A\cup J]}{c}{G}.
\]
Observe that $A \cup J= A' \cup J'$ implies $\#\cphoms{H[A\cup J]}{c}{G}=\#\cphoms{H[A'\cup J']}{c}{G}$. Also note that $E(\bH)=E(H)$.
Thus, collecting the terms for each edge subset of $H$, we have
\[
    \#\cpbipindsubs{\Psi}{c}{\bG}=\sum_{T\subseteq E(H)} \Bigl( \sum_{A\subseteq T} \Psi(\bH[A])\cdot (-1)^{\#T-\#A} \Bigr) \cdot \#\cphoms{H[T]}{c}{G}.
\]
\end{proof}

\begin{lemma}[Bipartite Complexity Monotonicity]\label{lem:bip_mono} Let $\bH$ be a bipartite graph and let $\Psi$ be a computable bipartite property. There exists a computable function $f$ and an algorithm $\mathbb{A}$ that, given as input a graph $G$ with an $H$-colouring $c$, and given oracle access to the function
\[\#\cpbipindsubs{\Psi}{c}{\star} \,,\]
computes $\#\cphoms{H[T]}{c}{G}$ for each $T$ with $a_T\neq 0$, where the $a_T$ are the coefficients in~\eqref{eq:bipcm}. Furthermore, both the running time of $\mathbb{A}$  and the size $|V(\hat{\bG})|$ of every oracle query $\hat{\bG}$ are bounded by $f(|H|)\cdot |G|$.
\end{lemma}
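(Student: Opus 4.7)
The plan is to follow the standard Curticapean--Dell--Marx tensor-product approach, adapted to the bipartite colour-prescribed setting introduced above. The idea is to build, for every edge subset $T^\star\subseteq E(H)$, a new bipartite instance whose colour-prescribed induced subgraph count is a known linear combination of the values $\#\cphoms{H[T]}{c}{G}$ that we want to compute. Concretely, I would take the bipartite colour-prescribed tensor product $(\bG_{T^\star},c_{T^\star}) \coloneqq (\bG,c) \timescc (\bH[T^\star],\id)$, where $\id$ is the obvious consistent $\bH$-colouring of $\bH[T^\star]$ by its own vertices, and let $c_{T^\star}$ denote the induced consistent $\bH$-colouring $(v,\hat v)\mapsto c(v)$.

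The key algebraic step is the factorisation of colour-prescribed homomorphism counts under the tensor product. By Fact~\ref{fact:convenient} the underlying graph of $(\bG_{T^\star},c_{T^\star})$ is the (non-bipartite) colour-prescribed tensor product $(G,c)\timesc (H[T^\star],\id)$, so I can quote the standard multiplicative identity
\[
    \#\cphoms{H[T]}{c_{T^\star}}{G_{T^\star}} \;=\; \#\cphoms{H[T]}{c}{G}\cdot \#\cphoms{H[T]}{\id}{H[T^\star]},
\]
which is established exactly as in~\cite{DorflerRSW22}. The second factor is simple to evaluate: the only $\id$-colour-prescribed map from $V(H)$ to $V(H)$ is the identity, and it is a homomorphism from $H[T]$ to $H[T^\star]$ if and only if $T\subseteq T^\star$. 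Substituting this into Lemma~\ref{lem:bipcm} applied to $(\bG_{T^\star},c_{T^\star})$ yields
\[
    \#\cpbipindsubs{\Psi}{c_{T^\star}}{\bG_{T^\star}} \;=\; \sum_{T\subseteq T^\star} a_T\cdot \#\cphoms{H[T]}{c}{G}.
\]

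Writing $y_{T^\star}$ for the left-hand side (which is what the oracle returns) and $x_T \coloneqq a_T\cdot \#\cphoms{H[T]}{c}{G}$, we obtain the zeta-transform relation $y_{T^\star}=\sum_{T\subseteq T^\star} x_T$ over all $T^\star\subseteq E(H)$. Möbius inversion over the Boolean lattice on $E(H)$ then gives the explicit formula $x_T=\sum_{S\subseteq T}(-1)^{|T|-|S|}y_S$, so for every $T$ with $a_T\neq 0$ the algorithm computes $\#\cphoms{H[T]}{c}{G}=x_T/a_T$. The coefficients $a_T$ can be computed in time depending only on $|H|$ since $\Psi$ is computable, and the oracle queries $(\bG_{T^\star},c_{T^\star})$ have at most $2^{|E(H)|}$ many; each has size bounded by $|V(\bG)|\cdot|V(\bH)|$, so both the number of queries and the size of every query are bounded by a function $f(|H|)$ times $|G|$ as required. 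The only subtle point, and arguably the main thing to verify carefully, is the tensor-product factorisation in the bipartite setting, but because the bipartite tensor product reduces to the ordinary coloured tensor product on underlying graphs (Fact~\ref{fact:convenient}) and colour-prescribed homomorphisms do not see the bipartition, this reduces to the factorisation already proved in~\cite{DorflerRSW22}.
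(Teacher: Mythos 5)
Your proposal is correct and follows essentially the same tensor-product route as the paper's own proof (query the oracle on $(\bG,c)\timescc(\bH[\hat T],\id)$ for every $\hat T\subseteq E(H)$, factorise via Fact~\ref{fact:convenient} and the multiplicative identity, and then invert the resulting linear system). The only cosmetic difference is at the final step: the paper cites \cite[Lemma~6]{DorflerRSW22} for non-singularity (triangularity) of the system matrix $M(T,\hat T)=\#\cphoms{H[T]}{\id}{H[\hat T]}$ and solves the linear system abstractly, whereas you note directly that the only $\id$-colour-prescribed map is the identity, so $M(T,\hat T)=[T\subseteq\hat T]$ is the zeta matrix of the Boolean lattice, and invert explicitly by M\"obius inversion --- the two arguments are interchangeable, with yours being slightly more self-contained.
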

\begin{proof}
Let $G$ be the given $H$-coloured graph with colouring $c$. Observe that the bipartition of $\bH$ together with the $H$-colouring $c$ induces a bipartition of $G$. Let $\bG$ be the corresponding bipartite graph.
Let $\id$ be the identity function on the vertices of $H$.
By Lemma~\ref{lem:bipcm} and Fact~\ref{fact:convenient}, for each subset $\hat{T}\subseteq E(H)$, we have
\[\#\cpbipindsubs{\Psi}{c}{(\bG, c)\timescc (\bH[\hat{T}], \id)} = \sum_{T\subseteq E(H)} a_T \cdot \#\cphoms{H[T]}{c}{(G,c) \timesc (H[\hat{T}],\id)}\,.\]
By~\cite[Lemma~5]{DorflerRSW22}, we have
\[\#\cphoms{H[T]}{c}{(G,c) \timesc (H[\hat{T}],\id)} = \#\cphoms{H[T]}{c}{G} \cdot \#\cphoms{H[T]}{c}{H[\hat{T}]}\,, \]
which allows us to rewrite
\[\#\cpbipindsubs{\Psi}{c}{(\bG,c)\timescc (\bH[\hat{T}],\id)} = \sum_{T\subseteq E(H)} b_T \cdot \#\cphoms{H[T]}{c}{H[\hat{T}]}\,,\]
where $b_T = a_T \cdot \#\cphoms{H[T]}{c}{G}$.
Calling the oracle for each $\hat{T}\subseteq E(H)$, this yields a system of linear equations, where the system matrix $M$ of size $2^{|E(H)|}\times 2^{|E(H)|}$ is given by
\[M(T,\hat{T}) = \#\cphoms{H[T]}{c}{H[\hat{T}]}\,.\]

It is known that $M$ is non-singular (even triangular)~\cite[Lemma 6]{DorflerRSW22}. Hence we can solve the system of linear equations and obtain as unique solution the values of $b_T$ for each $T\subseteq E(H)$. Consequently, for every $a_T\neq 0$, we can recover 
\[ \#\cphoms{H[T]}{c}{G} = b_T\cdot a_T^{-1}\,.\]

It remains to prove the bounds on the sizes of the oracle queries and on the running time. The former is immediate, since $|V((\bG,c) \timescc (\bH[\hat{T}],\id))|\leq |V(\bG)|\cdot |V(\bH)|$. For the latter, observe first, that the tensor product can also be constructed in time $|H|\cdot|G|$. Furthermore, the size of the matrix $M$, i.e., the number of equations, only depends on $H$, and not on $G$. In particular, by Gaussian elimination, we can certainly solve the system in time $\mathsf{poly}(2^{|H|})\cdot |V(G)|$. Finally, the computation of the coefficients $a_T=\sum_{A\subseteq T} \Psi(\bH[A])\cdot (-1)^{\#T-\#A}$ takes time only depending on $|H|$ and $\Psi$. Since $\Psi$ is fixed and computable\footnote{This is why we need computability of the properties for our hardness results. However, they would extend to non-computable properties under non-uniform FPT reductions, see~\cite{FlumG03}.}, the total time it takes to compute the $a_T$ is bounded by a computable function in $|H|$; this concludes the proof.
\end{proof}

The previous algorithm will serve as a reduction from  $\#\cphomsprob(\mathcal{H})$ to $\#\cpbipindsubsprob(\Psi)$, whenever $a_{E(H)}$ is non-zero for every $H\in \mathcal{H}$. 

The next lemma establishes a condition under which the coefficient that belongs to the $(k,k)$-biclique is non-zero.

\begin{lemma}\label{lem:bipartite_biclique_coeff}
Let $\bH=\bB_{k,k}$ for a prime $k$, and assume that $\Psi(\bB_{k,k})\neq\Psi(\bI_{k,k})$. Then the coefficient $a_{E(H)}$ in Equation~(\ref{eq:bipcm}) is non-zero.
\end{lemma}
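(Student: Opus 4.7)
The plan is to adapt the algebraic/Sylow argument outlined in Section~\ref{sec:techniques} to the setting where only consistent automorphisms preserve the bipartite property $\Psi$. Write $a \coloneqq a_{E(H)} = \sum_{A \subseteq E(\bH)} \Psi(\bH[A]) \cdot (-1)^{k^2 - |A|}$, where we used that $|E(\bH)| = k^2$. The goal is to show $a \not\equiv 0 \pmod{k}$.

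First, I would introduce the group $\Gamma_0 \coloneqq \auts{\bB_{k,k}}$ of consistent automorphisms of $\bB_{k,k}$, i.e.\ automorphisms of the underlying graph that fix the bipartition setwise. Concretely $\Gamma_0 \cong S_k \times S_k$, so $|\Gamma_0| = (k!)^2$, which is divisible by $k$ since $k$ is prime. The natural action of $\Gamma_0$ on $E(\bH)$ is transitive (any edge can be sent to any other by permuting left and right endpoints independently), and $|E(\bH)| = k^2$ is a power of the prime $k$. Apply Theorem~\ref{thm:Sylow} to obtain a $k$-Sylow subgroup $\Gamma \le \Gamma_0$, and then Lemma~\ref{lem:Sylow} to conclude that the induced action of $\Gamma$ on $E(\bH)$ is still transitive. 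By construction $|\Gamma|$ is a power of $k$.

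Next, extend the action of $\Gamma$ from $E(\bH)$ to $2^{E(\bH)}$ in the obvious way: $\pi \triangleright A \coloneqq \{\pi(e) \mid e \in A\}$. The key invariance is that for every $\pi \in \Gamma$ and every $A \subseteq E(\bH)$, the graphs $\bH[A]$ and $\bH[\pi \triangleright A]$ are consistently isomorphic (via $\pi$ itself), so $\Psi(\bH[A]) = \Psi(\bH[\pi \triangleright A])$; also $|A| = |\pi \triangleright A|$, so the sign $(-1)^{k^2 - |A|}$ is preserved. Hence the summand of $a$ indexed by $A$ is constant on each $\Gamma$-orbit. By the Orbit-Stabiliser theorem each orbit size divides $|\Gamma|$, a power of $k$, so each orbit has size either $1$ (a fixed point) or divisible by $k$. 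Taking the sum modulo $k$ kills every non-fixed orbit:
\begin{equation*}
    a \equiv \sum_{A \in \mathrm{Fix}(\Gamma)} \Psi(\bH[A]) \cdot (-1)^{k^2 - |A|} \pmod{k}.
\end{equation*}
Because $\Gamma$ acts transitively on $E(\bH)$, the only $\Gamma$-invariant subsets of $E(\bH)$ are $\emptyset$ and $E(\bH)$: any other non-empty proper subset $A$ contains some $e$ and misses some $f$, and transitivity gives $\pi \in \Gamma$ with $\pi(e)=f$, whence $\pi \triangleright A \neq A$.

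Finally, plugging in the two surviving terms:
\begin{equation*}
    a \equiv \Psi(\bI_{k,k}) \cdot (-1)^{k^2} + \Psi(\bB_{k,k}) \pmod{k}.
\end{equation*}
For odd primes $k$ this is $\Psi(\bB_{k,k}) - \Psi(\bI_{k,k}) \in \{-1, +1\} \pmod{k}$, hence non-zero mod $k$. For $k = 2$ it becomes $\Psi(\bI_{2,2}) + \Psi(\bB_{2,2}) \equiv 1 \pmod 2$, again non-zero. In both cases $a \not\equiv 0 \pmod{k}$, so $a \neq 0$, as required. The only subtle step, and the one worth double-checking carefully, is the verification that $\Gamma$ acts transitively on $E(\bH)$: this is where the restriction to \emph{consistent} automorphisms (as opposed to all of $\auts{B_{k,k}}$) must be justified, and it is precisely what forces us to redo the coefficient analysis rather than invoke \cite[Lemma~1]{DorflerRSW22} as a black box.
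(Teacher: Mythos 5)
Your proposal is correct and follows essentially the same route as the paper's proof: consistent automorphisms $\auts{\bB_{k,k}}$, a $k$-Sylow subgroup acting transitively on the $k^2$ edges (via Theorem~\ref{thm:Sylow} and Lemma~\ref{lem:Sylow}), orbit-counting modulo $k$, and identifying $\emptyset$ and $E(\bH)$ as the only fixed points. The only cosmetic difference is that you observe $|\auts{\bB_{k,k}}| = (k!)^2$ directly to get divisibility by $k$, whereas the paper derives it from the Orbit-Stabiliser theorem; both are fine.
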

\begin{proof}
Write $a\coloneqq a_{E(H)}$ and observe
\begin{equation}\label{eq:bcoef}
    a= \sum_{A\subseteq E(H)} \Psi(\bH[A])\cdot (-1)^{\#E(H)-\#A} = (-1)^{\#E(H)} \cdot \sum_{A\subseteq E(H)} \Psi(\bH[A])\cdot (-1)^{\#A}\,.
\end{equation}

First, for a bipartite graph $\bH=(V_1, V_2, E)$ with underlying graph $H$, we write $\auts{\bH}$ for the automorphisms of $H$ that are consistent automorphisms of $\bH$, i.e., that respect the bipartition $(V_1, V_2)$. Observe that $\auts{\bH}$ is a group under composition of functions, and that $\auts{\bH}$ acts transitively on the edges of $H$ (recall that $H=B_{k,k}$ is the biclique). Since $k$ is a prime, the number of edges of $H$ is a prime-power ($k^2$).  By the Orbit-stabiliser theorem and the fact that $\auts{\bH}$ acts transitively on $E(H)$ with $|E(H)|=k^2$, $|\auts{\bH}|$ is divisible by $k$. Thus, by Theorem~\ref{thm:Sylow}, a $k$-Sylow subgroup, say $\Gamma'$, must exist.
Furthermore, by Lemma~\ref{lem:Sylow}, $\Gamma'$ also acts transitively on $E(H)$.

Now, the action of $\Gamma'$ on $E(\bH)$ induces an action on the subsets $A$ of $E(\bH)$ by setting $g \triangleright A\coloneqq\allowbreak\{g \triangleright e \mid e\in A\}$. Since $\Gamma'$ only contains automorphisms that preserve the bipartition $(V_1,V_2)$, the subgraph $\bH[g\triangleright A]$ is well-defined, and $\bH[A]$ and $\bH[g\triangleright A]$ are consistently isomorphic for any $A\subseteq E(\bH)$ and $g\in \Gamma'$. In particular, the latter implies that $\Psi(\bH[A])=\Psi(\bH[g\triangleright A])$ for any $A\subseteq E(\bH)$ and $g\in \Gamma'$. Consequently, writing $\mathcal{O}$ for the set of all orbits of the action of $\Gamma'$ on subsets of $E(\bH)$, we can partition the sum in \eqref{eq:bcoef} as follows:
\[
    a= (-1)^{\#E(H)} \cdot \sum_{\orb_{\Gamma'}(A)\in \mathcal{O}} |\orb_{\Gamma'}(A)| \cdot \Psi(\bH[A])\cdot (-1)^{\#A}\,.
\]
Now recall that $\Gamma'$ is a $k$-Sylow subgroup and thus its order is a power of $k$. Since the size of each orbit must divide the order of $\Gamma'$, we obtain that $|\orb_{\Gamma'}(A)|=0\mod k$, unless $A$ is a fixed point. However, as $\Gamma'$ acts transitively on the edges, the only two fixed points of the induced action on edge-subsets are $A=\emptyset$ and $A=E(\bH)$. Finally, we observe that $\bH[\emptyset]$ is consistently isomorphic to $\bI_{k,k}$, and $\bH[E(\bH)]=\bH =\bB_{k,k}$. Hence 
\[a\equiv (-1)^{\#E(\bH)} \cdot \left(\Psi(\bI_{k,k}) + (-1)^{\#E(\bH)}\cdot \Psi(\bB_{k,k})\right) \equiv \Psi(\bB_{k,k}) + (-1)^{\#E(\bH)}\cdot \Psi(\bI_{k,k}) \mod k\,.\]
Since $\bH =\bB_{k,k}$, we have $\#E(\bH)=k^2$. Hence, if $k$ is a prime greater than $2$, we have 
\[a\equiv\Psi(\bB_{k,k}) - \Psi(\bI_{k,k}) \neq 0 \mod k \,,\]
as $\Psi(\bB_{k,k})\neq \Psi(\bI_{k,k})$ by assumption. For the same reason, if $k=2$, we have
\[a\equiv\Psi(\bB_{k,k}) + \Psi(\bI_{k,k}) \neq 0 \mod k \,.\]
Consequently, $a\neq 0$, which concludes the proof.
\end{proof}

Finally, we are able to put all of the previous pieces together and prove the main result of this section. To this end, recall that $\#\bipindsubsprob(\Psi)$ is the problem of, given as input a bipartite graph $\bG$ and a positive integer $k$, computing the number of all induced subgraphs of size $k$ of $\bG$ that satisfy $\Psi$. The parameterisation is given by $k$. Recall further that $\#\cpbipindsubsprob(\Psi)$ is the colour-prescribed version in which we expect as input a bipartite graph $\bG$ together with a consistent $\bH$-colouring~$c$, and the goal is to compute $\#\cpbipindsubs{\Psi}{c}{\bG}$; the parameter is $|V(\bH)|$.

A set of integers $\mathcal{K}$ is dense if there exists a constant $c$ such that for every positive integer $m$, there is a $k\in  \mathcal{K}$ with $m\leq k\leq c\cdot m$.
\mainbipartite*
\begin{proof}
Consider the following class of bicliques:
\[\mathcal{B}=\{B_{k,k} \mid k\text{ is prime and }\Psi(\bB_{k,k})\neq\Psi(\bI_{k,k})\}\,.\]
Recall that $\#\cphomsprob(\mathcal{B})$ is the problem that expects as input a $B_{k,k}$-coloured graph $G$ (for $B_{k,k}\in \mathcal{B}$), and the goal is to count the colour-prescribed homomorphisms from $B_{k,k}$ to $G$; the parameter is $|V(B_{k,k})|=2k$. It was shown in~\cite[Proof of Theorem 1]{DorflerRSW22} that for an infinite set of bicliques $\mathcal{B}$, the problem $\#\cphomsprob(\mathcal{B})$ is $\#\W{1}$-hard, and that for a dense set $\mathcal{B}$, it cannot be solved in time $f(k)\cdot |G|^{o(k)}$ for any function $f$, unless ETH fails. Now observe that from Lemmas~\ref{lem:bip_mono} and~\ref{lem:bipartite_biclique_coeff} it follows that $\#\cphomsprob(\mathcal{B})\tightred \#\cpbipindsubsprob(\Psi)$ (in particular, the parameter size does not increase, and the size of each oracle call is bounded by $f(|B_{k,k}|)\cdot |G|$). By Lemma~\ref{lem:tightred}, not only $\#\W{1}$-hardness, but also the tight lower bound under ETH transfers over to $\#\cpbipindsubsprob(\Psi)$. 

Finally, $\#\cpbipindsubsprob(\Psi) \tightred \#\bipindsubsprob(\Psi)$ via an easy application of the inclusion-exclusion principle: A proof is given in~\cite[Lemma 10]{DorflerRSW22} for graph properties, but it applies verbatim to bipartite properties, and the corresponding problems, as well. This concludes the proof.
\end{proof}

\section{From Bipartite Properties to Graph Properties}\label{sec:technical}

In this section, we prove Theorem~\ref{thm:technicalhardmain}, which states a hardness result for the problem $\#\indsubsprob(\Phi)$ subject to certain conditions of the graph property $\Phi$. These conditions are somewhat technical but they are also quite powerful. In particular, we will derive both our hardness result for hereditary properties (Section~\ref{sec:hereditary}) and the result for properties that are invariant under homomorphic equivalence (Section~\ref{sec:homequivalent}) from this result. We have not fully explored the implications of Theorem~\ref{thm:technicalhardmain}; we note, however, that it can also be applied to recover the known hardness result for the property of being ``connected''/ ``disconnected'' (one of the main results from~\cite{JerrumM15}) as well as establishing hardness results for other previously unclassified properties such as ``disconnected or bipartite''. Further examples are given in Section~\ref{sec:homequivalent}.

 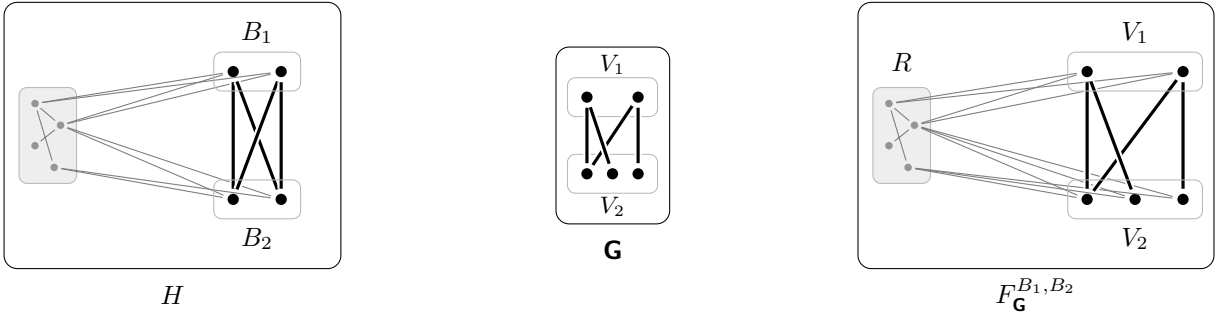
\begin{figure}[t]
        \centering
        \begin{tikzpicture}[scale=0.85]

            \begin{scope}[]
                \node (e1) at (-2.2, .8) {};
                \node (e2) at (.4, -2.8) {};
                \node (f1) at (2.5, .8) {};
                \node (f2) at (1.6, -2.8) {};
                \node[cvertex] (x1) at (-1.8, -1.5) {};
                \node[cvertex] (x2) at (-2, -.5) {};
                \node[cvertex] (x3) at (-1.6, -1) {};
                \node[fit=(x1)(x2)(x3), rounded corners=3pt, draw=lipicsGray!40,
                    fill=lipicsGray!10] {};
                \node[fit=(e1)(e2)(f1)(f2), rounded corners=5pt, draw] (f) {};
                \node at ($(f.south) + (0,-.4)$) {$H$};

                \node[cvertex] (x1) at (-1.7, -1.5) {};
                \node[cvertex] (x2) at (-2, -.5) {};
                \node[cvertex] (x3) at (-1.6, -.84) {};
                \node[cvertex] (x4) at (-2, -1.16) {};

                \node[vertex] (U4) at (1.1, -0) {};
                \node[vertex] (U2) at (1.85, -0) {};
                \node[vertex] (V3) at (1.85, -2) {};
                \node[vertex] (V4) at (1.1, -2) {};
                \draw[cedge] (U4) -- (x3) -- (V4);
                \draw[cedge] (x3) -- (V3);
                \draw[cedge] (U4) -- (x2);
                \draw[cedge] (V4) -- (x1) -- (x2) -- (x3) -- (x4);
                \draw[cedge] (V3) -- (x1);
                \draw[cedge] (U2) -- (x3);
                \draw[cedge] (U2) -- (x2);
                
                \draw[sdedge] (U4) -- (V4);
                \draw[sdedge] (U4) -- (V3);
                \draw[sdedge] (U2) -- (V4);
                \draw[sdedge] (U2) -- (V3);
                
                \node[fit=(U2)(U4), rounded corners=3pt, draw=lipicsGray!40] (x1) {};
                \node[fit=(V3)(V4), rounded corners=3pt, draw=lipicsGray!40] (x2) {};
                \node at ($(x1.north) + (0,.3)$) {$B_1$};
                \node at ($(x2.south) + (0,-.3)$) {$B_2$};
            \end{scope}

             \begin{scope}[xshift=13.75em]
                \node[vertex] (U3) at (2.6, -0.4) {};
                \node[vertex] (U4) at (1.8, -0.4) {};
                \node[vertex] (V2) at (2.6, -1.6) {};
                \node[vertex] (V3) at (2.2, -1.6) {};
                \node[vertex] (V4) at (1.8, -1.6) {};
                \node[fit=(U3)(U4), rounded corners=3pt, draw=lipicsGray!40] (x1) {};
                \node[fit=(V2)(V3)(V4), rounded corners=3pt, draw=lipicsGray!40] (x2) {};
                \node at ($(x2.south) + (0,-.2)$) {\small$V_2$};
                \node at ($(x1.north) + (0,.2)$) {\small$V_1$};
                \node (f1) at (2.81, .1) {};
                \node (f2) at (1.6, -2.1) {};
                \draw[sdedge] (U3) -- (V4);
                \draw[sdedge] (U3) -- (V2);
                \draw[sdedge] (V3) -- (U4) -- (V4);
                \node[fit=(f1)(f2), rounded corners=5pt, draw] (g) {};
                \node at ($(g.south) + (0,-.4)$) {$\bG$};
            \end{scope}

            \begin{scope}[xshift=38em]
             \node (e1) at (-2.2, .8) {};
                \node (e2) at (.4, -2.8) {};
                \node (f1) at (2.8, .8) {};
                \node (f2) at (1.6, -2.8) {};
                \node[cvertex] (x1) at (-1.8, -1.5) {};
                \node[cvertex] (x2) at (-2, -.5) {};
                \node[cvertex] (x3) at (-1.6, -1) {};
                \node[fit=(x1)(x2)(x3), rounded corners=3pt, draw=lipicsGray!40,
                    fill=lipicsGray!10] (z) {};
                \node at ($(z.north) + (0,.4)$) {$R$};
                \node[fit=(e1)(e2)(f1)(f2), rounded corners=5pt, draw] (f) {};
                \node at ($(f.south) + (0,-.4)$) {$F_{\bG}^{B_1, B_2}$};

                \node[cvertex] (x1) at (-1.7, -1.5) {};
                \node[cvertex] (x2) at (-2, -.5) {};
                \node[cvertex] (x3) at (-1.6, -.84) {};
                \node[cvertex] (x4) at (-2, -1.16) {};

                \node[vertex] (U3) at (2.6, -0) {};
                \node[vertex] (U4) at (1.1, -0) {};
                \node[vertex] (V2) at (2.6, -2) {};
                \node[vertex] (V3) at (1.85, -2) {};
                \node[vertex] (V4) at (1.1, -2) {};
                \draw[cedge] (U4) -- (x3) -- (V4);
                \draw[cedge] (U3) -- (x3) -- (V2);
                \draw[cedge] (x3) -- (V3);
                \draw[cedge] (U3) -- (x2);
                \draw[cedge] (U4) -- (x2);
                \draw[cedge] (V4) -- (x1) -- (x2) -- (x3) -- (x4);
                \draw[cedge] (V2) -- (x1);
                \draw[cedge] (V3) -- (x1);
                \draw[sdedge] (U3) -- (V4);
                \draw[sdedge] (U3) -- (V2);
                \draw[sdedge] (V3) -- (U4) -- (V4);
                \node[fit=(U3)(U4), rounded corners=3pt, draw=lipicsGray!40] (x1) {};
                \node[fit=(V2)(V3)(V4), rounded corners=3pt, draw=lipicsGray!40] (x2) {};
                \node at ($(x1.north) + (0,.3)$) {$V_1$};
                \node at ($(x2.south) + (0,-.3)$) {$V_2$};
         \end{scope}
        \end{tikzpicture}
        \caption{The implant of a bipartite graph $\bG$ into a pair of adjacent blocks $(B_1,B_2)$.}\label{fig:implant2}
    \end{figure}

\begin{defn}\label{def:FHBG}
Let $H$ be a graph, let $B_1$ and $B_2$ be non-empty disjoint sets of false twins\footnote{In our applications these sets of false twins will either be adjacent blocks (Section~\ref{sec:hereditary}) or the endpoints of an edge (Section~\ref{sec:homequivalent}).} (pairwise within the respective set) in $H$, and let $R= V(H)\setminus (B_1 \cup B_2)$.
Let $\bG=(V_1,V_2,E)$ be a bipartite graph. The \emph{\implant of $\bG$ into $(B_1,B_2)$}, is the following graph $F^{B_1, B_2}_\bG$ (see Figure~\ref{fig:implant2} for an illustration):
\begin{itemize}
    \item The vertices of $F^{B_1, B_2}_\bG$ are $V_1 \cup V_2 \cup R$.
    \item The edges of $F^{B_1, B_2}_\bG$ are partitioned into three groups:
    \begin{enumerate}
        \item[(a)] $E(F^{B_1, B_2}_\bG[R]) = E(H[R])$, that is, the edges between vertices in $R$ are as in $H$.
        \item[(b)] $E(F^{B_1, B_2}_\bG[V_1\cup V_2])= E(\bG)$, that is, the edges between $V_1$ and $V_2$ are as in $\bG$.
        \item[(c)] Finally, for every $v\in R$ with a neighbour in $B_1$\footnote{Since $B_1$ is a set of false twins, if $v$ is adjacent to one vertex in $B_1$, it is adjacent to all of them.}, we add all edges between $v$ and $V_1$. Similarly, if $v$ has a neighbour in $B_2$, we add all edges between $v$ and $V_2$.
    \end{enumerate}
\end{itemize}
Since the sets $B_1$ and $B_2$ (and implicitly the graph $H$) will always be clear from context, we drop the superscript from now on and simply write $F_{\bG}$.
Given a graph property $\Phi$, the \emph{\implant of $\Phi$ into $(B_1,B_2)$} is the bipartite property $\Psi^{B_1, B_2}_{\Phi}$ defined by $\Psi^{B_1, B_2}_{\Phi}(\bG)=\Phi(F_\bG)$.
Again, we drop the superscript from now on and simply write $\Psi_\Phi$.
Note that $\Psi_\Phi$ is well-defined since the graphs $F_\bG$ and $F_{\bG'}$ are, by construction, isomorphic whenever there is a \emph{consistent} isomorphism between $\bG$ and ${\bG'}$.
\end{defn}

We show a reduction from $\#\bipindsubsprob(\Psi_\Phi)$ to $\#\indsubsprob(\Phi)$ by constructing $F_\bG$ and using the inclusion-exclusion principle. A special case of this construction was performed in~\cite{RothSW20}.

\begin{lemma}\label{lem:implantreduction}
Let $\Phi$ be a computable graph property, let $H$ be a graph that contains non-empty sets of false twins $B_1$ and $B_2$, and let $\Psi_{\Phi}$ be the \implant of $\Phi$ into $(B_1, B_2)$.
Then $\#\bipindsubsprob(\Psi_\Phi) \tightred \#\indsubsprob(\Phi)$.
\end{lemma}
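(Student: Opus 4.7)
The plan is to reduce an instance $(\bG,k)$ of $\#\bipindsubsprob(\Psi_\Phi)$ to $O(1)$ instances of $\#\indsubsprob(\Phi)$ on subgraphs of the implant $F_\bG$. Set $R = V(H)\setminus(B_1\cup B_2)$ and $k' = k + |R|$; since $H$, $B_1$, $B_2$ are fixed by $\Phi$, we have $|R| = O(1)$ and $k' = k + O(1)$. The reduction computes $F_\bG$ in polynomial time and then, for every $J\subseteq R$ (so $2^{|R|} = O(1)$ queries), calls the oracle on the instance $(F_\bG\setminus J,\, k')$.

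The key step is to establish the identity
\begin{equation}\label{eq:keyident}
    \#\bipindsubs{\Psi_\Phi,k}{\bG} \;=\; \#\{S \in \indsubs{\Phi,k'}{F_\bG} \mid R\subseteq S\}.
\end{equation}
For this I would first observe, directly from Definition~\ref{def:FHBG}, that for every $S'\subseteq V_1\cup V_2$ the induced subgraph $F_\bG[R\cup S']$ coincides with $F_{\bG[S']}$. Indeed, the edges inside $R$ are $E(H[R])$ in both graphs, the edges inside $S'$ are $E(\bG[S'])$ in both graphs, and for $v\in R$ the adjacencies to $S'\cap V_i$ depend only on whether $v$ has a neighbour in $B_i$, which is the same condition in both constructions. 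Therefore $\Phi(F_\bG[R\cup S']) = \Phi(F_{\bG[S']}) = \Psi_\Phi(\bG[S'])$, and the map $S' \mapsto R\cup S'$ is a bijection between the two sides of~\eqref{eq:keyident}.

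Next I apply standard inclusion--exclusion over the subsets of $R$ that are missed by the chosen vertex set:
\begin{equation*}
    \#\{S \in \indsubs{\Phi,k'}{F_\bG} \mid R\subseteq S\} \;=\; \sum_{J\subseteq R} (-1)^{|J|}\cdot \#\indsubs{\Phi,k'}{F_\bG\setminus J},
\end{equation*}
where each term is exactly what the oracle returns on the query $(F_\bG\setminus J,\,k')$. Combining with~\eqref{eq:keyident} gives the value of $\#\bipindsubs{\Psi_\Phi,k}{\bG}$.

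Finally I check the reduction is tight in the sense of the definition preceding Lemma~\ref{lem:tightred}: the number of oracle calls is $2^{|R|} = O(1)$, each query has size polynomial in $|\bG|$, and the parameter of each query is $k' = k + |R| \in O(k)$. The main (very mild) obstacle is simply verifying the set-theoretic identity $F_\bG[R\cup S'] = F_{\bG[S']}$ case-by-case on the three edge classes of Definition~\ref{def:FHBG}; everything else is routine.
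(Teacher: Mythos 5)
Your proof is correct and follows essentially the same route as the paper: establish the bijection between $\bipindsubs{\Psi_\Phi,k}{\bG}$ and $\{S\in\indsubs{\Phi,k'}{F_\bG}\mid R\subseteq S\}$ via the identity $F_\bG[R\cup S']=F_{\bG[S']}$, then recover the latter count from $2^{|R|}$ oracle queries by inclusion--exclusion, observing that $|R|=O(1)$ and $k'=k+O(1)$ so the reduction is tight. The only cosmetic difference is that you spell out the case-by-case verification of $F_\bG[R\cup S']=F_{\bG[S']}$ over the three edge classes of Definition~\ref{def:FHBG}, which the paper uses implicitly in the chain of equalities proving its Claim~1.
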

\begin{proof}
Let $(\bG, k)$ be an instance of $\#\bipindsubsprob(\Psi_\Phi)$ with $\bG=(V_1, V_2, E)$, and let $G$ be the underlying graph of $\bG$.
Let $R = V(H)\setminus(B_1 \cup B_2)$, and let $k' = |R| + k$. Further, let $F_{\bG}$ be the \implant of $\bG$ into $(B_1, B_2)$.
Note that the vertices of $F_{\bG}$ can be partitioned into the sets $V_1$, $V_2$ and $R$.
We show two claims about the set $\{S \in \indsubs{\Phi,k'}{F_{\bG}} \mid R \subseteq S\}$, i.e., the set of all induced subgraphs of size $k'$ in $F_{\bG}$ that satisfy $\Phi$ and that contain all vertices in $R$.

\begin{clm}\label{clm:equalCardinality}
We have
$
    \#\bipindsubs{\Psi_\Phi,k}{\bG} = \#\{S \in \indsubs{\Phi,k'}{F_{\bG}} \mid R \subseteq S\}.
$
\end{clm}
\begin{proofclaim}
\begin{align*}
    \#\{S \in \indsubs{\Phi,k'}{F_{\bG}} \mid R \subseteq S\}
    &= \#\{S\subseteq V(G) \mid \Phi(F_\bG[S\cup R])=1 \text{ and } |S|=k\}\\
    &= \#\{S\subseteq V(G) \mid \Phi(F_{\bG[S]})=1 \text{ and } |S|=k\}\\
    &= \#\{S\subseteq V(G) \mid \Psi_{\Phi}(\bG[S])=1 \text{ and } |S|=k\}\\
    &= \#\bipindsubs{\Psi_\Phi, k}{\bG}
\end{align*}
\end{proofclaim}

For each subset $J\subseteq R$, let $F_{\bG}\setminus J$ denote the graph $F_{\bG}[V(F_{\bG})\setminus J]$, i.e., the vertices in $J$ are deleted from $F_{\bG}$. We continue with a standard inclusion-exclusion argument.
\begin{clm}\label{clm:InclusionExclusion}
We have
$
\#\{S \in \indsubs{\Phi,k'}{F_{\bG}} \mid R \subseteq S\} = \sum_{J\subseteq R}{(-1)^{|J|} \cdot \indsubs{\Phi,k'}{F_{\bG}\setminus J}}.
$
\end{clm}
\begin{proofclaim}
Using the principle of inclusion and exclusion, we obtain that
\begin{align*} 
\#&\{S \in \indsubs{\Phi,k'}{F_{\bG}} \mid R \subseteq S\}\\
&= \#\indsubs{\Phi,k'}{F_{\bG}} - \#\{S \in \indsubs{\Phi,k'}{F_{\bG}} \mid   \exists\, v \in R\setminus S  \}\\
&=\#\indsubs{\Phi,k'}{F_{\bG}}  - \left| \bigcup_{v\in R} {\{ S \in \indsubs{\Phi,k'}{F_{\bG}} \mid v \notin S \}}  \right| \\ 
&=\#\indsubs{\Phi,k'}{F_{\bG}}  - \sum_{\emptyset \neq J \subseteq R}{ (-1)^{|J|+1} \left|  \bigcap_{v\in J}{ \{ S \in  \indsubs{\Phi,k'}{F_{\bG}} \mid v \notin S \} }  \right|  } \\ 
&=\#\indsubs{\Phi,k'}{F_{\bG}}  - \sum_{\emptyset \neq J \subseteq R}{ (-1)^{|J|+1} \#\{ S \in \indsubs{\Phi,k'}{F_{\bG}} \mid \forall\, v \in J: v \notin S \}  }\\ 
&=\#\indsubs{\Phi,k'}{F_{\bG}}  - \sum_{\emptyset \neq J \subseteq R}{ (-1)^{|J|+1} \#\indsubs{\Phi,k'}{F_{\bG}\setminus J}  }\\  
&=  \sum_{J \subseteq R}{ (-1)^{|J|}\cdot \#\indsubs{\Phi,k'}{F_{\bG}\setminus J}}. 
\end{align*}
\end{proofclaim}

By Claims~\ref{clm:equalCardinality} and~\ref{clm:InclusionExclusion}, an algorithm $\mathbb{A}$ that makes $2^{|R|}\in O(1)$ oracle calls, each of the form $(F_{\bG}\setminus J, k')$, can compute the sought-for value $\#\bipindsubs{\Psi_\Phi,k}{\bG}$ in time linear in $|G|$. In particular, $|V(F_{\bG})\setminus J|\in O(|V(G)|)$ for each $J\subseteq R$ and $k'\in O(k)$. This completes the proof.
\end{proof}

The following result is a direct consequence of Theorem~\ref{thm:main_bipartite} and Lemma~\ref{lem:implantreduction}.
\begin{theorem}\label{thm:technicalhardmain}
Let $\Phi$ be a computable graph property, let $H$ be a graph that contains sets of false twins $B_1$ and $B_2$, and let $\Psi_{\Phi}$ be the \implant of $\Phi$ into $(B_1, B_2)$.
Let $\mathcal{K}$ be the set of primes $k$ for which $\Psi_{\Phi}(\bI_{k,k})\neq \Psi_{\Phi}(\bB_{k,k})$.
If $\mathcal{K}$ is infinite then $\#\indsubsprob(\Phi)$ is $\#\W{1}$-hard.
Moreover, if $\mathcal{K}$ is dense then $\#\indsubsprob(\Phi)$ cannot be solved in time $f(k)\cdot |G|^{o(k)}$ for any function $f$, assuming ETH.
\end{theorem}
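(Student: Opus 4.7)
The plan is to simply compose the two main ingredients already established: the hardness classification for bipartite properties (Theorem~\ref{thm:main_bipartite}) and the tight reduction from bipartite implants to the original induced-subgraph counting problem (Lemma~\ref{lem:implantreduction}). No new combinatorial work is needed; the proof is essentially a bookkeeping exercise showing that the hypotheses on $\mathcal{K}$ match up across the two results.

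First I would instantiate Theorem~\ref{thm:main_bipartite} with the bipartite property $\Psi_{\Phi}$ (which is well-defined since $\Psi_{\Phi}(\bG)=\Phi(F_{\bG})$ is invariant under consistent isomorphisms of $\bG$, as remarked in Definition~\ref{def:FHBG}, and is computable because $\Phi$ is computable and the construction of $F_{\bG}$ is effective). The set $\mathcal{K}$ in the statement of Theorem~\ref{thm:technicalhardmain} is exactly the set of primes $k$ with $\Psi_{\Phi}(\bI_{k,k})\neq\Psi_{\Phi}(\bB_{k,k})$, which is precisely the set featured in the hypothesis of Theorem~\ref{thm:main_bipartite}. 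Hence, if $\mathcal{K}$ is infinite then $\#\bipindsubsprob(\Psi_{\Phi})$ is $\#\W{1}$-hard, and if $\mathcal{K}$ is dense then $\#\bipindsubsprob(\Psi_{\Phi})$ cannot be solved in time $f(k)\cdot |\bG|^{o(k)}$ under ETH.

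Next I would invoke Lemma~\ref{lem:implantreduction}, which yields $\#\bipindsubsprob(\Psi_{\Phi}) \tightred \#\indsubsprob(\Phi)$. The $\#\W{1}$-hardness part transfers immediately since tight parameterised Turing reductions are, in particular, parameterised Turing reductions. For the conditional lower bound, I would apply Lemma~\ref{lem:tightred}: since the reduction of Lemma~\ref{lem:implantreduction} produces oracle instances of the form $(F_{\bG}\setminus J, k')$ with $k'=|R|+k\in O(k)$ and $|V(F_{\bG}\setminus J)|\in O(|V(G)|)$, any algorithm for $\#\indsubsprob(\Phi)$ with running time $f(k)\cdot |G|^{o(k)}$ would contradict the ETH-based lower bound for $\#\bipindsubsprob(\Psi_{\Phi})$ established above.

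There is no real obstacle here: the only subtlety worth mentioning explicitly in the proof is the observation that $|R|$ and hence the additive offset $k'-k$ depend only on $H$, $B_1$, $B_2$ (which are fixed parameters of the construction and not part of the input), so the reduction is indeed tight in the sense of Lemma~\ref{lem:tightred}. With that in place, both conclusions of Theorem~\ref{thm:technicalhardmain} follow directly.
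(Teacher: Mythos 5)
Your proof is correct and matches the paper's approach exactly: the paper itself declares Theorem~\ref{thm:technicalhardmain} to be a direct consequence of Theorem~\ref{thm:main_bipartite} and Lemma~\ref{lem:implantreduction}, which is precisely the composition you carry out, with Lemma~\ref{lem:tightred} handling the transfer of the ETH-based lower bound. The points you highlight --- that $\Psi_\Phi$ is computable and well-defined, that $\mathcal{K}$ is the same set in both statements, and that $|R|$ is a constant so $k'\in O(k)$ --- are exactly the bookkeeping needed, and nothing is missing.
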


\section{Hereditary Graph Properties}\label{sec:hereditary}

Let $\Phi$ be a hereditary graph property. Recall from Observation~\ref{obs:hereditarynontriv} that $\Phi$ is not \meagre if $\Phi$ is true for infinitely many graphs and there is at least one forbidden induced subgraph, i.e., $\obstr(\Phi)\neq \emptyset$. Recall further that $\overline{\Phi}$ is the inverse of $\Phi$. A graph $H'$ is an induced subgraph of a graph $H$ if and only if $\overline{H'}$ is an induced subgraph of $\overline{H}$. Consequently, $\overline{\Phi}$ is hereditary if and only if $\Phi$ is hereditary, and for the set of forbidden induced subgraphs of $\overline{\Phi}$ we have $\obstr(\overline{\Phi})=\{\overline{H} \mid H\in \obstr(\Phi) \}$.

What follows is a simple application of Ramsey's theorem. Suppose that $\obstr(\Phi)$ contains an independent set of size $c_1$, and that $\obstr(\overline{\Phi})$ contains an independent set of size $c_2$. Then $\obstr(\Phi)$ contains a clique of size $c_2$. Let $F$ be a graph with $\Phi(F)=1$. Then $F$ does not contain an independent set of size $c_1$, and it also does not contain a clique of size $c_2$ as an induced subgraph. By Ramsey's theorem, $|V(F)|$ is bounded by the Ramsey number $R(c_1,c_2)$. As a consequence, $\Phi$ must be false on all graphs with more than $R(c_1,c_2)$ vertices; hence $\Phi$ is \meagre. We formally state this observation.

\begin{obs}\label{obs:ramsey}
Let $\Phi$ be a hereditary graph property. If $\obstr(\Phi)$ and $\obstr(\overline{\Phi})$ contain an independent set, then $\Phi$ is \meagre.
\end{obs}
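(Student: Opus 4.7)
The preceding paragraph already lays out the natural route, so my plan is essentially to formalise that sketch with a clean appeal to Ramsey's theorem. The key identity to exploit is the duality $\obstr(\overline{\Phi}) = \{\overline{H} \mid H \in \obstr(\Phi)\}$, which follows because $H'$ is an induced subgraph of $H$ iff $\overline{H'}$ is an induced subgraph of $\overline{H}$, and because inclusion-minimality is preserved under complementation.

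First, I would fix integers $c_1, c_2$ such that $I_{c_1} \in \obstr(\Phi)$ and $I_{c_2} \in \obstr(\overline{\Phi})$. By the duality above, $I_{c_2} \in \obstr(\overline{\Phi})$ is equivalent to $\overline{I_{c_2}} = K_{c_2} \in \obstr(\Phi)$. Thus $\Phi$ forbids, as induced subgraphs, both the independent set $I_{c_1}$ and the clique $K_{c_2}$.

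Second, let $F$ be any graph with $\Phi(F) = 1$. Since $\Phi$ is hereditary and $I_{c_1}, K_{c_2} \in \obstr(\Phi)$, the graph $F$ can contain neither $I_{c_1}$ nor $K_{c_2}$ as an induced subgraph. Ramsey's theorem guarantees that every graph on at least $R(c_1, c_2)$ vertices contains either $I_{c_1}$ or $K_{c_2}$ as an induced subgraph, so we must have $|V(F)| < R(c_1, c_2)$. Hence $\Phi$ is false on every graph with at least $R(c_1, c_2)$ vertices, which directly implies that $\Phi$ is $k$-trivial (constant false) for every $k \geq R(c_1, c_2)$, and therefore \meagre by definition.

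No real obstacle is expected: the argument is a one-shot application of Ramsey, and the only subtlety worth spelling out is the complementation identity for forbidden induced subgraphs, which is why Ramsey's two-colour bound applies symmetrically to $\Phi$ via its independent-set obstruction and via its clique obstruction inherited from $\overline{\Phi}$.
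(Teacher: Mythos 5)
Your proof is correct and follows exactly the same route as the paper's own argument, which appears in the paragraph preceding the observation: translate $I_{c_2}\in\obstr(\overline{\Phi})$ into $K_{c_2}\in\obstr(\Phi)$, then apply Ramsey's theorem to bound $|V(F)|$ for any $F$ satisfying $\Phi$ by $R(c_1,c_2)$. The only small addition you make is spelling out the complementation identity $\obstr(\overline{\Phi})=\{\overline{H}\mid H\in\obstr(\Phi)\}$, which the paper states just before and takes for granted.
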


Recall that a \emph{block} in a graph $H$ is an equivalence class of the vertices with respect to the ``false twin relation''.
\begin{lemma}\label{lem:psi_phi_valid}
Let $\Phi$ be a hereditary graph property that is not \meagre. Suppose that every graph in $\obstr(\Phi)$ has at least one edge. Then there is a graph $H$ with blocks $B_1$ and $B_2$ such that for the \implant $\Psi_\Phi$ of $\Phi$ into $(B_1, B_2)$, it holds that, for each $k>|V(H)|$, we have $\Psi_\Phi(\bB_{k,k})\neq \Psi_\Phi(\bI_{k,k})$.
\end{lemma}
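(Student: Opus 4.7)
The plan is to choose $H \in \obstr(\Phi)$ minimising $|E(H\down)|$, fix any edge $\{b_1, b_2\}$ of $H\down$, and let $B_1, B_2$ be the two blocks of $H$ corresponding to these endpoints. Observation~\ref{obs:hereditarynontriv} guarantees $\obstr(\Phi) \neq \emptyset$, and since every element of $\obstr(\Phi)$ has at least one edge (any such edge persists to an edge of the twin-free quotient, because false twins are non-adjacent), the quotient $H\down$ has an edge to choose. It then remains to show, for every $k > |V(H)|$, the two equalities $\Phi(F_{\bB_{k,k}}) = 0$ and $\Phi(F_{\bI_{k,k}}) = 1$.

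For the first equality, for $k > |V(H)| \ge \max(|B_1|, |B_2|)$ I would observe that picking any $|B_1|$ vertices of $V_1$, any $|B_2|$ vertices of $V_2$, and all of $R$, parts (a)--(c) of Definition~\ref{def:FHBG} (together with the fact that $\bB_{k,k}$ is complete bipartite) show the resulting induced subgraph of $F_{\bB_{k,k}}$ is isomorphic to $H$. Since $H \in \obstr(\Phi)$ and $\Phi$ is hereditary, this yields $\Phi(F_{\bB_{k,k}}) = 0$.

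The main work is the second equality. The heart of the argument is the structural claim
\[
|E(F_{\bI_{k,k}}\down)| \le |E(H\down)| - 1 .
\]
To prove this, I would consider the partition $P$ of $V(F_{\bI_{k,k}}) = V_1 \cup V_2 \cup R$ whose classes are $V_1$, $V_2$, and each block of $H$ that lies inside $R$. A direct neighbourhood check shows every class consists of false twins in $F_{\bI_{k,k}}$: vertices of $V_1$ all have neighbourhood equal to the $R$-neighbours of $B_1$ (there are no edges inside $V_1$, nor to $V_2$, since $\bI_{k,k}$ is edgeless); the case of $V_2$ is symmetric; and within a block $B \subseteq R$ of $H$, all vertices share the same $R$-neighbourhood (inherited from $H$) and, by the block property, agree on whether they are fully connected to $V_1$ and to $V_2$. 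An edge-by-edge inspection of parts (a)--(c) of Definition~\ref{def:FHBG} then shows the quotient graph $F_{\bI_{k,k}}/P$ is isomorphic to $H\down$ with the single edge $\{b_1, b_2\}$ removed, so it has exactly $|E(H\down)| - 1$ edges. Since $F_{\bI_{k,k}}\down$ is a further quotient of $F_{\bI_{k,k}}/P$, and quotienting a simple graph can only decrease the number of edges, the claim follows.

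Combining this claim with Lemma~\ref{lem:X}, every induced subgraph $F'$ of $F_{\bI_{k,k}}$ satisfies $|E(F'\down)| \le |E(F_{\bI_{k,k}}\down)| < |E(H\down)|$, and so, by the minimal choice of $H$, we must have $F'\notin\obstr(\Phi)$. Hence $\Phi(F_{\bI_{k,k}})=1$, and we conclude $\Psi_\Phi(\bB_{k,k}) \ne \Psi_\Phi(\bI_{k,k})$. I expect the only real obstacle to be the explicit identification $F_{\bI_{k,k}}/P \cong H\down - \{b_1,b_2\}$; any additional collapses that may occur in passing from $F_{\bI_{k,k}}/P$ to $F_{\bI_{k,k}}\down$ only help, since they can only further reduce the edge count.
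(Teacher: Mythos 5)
Your proof is correct and follows the same overall plan as the paper's: choose $H\in\obstr(\Phi)$ minimising $|E(H\down)|$, take adjacent blocks $B_1,B_2$, note $F_{\bB_{k,k}}$ contains $H$, and combine Lemma~\ref{lem:X} with the strict inequality $|E(F_{\bI_{k,k}}\down)| < |E(H\down)|$. The only difference is in proving that strict inequality: where the paper does a two-case analysis on whether $V_1$ and $V_2$ merge into a single block, you introduce the intermediate false-twin partition $P$, identify $F_{\bI_{k,k}}/P\cong H\down-\{b_1,b_2\}$, and observe that the true twin-free quotient has at most as many edges --- a cleaner, case-free variant of the same idea.
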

\begin{proof}
Let $H\in \obstr(\Phi)$ be a forbidden induced subgraph of $\Phi$ such that $H\down$ has a minimum number of edges among all graphs in
    $\{H\down \mid H\in \obstr(\Phi)\}$.
Since $H$ has at least one edge but no self-loops, it contains at least two blocks $B_1$ and $B_2$ that are connected by an edge in $H\down$. Consequently, $H[B_1 \cup B_2]$ must be a complete bipartite graph. 
For a bipartite graph $\bG$, let $F_\bG$ be the \implant of $\bG$ into $(B_1, B_2)$. Further, let $\Psi_\Phi$ be the \implant of $\Phi$ into $(B_1, B_2)$.
For this choice we will show that, for each $k>|V(H)|$, we have $\Psi_\Phi(\bB_{k,k})\neq \Psi_\Phi(\bI_{k,k})$.

Let $k>|V(H)|$.
We have $\Psi_\Phi(\bB_{k,k})=0$ since, for $k>|V(H)|$, $F_{\bB_{k,k}}$ contains $H$ as induced subgraph, and thus $\Phi(F_{\bB_{k,k}})=0$.

Next we show that $\Psi_\Phi(\bI_{k,k})=1$, i.e., that $\Phi(F_{\bI_{k,k}})=1$. In other words, we have to show that $F_{\bI_{k,k}}$ does not contain any forbidden induced subgraph from $\obstr(\Phi)$. Suppose $F'$ is an induced subgraph of $F_{\bI_{k,k}}$. By Lemma~\ref{lem:X}, we have $|E(F'\down)|\leq |E(F_{\bI_{k,k}}\down)|$. 
Next, we show that $F_{\bI_{k,k}}\down$ has strictly fewer edges than $H\down$: Observe that, by definition of the construction $\bG \mapsto F_\bG$, the graph $F_{\bI_{k,k}}$ is obtained from $H$ by first deleting all edges between the blocks $B_1$ and $B_2$ and then adding false twins to $B_1$ and $B_2$ until both blocks contain $k$ vertices. We distinguish two cases:
\begin{enumerate}
    \item $B_1\cup B_2$ is a single block of false twins in $F_{\bI_{k,k}}$ --- observe that this can happen if, in $H$, the vertices in $B_1\cup B_2$ have the same neighbours outside of $B_1\cup B_2$, that is, there exists $R\subseteq V(H)\setminus(B_1\cup B_2)$ such that
    \[N_H(B_1)\setminus B_2 = R = N_H(B_2)\setminus B_1 \,.\]
    In that case, the graph $F_{\bI_{k,k}}\down$ is obtained from $H\down$ by contracting the edge $\{B_1,B_2\}$ in $H\down$ and deleting self-loops and multi-edges afterwards. Consequently, $F_{\bI_{k,k}}\down$ has (at least) one edge less than $H\down$.
    \item Otherwise, $B_1$ and $B_2$ remain distinct equivalence classes. Consequently, $F_{\bI_{k,k}}\down$ is obtained by deleting the edge $\{B_1,B_2\}$ in $H\down$. We conclude that, again, $F_{\bI_{k,k}}\down$ has one edge less than $H\down$.
\end{enumerate}
Summarising, we have
\[|E(F'\down)|\leq |E(F_{\bI_{k,k}}\down)| < |E(H\down)|\,. \]
Since $H$ was chosen such that $|E(H\down)|$ is minimal among all graphs in $\obstr(\Phi)$, we conclude that $F'$ is not contained in $\obstr(\Phi)$. Consequently, no induced subgraph of $F_{\bI_{k,k}}$ is contained in $\obstr(\Phi)$, and therefore $\Phi(F_{\bI_{k,k}})=1$.
\end{proof}

We can now show the main result of this work, which also confirms Conjecture~\ref{conj:main} for all hereditary properties.

\hereditary*
\begin{proof}
If $\Phi$ is \meagre, we have that for all but finitely many $k$, $\Phi$ is either trivially true, or trivially false on $k$-vertex graphs. Since $\Phi$ is hereditary, that is, closed under vertex deletion, the latter implies that $\Phi$ is either constant true, or there exists a constant $B$ such that $\Phi$ is false on all graphs with at least $B$ vertices. In this special case, the algorithm in Lemma~\ref{lem:meagre_easy} runs in polynomial-time.

Now assume $\Phi$ is a hereditary graph property that is not \meagre. First suppose that no forbidden induced subgraph of $\Phi$ is an independent set. Then, according to Lemma~\ref{lem:psi_phi_valid}, there is an \implant $\Psi_\Phi$ such that, for each $k > |V(H)|$, we have $\Psi_\Phi(\bB_{k,k})\neq \Psi_\Phi(\bI_{k,k})$. In particular, this holds for all but a finite number of primes $k$. The statement of the theorem then follows from Theorem~\ref{thm:technicalhardmain}.

Now suppose that $\Phi$ has an independent set as forbidden induced subgraph. Since $\Phi$ is not \meagre, by Lemma~\ref{obs:ramsey}, its inverse $\overline{\Phi}$ does not contain an independent set as forbidden induced subgraph. Consequently, $\overline{\Phi}$ is true for all independent sets and, as $\Phi$ has at least one forbidden induced subgraph, $\overline{\Phi}$ does as well. So, by Observation~\ref{obs:hereditarynontriv}, $\overline{\Phi}$ is not \meagre. Recall that $\overline{\Phi}$ is hereditary if and only if $\Phi$ is hereditary. Now using the same argument as before, by Lemma~\ref{lem:psi_phi_valid} and Theorem~\ref{thm:technicalhardmain}, $\#\indsubsprob(\overline{\Phi})$ is $\#\W{1}$-hard and, assuming ETH, cannot be solved in time $f(k)\cdot |G|^{o(k)}$ for any function $f$.
Finally, it is not hard to see that $\#\indsubsprob(\overline{\Phi})\tightred \#\indsubsprob(\Phi)$~\cite[Fact 2.3]{RothSW20} (in fact, the two problems are tightly interreducible).
\end{proof}

\section{Graph Properties Invariant under Homomorphic Equivalence}\label{sec:homequivalent}

A graph property $\Phi$ is \emph{closed under homomorphisms} if, for each pair of graphs $H_1$ and $H_2$ with a homomorphism from $H_1$ to $H_2$, $\Phi(H_1)=1$ implies $\Phi(H_2)=1$. As an example, being ``non-bipartite'' is closed under homomorphisms. Conjecture~\ref{conj:main} is known to hold for all properties that are closed under homomorphisms as a consequence of~\cite[Theorem 2]{DorflerRSW22}. Their result can be applied since properties that are closed under homomorphisms are also supergraph-closed, i.e., if $H_1$ is a subgraph of $H_2$ then $\Phi(H_1)=1$ implies $\Phi(H_2)=1$. (Using the terminology from~\cite{DorflerRSW22}, this means that $\neg\Phi$ is monotone.)

In this section, we consider a more general class of properties. Two graphs $H_1$ and $H_2$ are \emph{homomorphically equivalent} if there is a homomorphism from $H_1$ to $H_2$ and vice versa.
Recall that a graph property $\Phi$ is \emph{invariant under homomorphic equivalence} if, for each pair of homomorphically equivalent graphs $H_1$ and $H_2$, we have $\Phi(H_1)=\Phi(H_2)$.
Slightly counterintuitively, this leads to a more general class of properties:
Note that if $\Phi$ is closed under homomorphisms then it is also invariant under homomorphic equivalence, but not necessarily the other way around. Thus, confirming Conjecture~\ref{conj:main} for the class of properties that are invariant under homomorphic equivalence is a strictly more general result than a confirmation for all properties that are closed under homomorphisms.

Examples for properties that are invariant under homomorphic equivalence (but not closed under homomorphisms) are ``odd girth equal to $c$'', ``chromatic number equal to $c$'', ``clique number equal to $c$'' etc. Such properties are not subgraph-closed/supergraph-closed, hereditary or edge-monotone. 
In Theorem~\ref{thm:homequivalence}, we confirm Conjecture~\ref{conj:main} for all properties that are invariant under homomorphic equivalence. This result is actually a corollary of Theorem~\ref{thm:twinequivalence}, which covers a less natural but even more general class of properties.

Two graphs $H_1$ and $H_2$ are \emph{twin-equivalent} if the corresponding twin-free quotients $H_1\down$ and $H_2\down$ are isomorphic.
A graph property $\Phi$ is \emph{twin-invariant} if, for each pair of twin-equivalent graphs $H_1$ and $H_2$ with at least two vertices, $\Phi(H_1)=\Phi(H_2)$.

\begin{lemma}\label{lem:psi_phi_twin}
Let $\Phi$ be a twin-invariant graph property that is not \meagre. Then there is a graph $H$ with an edge $\{u,v\}$ such that for the \implant $\Psi_\Phi$ of $\Phi$ into $(\{u\}, \{v\})$, it holds that, for every integer $k\ge 1$, we have $\Psi_\Phi(\bB_{k,k})\neq \Psi_\Phi(\bI_{k,k})$.
\end{lemma}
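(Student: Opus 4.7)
The plan is to exhibit a graph $H$ with an edge $e = \{u,v\}$ for which $\Phi(H) \neq \Phi(H - e)$, then instantiate the implant with $B_1 = \{u\}$, $B_2 = \{v\}$ and lift this discrepancy via twin-invariance to $\Psi_\Phi(\bB_{k,k}) \neq \Psi_\Phi(\bI_{k,k})$ for every $k \geq 1$.

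To locate such $H$ and $e$: because $\Phi$ is not \meagre, some size $n \geq 2$ admits two $n$-vertex graphs with different $\Phi$-values. Placing them on a common vertex set and connecting them by a path of single-edge modifications (e.g., first remove every edge of the one, then add every edge of the other), some consecutive pair on this path has different $\Phi$-values; after possibly exchanging the two graphs of this pair, one obtains $H$ and $e \in E(H)$ with $\Phi(H) \neq \Phi(H - e)$. Let $R = V(H) \setminus \{u,v\}$ and $\Psi_\Phi(\bG) = \Phi(F_\bG)$. The heart of the proof is then the pair of isomorphisms
\[
F_{\bB_{k,k}}\down \;\cong\; H\down \qquad\text{and}\qquad F_{\bI_{k,k}}\down \;\cong\; (H-e)\down
\]
for every $k \geq 1$. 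Given these, twin-invariance applied to the pairs $(F_{\bB_{k,k}}, H)$ and $(F_{\bI_{k,k}}, H - e)$, all of which have at least two vertices, yields $\Phi(F_{\bB_{k,k}}) = \Phi(H)$ and $\Phi(F_{\bI_{k,k}}) = \Phi(H - e)$, which differ by construction.

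The isomorphisms are verified by identifying which pairs of vertices become false twins in the implant. In $F_{\bB_{k,k}}$, the vertices of $V_1$ all share the neighborhood $V_2 \cup (N_H(u) \cap R)$ and are pairwise non-adjacent, so they form a single block that absorbs exactly the other false twins of $u$ in $H$; analogously for $V_2$; no further merging occurs, and the resulting quotient is isomorphic to $H\down$. In $F_{\bI_{k,k}}$, $V_1$ instead absorbs the ``almost-twins of $u$ missing $v$'' in $H$, that is, those $w \in R$ with $w \notin N_H(u) \cup N_H(v)$ and $N_H(w) = N_H(u) \setminus \{v\}$, and these are exactly the vertices that become new false twins of $u$ once the edge $e$ is removed; similarly for $V_2$. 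Finally, $V_1$ and $V_2$ collapse into a single block of $F_{\bI_{k,k}}$ iff $N_H(u) \cap R = N_H(v) \cap R$, which is precisely the condition under which $u$ and $v$ themselves become false twins in $H - e$. Consequently the block structure of $F_{\bI_{k,k}}$ is a faithful transcription of that of $H - e$.

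The main obstacle is reconciling the three potentially interacting block-structure changes in $F_{\bI_{k,k}}$ --- almost-twin absorption into $V_1$ and $V_2$, the possible merging of $V_1$ with $V_2$, and the splitting of the $u$- and $v$-blocks of $H$ after deletion of $e$ --- with the corresponding quotient transition from $H\down$ to $(H-e)\down$, and verifying that no unintended identifications among $R$-vertices arise along the way.
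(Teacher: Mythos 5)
Your proposal is correct and takes essentially the same approach as the paper: locate an edge-critical graph $H$ with $\Phi(H)\neq\Phi(H-e)$ via a single-edge-modification path between two $n$-vertex graphs witnessing non-meagreness, then observe that $F_{\bB_{k,k}}$ and $F_{\bI_{k,k}}$ are twin-equivalent to $H$ and $H-e$ respectively. The paper simply asserts these two twin-equivalences without proof; your block-by-block verification is a correct elaboration of exactly that step, and the "main obstacle" you flag at the end is in fact resolved by the analysis in your preceding paragraph.
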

\begin{proof}
Since $\Phi$ is not \meagre, for some $n\ge 2$, there are $n$-vertex graphs $H_1$ and $H_2$ for which $\Phi(H_1)\neq \Phi(H_2)$. Consequently, there is some  $n$-vertex graph $H$ with an edge $e=\{u,v\}$ such that $\Phi(H)\neq \Phi(H-e)$, where $H-e$ denotes the graph with vertices $V(H)$ and edges $E(H)\setminus \{e\}$. Let $\Psi_\Phi$ be the \implant of $\Phi$ into $(\{u\}, \{v\})$. Accordingly, for a bipartite graph $\bG$, let $F_{\bG}$ be the \implant of $\bG$ into $(\{u\}, \{v\})$.

Note that, for each $k\ge 1$, $F_{\bB_{k,k}}$ and $H$ are twin-equivalent and have at least $2$ vertices. Similarly, $F_{\bI_{k,k}}$  and $H-e$ are twin-equivalent and have at least two vertices. Thus, since $\Phi$ is twin-invariant, 
\[
    \Psi_\Phi(\bB_{k,k})= \Phi(F_{\bB_{k,k}}) = \Phi(H) \neq \Phi(H-e) = \Phi(F_{\bI_{k,k}}) = \Psi_\Phi(\bI_{k,k}).
\]
\end{proof}

The following theorem is an immediate consequence of Lemma~\ref{lem:psi_phi_twin} and Theorem~\ref{thm:technicalhardmain}.

\twinequivalence*
First assume that $\Phi$ is \meagre. Since $\Phi$ is twin-invariant, it is, in particular, closed under the addition of false twins. It is easy to see that the latter implies $\Phi$ being either constant false, or that there exists a constant $B$ such that $\Phi$ is true for all graphs with at least $B$ vertices. In this special case, the algorithm of Lemma~\ref{lem:meagre_easy} runs in polynomial time.

Now assume that $\Phi$ is not \meagre.
Note that the twin-free quotient of a connected graph is also connected. If $H$ is disconnected and has at least one edge then its twin-free quotient $H\down$ is also disconnected. Therefore, a connected graph $H_1$ can only be twin-equivalent to a disconnected graph if $H_1$ is the graph with a single vertex and no edge. Therefore, if $H_1$ and $H_2$ are twin-equivalent graphs with at least two vertices, $H_1$ is disconnected if and only if $H_2$ is disconnected. This shows that the property of being ``disconnected'' is twin-invariant, and Theorem~\ref{thm:twinequivalence} can be applied in this case.  
Theorem~\ref{thm:twinequivalence} also covers similar properties, for which the complexity of $\#\indsubsprob$ was previously unclassified, such as ``disconnected or bipartite'' or ``disconnected or triangle-free''.\footnote{By allowing all bipartite graphs,~\cite[Theorem 1]{DorflerRSW22} cannot be applied.}

More importantly, as $H$ and $H\down$ are homomorphically equivalent, every graph property that is invariant under homomorphic equivalence is also twin-invariant. Hence, the following statement is a special case of Theorem~\ref{thm:twinequivalence}.

\homclosed*

\bibliographystyle{plainurl}
\bibliography{IndSub}

\end{document}